\DeclareMathAlphabet{\mathitbf}{OML}{cmm}{b}{it}
\newtheorem{fact}[theorem]{Fact}
\newcommand{\qedhere}{}
\renewcommand{\comment}[1]{}
\newenvironment{proofsketch}{\noindent{\sc Proof Sketch:}}{\qed}
\newenvironment{proofof}[1]{\vspace{0.1in}\noindent{\sc Proof of #1.}}{\hfill\qed}
\newcommand{\prob}[2][]{\text{\bf Pr}\ifthenelse{\not\equal{}{#1}}{_{#1}}{}\!\left[#2\right]}
\newcommand{\expect}[2][]{\text{\bf E}\ifthenelse{\not\equal{}{#1}}{_{#1}}{}\!\left[#2\right]}
\newcommand{\yestag}{\addtocounter{equation}{1}\tag{\theequation}}
\newcommand{\agind}[1][i]{_{#1}}
\newcommand{\inversed}[1]{#1^{-1}}
\newcommand{\ironed}{\bar}
\newcommand{\differentiated}[1]{#1'}
\newcommand{\fortype}{\tilde}
\newcommand{\estimated}{\hat}
\newcommand{\sampled}{\hat}
\newcommand{\noaccents}[1]{#1}
\newcommand{\composed}[3]{#1{#2{#3}}}
\newcommand{\newindexedvar}[4][\noaccents]{%
\expandafter\newcommand\expandafter{\csname #2\endcsname}{#1{#4}}%
\expandafter\newcommand\expandafter{\csname #2s\endcsname}{#1{\boldsymbol{#4}}}%
\expandafter\newcommand\expandafter{\csname #2sm#3\endcsname}[1][#3]{#1{\boldsymbol{#4}}_{-##1}}%
\expandafter\newcommand\expandafter{\csname #2#3\endcsname}[1][#3]{#1{#4}\agind[##1]}%
\expandafter\newcommand\expandafter{\csname #2#3th\endcsname}[1][#3]{#1{#4}_{(##1)}}%
}
\newcommand{\rev}{R}
\newcommand{\marg}{R'}
\newcommand{\knalloc}[1][k]{x^{(#1:n)}}
\newcommand{\kalloc}[1][k]{x_{#1}}
\newcommand{\quant}{q}
\newcommand{\expval}{\mathcal{V}} 
\newcommand{\Zbar}{\bar{Z}} 
\newcommand{\sw}[1][\wals]{S_{#1}} 
\composed{\differentiated}{\ironed}]{margiwal}{k}{\wal}
\composed{\differentiated}{\ironed}]{margiyal}{k}{\yal}
\composed{\differentiated}{\ironed}]{imumarg}{k}{\murev}
\newcommand{\REV}[1]{P_{#1}}
\newcommand{\samples}{N}
\newcommand{\eps}{\epsilon}
\newcommand{\T}{\mathbf{T}}
\newcommand{\Tij}{\mathbf{T}_{i,j}}
\newcommand{\bwhat}{\widehat{b}}
\newcommand{\btild}{\tilde{b}}
\newcommand{\Phat}{\hat{P}}
\newcommand{\Err}[1]{\expect[\hat{\bid}]{|\hat{#1}-#1|}}
\begin{document}

\title{A/B Testing of Auctions}
\author{SHUCHI CHAWLA
\affil{University of Wisconsin - Madison}
JASON HARTLINE
\affil{Northwestern University}
DENIS NEKIPELOV
\affil{University of Virginia}
}

\begin{abstract}
  For many application areas A/B testing, which partitions users of a
  system into an A (control) and B (treatment) group to experiment
  between several application designs, enables Internet companies to
  optimize their services to the behavioral patterns of their users.
  Unfortunately, the A/B testing framework cannot be applied in a
  straightforward manner to applications like auctions where the users
  (a.k.a., bidders) submit bids before the partitioning into the A and
  B groups is made.  This paper combines auction theoretic modeling
  with the A/B testing framework to develop methodology for A/B
  testing auctions.  The accuracy of our method 
  is directly comparable to ideal A/B
  testing where there is no interference between A and B.  Our results
  are based on an extension and improved analysis of the inference
  method of \citet{CHN-14}.
\end{abstract}



\maketitle


\section{Introduction}
\label{s:intro}

A common method in the practice of large scale auction design, e.g.,
in auctions placing advertisements on online media and Internet search
engines, is A/B testing.  In A/B testing, the auction house is running
an incumbent mechanism A, and would like to determine if a novel
mechanism B obtains higher revenue.  This is done by splitting the
traffic so that most of it goes to A and some of it, e.g., five to ten
percent, goes to B.\footnote{For very thin markets, the ratio of the
  split can even be 50-50.}  An issue with this approach is that if
the bidders are unaware of which mechanism their bid will be
considered in, the bid equilibrium is neither for A nor B but for a
mechanism C that is a convex combination of A and B.

A miscalculation sometimes performed in practice is to consider and
compare average revenue from A (resp.\@ B) from the times when A
(resp.\@ B) is run.  This miscalculation is equivalent to simulating A
on the bids in C and can often give the opposite conclusion; e.g., if
A and B are one- and two-unit highest-bids-win pay-your-bid auctions,
respectively, then B will always appear in this miscalculation to have
higher revenue.  For a fixed set of bids, a pay-your-bid mechanism's
revenue is monotone in its allocation probabilities.  Of course, in
equilibrium, increased allocation probabilities can cause reduced
revenue as bidders may lower their bids.

We present an A/B testing method that applies generally to the position auction
model popularized by the \citet{var-06} and \citet{EOS-07} analyses of
auctions for sponsored search and now a fundamental model for the
study of auction theory; e.g., see \citet{har-13}.  To revisit the
example auction scenario above, notice that the mechanism C, namely
the 50--50 convex combination of one- and two-unit auctions, is
mathematically equivalent to a position auction with weights 1 and 0.5
for the top two slots respectively, and zero for all remaining slots.
Mechanisms A and B are also position auctions.  More generally, a
position auction is defined by a decreasing sequence of weights,
bidders are assigned to positions in decreasing order of bids, and
payments are charged.  Typical payment rules are ``generalized first
price'' and ``generalized second price''; the former requires bidders
to pay their weighted bid, whereas the latter requires bidders to pay the
weighted bid of the next highest bidder. (Unfortunately, for reasons we
describe subsequently, our methods do not apply to the generalized
second price position auction.)  Position auctions with exogenously
given weights are the standard model of auction for selling
advertisements on Internet search engines.  In this application, the
weights correspond to the probability that the ad in each position on
the search results page will be clicked on (and payments are scaled by
weights because bidders only pay if they are clicked on).

\citet{CHN-14} developed an estimator for the revenue of one position
auction (A or B) from the bids of another (C).  This estimator is a
simple weighted order statistic of the distribution of equilibrium bids in C.  To
estimate the revenue of B (analogously for A):
\begin{enumerate}
\item for any number $N$, the evaluation of a formula based on the
  definition of B and C gives $N$ weights, and,
\item given a sorted list of $N$ i.i.d.\@ samples from the bid
  distribution for C, the estimated revenue of B is the weighted average of
  these bids.
\end{enumerate}
This method of estimation compares favorably to {\em ideal A/B
  testing}, where the bids in A are in equilibrium for A and
revenue can be estimated by simple averaging, respectively for B.

Our model of A/B testing is especially applicable to the following
generalization of the classical problem with fixed position weights.
Consider the problem Google and Bing face in selecting the layout of
the sponsored results on the results page of a search query.  The main
variable in the layout is the number of ads to display in the
mainline, i.e., above the organic search results, versus the sidebar,
i.e., to the right of the organic search results.  Typical layouts
include between zero and three advertisements in the mainline, and up
to six on the sidebar.  The weights, which correspond to click
probabilities, depend on which of these four layout choices is
selected. The optimal number of ads to show on the mainline is
distinct for each search query and our methods allow these layouts to
be optimized individually.  Notice that the selection of the number of
mainline ads to show generalizes the opening example of selecting
whether to sell one or two units of an item.

The present paper refines the analysis of the estimator of
\citet{CHN-14}.  The previous paper bounded the error from estimating
the $k$-unit auction revenue from the bids in any position auction
(for any $k$).  As the revenue of any position auction is given by a
convex combination of multi-unit auction revenues, the error from
estimating the revenue of any position auction is bounded by the worst
of these error bounds.  The present paper significantly improves the
error bound of the estimator for the multi-unit auction revenues and
adapts the more sophisticated bounding method to directly bound the
error of an estimator for one position auction from the bids in
another position auction in terms of how close the former is to the
latter.  In particular, mechanism B is close to A/B test mechanism C.

In an A/B test where mechanism C puts $\epsilon$ probability on B (and
$1-\epsilon$ probability on A), our error bounds have better
dependence on $\epsilon$ than ideal A/B testing.  In ideal A/B testing
with a total of $N$ bid samples, there are $\epsilon N$ samples from
B.  Thus, the standard statistical error in estimating the revenue of B is
$O(1/\sqrt{\epsilon N})$ which is an $1/\sqrt{\epsilon}$ dependence on
$\epsilon$.  Our error bounds depend on $\epsilon$ as $\log
1/\epsilon$, a significant improvement.  In particular, our methods
give good bounds even when $\epsilon < 1/N$ and the ideal A/B test is
unlikely to have ever run the B mechanism.  The reason for this
improvement is that, while the ideal A/B test mechanism only sees
$\epsilon N$ bids that depend on B, all bids in the our A/B test
mechanism C depend a little bid on the presence of B.

In terms of the number $N$ of samples from the bid distribution our
estimation methods achieve the optimal rate of $\sqrt N$, i.e., the
mean absolute error is $O(1/\sqrt N)$.  When the A/B test is employed
not to estimate revenues but to determine whether to replace incumbent
auction A with novel auction B, the misclassification error is
exponentially small in $N$.  For our application to advertising on
Internet search engines, deciding the number of ads to show on the
mainline (between zero and four), again the misclassification
probability is exponentially small in $N$.  The precise theorems given
in the paper give the dependence on the number of bidders in the
auction and the fraction of traffic sent to each of the A and B
mechanisms.

\paragraph{Summary of Results}


Our main results apply to first-price and
all-pay position auctions and are:

\begin{enumerate}

\item We improve the analysis of \citet{CHN-14} for the error in
  estimating the revenue of a multi-unit auction from the bids in a
  position auction.  (Section~\ref{s:param-inf} and Appendix~\ref{s:fp-inf}.)

\item We give a direct analysis of the estimation of the revenue of
  B from the bids of C which depends on the measure $\epsilon$ on
  B in the support of C.  The improvement of (1) and this direct
  analysis enables the estimation of the revenue of B from C $=$
  $(1-\epsilon)$A $+$ $\epsilon$B with logarithmic dependence on
  $1/\epsilon$ which allows good error bounds even when $\epsilon$ is
  exponentially small. (Section~\ref{s:ab-revenue}.)

\item 
The logarithmic dependence on $1/\epsilon$ allows for a simple
``universal B test'', a B mechanism that, when mixed with any A
mechanism, allows the revenue of any other position auction to be estimated.
This universal B is a uniform mixture of the 1-unit auction and the
$(n-1)$-unit auction.  (The approach of \citet{CHN-14} was to mix
uniformly over all multi-unit auctions.)

\item If we do not wish to estimate revenues but only determine which
  of A or B will achieve higher revenue, we give an estimator that is
  accurate with probability approaching 1 at an exponential rate in
  $N$, the number of observed bids from the A/B test mechanism C.  In
  comparison, estimation of the auction revenues has quadratic
  rate. (Section~\ref{s:direct-comparison}.)

\item We complement our theoretical analysis with simulation bounds
  that show that our methods are indeed very practical. (Section~\ref{s:simulations}.)

\item We generalize the analysis to allow for the estimation of
  welfare. (Appendix~\ref{s:welfare}.)

\end{enumerate}


\paragraph{Comparison to Global A/B Tests}

Our A/B testing framework is motivated specifically by the goal of
optimizing an auction to local characteristics of the market in which
the auction is run.  It is important to distinguish this goal from
that of another framework for A/B testing which is commonly used to
evaluate global properties of auctions across a collection of disjoint
markets.  This framework randomly partitions the individual markets
into a control group (where auction A is run) and a treatment group
(with auction B).  From such an A/B test we can evaluate whether it is
better for all markets to run A or for all to run B.  It cannot be
used, however, for our motivating application of determining the
number of mainline ads to show, where the optimal number naturally
varies across distinct markets.  The work of \citet{OS-11} on reserve
pricing in Yahoo!'s ad auction demonstrates how such a global A/B test
can be valuable.  They first used a parametric estimator for the value
distribution in each market to determine a good reserve price for that
market.  Then they did a global A/B test to determine whether the
auction with their calculated reserve prices (the B mechanism) has
higher revenue on average than the auction with the original reserve
prices (the A mechanism).  Our methods relate to and can replace the
first step of their analysis.

\paragraph{Revenue versus Click Estimation}

The primary focus of this paper and our discussion thus far is on A/B
testing of auctions when the details of the auction are known.  For
our motivating application of position auctions when we vary the
number of ads shown on the mainline (a.k.a., the configuration), we
assumed that the click-through rates (a.k.a., position weights) of the
positions for each configuration are known.  These weights are
generally not known, and can also be estimated.  Estimation of these
weights is much easier that estimating revenue and ideal A/B testing
gives good estimates.

The distinction between the revenue estimation and click-through rate
estimation problems is the following.  Revenue is determined from bids
that are submitted in advance of the partitioning choice of whether to
run mechanism A or B; consequently, the bids are not in equilibrium for
either A or B but for their convex combination C.  Click-through rates
arise from the behavior of the viewer of the search results page and
this viewer's behavior is affected by the layout directly.  Thus,
while revenue must be estimated via inference methods like ours, the
click-through rates can be estimated by ideal A/B testing.

In the {\em separable model}, where the click-through rates are
exogenous to the layout and, importantly, do not depend on which ads
are shown,\footnote{Though inaccurate, this is one of the most
  standard models under which the search advertising auction is
  studied.} they can be estimated at the same time as we A/B test the
auction formats for revenue.  Our estimators are linear in the
click-through rates; therefore, unbiased estimators for click-through
rates can be used in place of exact click-through rates and the
resulting revenue estimates will be unbiased.  Furthermore, as the
separable model posits that click rates are independent from bids,
estimation errors in click-through rates can be easily factored in to
the error of the revenue estimates.

\paragraph{The Generalized Second Price Auction}

Our analysis does not extend to the generalized second price auction,
i.e., to position auctions where each bidder pays the weighted bid of
the next highest bidder.  The first two issues that arise for the
generalized second price auction are: (1) efficient symmetric
equilibria may not exist \citep{GS-09}; (2) even when a symmetric
equilibrium does exist, it may not be unique \citep{CH-13}.  For these
reasons there is not a one-to-one correspondence between position
weights and the allocation rule of the mechanisms.  

Even restricted to environments where a unique efficient symmetric
Bayes-Nash equilibrium exists, Bayes-Nash inference is challenging
because bidder payments are not a simple function of their bids and
inverting the bid function is problematic. Specifically, for the
generalized first price auction, knowing the bid distribution exactly
allows one to obtain the value distribution via a formula that for any
quantile maps the bid and bid density at that quantile to the value at
that quantile (see equation~\eqref{eq:fp-inf}). On the other hand, for the
generalized second price auction, estimating the value at some
quantile requires knowing the bid and the bid density not only at that
same quantile, but at all quantiles. Importantly, the function mapping
bids to values is not linear in the bids. This makes inference
challenging, and can potentially lead to large errors. Extending our
methodology to this setting is an important open problem.

\paragraph{Related Work}

Our work is motivated, in part, by field work in the past decade that
considers the empirical optimization of reserve prices in auctions
(e.g., \citealp{Ril-06}; \citealp{BM-09}; and \citealp{OS-11}). The
most notable of these is the work of \citet{OS-11}. \citet{OS-11}
adapt their mechanism over time to respond to empirical data by
determining the optimal reserve price for the empirically observed
distribution, and then setting a reserve price that is slightly
smaller. This allows for inference around the optimal reserve price
and ensures that the mechanism quickly adapts to changes in the
distribution.

Prior to our previous work \cite{CHN-14}, to our knowledge, the
problem of inferring parameters of one mechanism using observations
drawn from another has not been considered from a theoretical
perspective. Other lines of work consider optimizing mechanisms based
on samples from the value distribution (see \citealp{CR-14}, and
\citealp{FHHK-14}); on the fly in one mechanism (see
\citealp{GHKSW-06}, \citealp{seg-03}, and \citealp{BV-03}); or by
repeatedly running auctions on agents drawn from the same population,
using multi-armed bandit based approaches (see \citealp{KL-03},
\citealp{BH-05}, and \citealp{CGM-15})). These works exclusively
consider mechanisms that have truthtelling equilibria and for which,
consequently, inference is trivial.


Within econometrics literature, several recent works have developed
techniques for estimating the distribution of values from the
distribution of bids of a given auction (see, e.g., \citet{guerre},
\citet{paarsch}, and \citet{athey:2007}). However, estimating
parameters of one mechanism from the bids of another has not been
considered. While these inference approaches form the basis over which
we develop our methodology, our work shows that for certain quantities
of interest such as the multi-unit auction revenues or the ranking of
revenues of different mechanisms, we can bypass learning the
underlying value distribution altogether, leading to better
convergence bounds.

\section{Preliminaries}\label{s:prelim}

%
%
%
%
%


A standard auction design problem is defined by a set $[n] =
\{1,\ldots,n\}$ of $n\ge 2$ agents, each with a private value $\vali$
for receiving a service.  The values are bounded: $\vali\in[0,1]$;
They are independently and identically distributed according to a
continuous distribution $\dist$.   An
auction elicits bids $\bids = (\bidi[1],\ldots,\bidi[n])$ from the
agents and maps the vector $\bids$ of bids to an allocation
$\tallocs(\bids) = (\talloci[1](\bids),\ldots,\talloci[n](\bids))$,
specifying the probability with which each agent is served, and prices
$\tprices(\bids) = (\talloci[1](\bids),\ldots,\talloci[n](\bids))$,
specifying the expected amount that each agent is required to pay.  An
auction is denoted by $(\tallocs,\tprices)$.

\paragraph{Standard payment formats}

In this paper we study two standard payment formats. In a {\em
  first-price} format, each agent pays his bid upon winning, that is,
$\tpricei(\bids) = \bidi \, \talloci(\bids)$. In an {\em all-pay} format,
each agent pays his bid regardless of whether or not he wins, that is,
$\tpricei(\bids) = \bidi$.

\paragraph{Bayes-Nash equilibrium}
The value distribution $\dist$ is common
knowledge to the agents.  A strategy $\strati$ for agent $i$ is a
function that maps the
value of the agent to a bid.  The distribution $\dist$ and a
profile of strategies $\strats = (\strati[1],\cdots,\strati[n])$
induces interim allocation and payment rules (as a function of bids)
as follows for agent $i$ with bid $\bidi$.
\begin{align*}
\talloci(\bidi) = \expect[\valsmi \sim
\dist]{\talloci(\bidi,\stratsmi(\valsmi))} & \text{ and }\;
\tpricei(\bidi) = \expect[\valsmi \sim \dist]{\tpricei(\bidi,\stratsmi(\valsmi))}.
\intertext{Agents have linear utility which can be expressed in the interm as:}
\tutili(\vali,\bidi) &= \vali\talloci(\bidi) - \tpricei(\bidi).
\end{align*}
The strategy profile $\strats$ forms a {\em Bayes-Nash equilibrium} (BNE) if for
all agents $i$, values $\vali$, and alternative bids $\bidi$, bidding
$\strati(\vali)$ according to the strategy profile is at least as good
as bidding $\bidi$.  I.e.,
\begin{align}
\label{eq:br}
\tutili(\vali,\strati(\vali)) &\geq \tutili(\vali,\bidi).
\end{align}

A symmetric equilibrium is one where all agents bid by the same
strategy, i.e., $\strats$ statisfies $\strati = \strat$ for some
$\strat$.  For a symmetric equilibrium, the interim allocation and
payment rules are also symmetric, i.e., $\talloci = \talloc$ and
$\tpricei = \tprice$ for all $i$.  For implicit distribution $\dist$ and
symmetric equilibrium given by strategy $\strat$, a mechanism can be
described by the pair $(\talloc,\tprice)$.  

The strategy profile allows the mechanism's outcome rules to be
expressed in terms of the agents' values instead of their bids; the
distribution of values allows them to be expressed in terms of the
agents' values relative to the distribution.  This latter
representation exposes the geometry of the mechanism.  Define the {\em
  quantile} $\quant$ of an agent with value $\val$ to be the
probability that $\val$ is larger than a random draw from the
distribution $\dist$, i.e., $\quant=\dist(\val)$.  Denote the agent's
value as a function of quantile as $\val(\quant) =
\dist^{-1}(\quant)$, and his bid as a function of quantile as
$\bid(\quant) = \strat(\val(\quant))$.  The outcome rule of the
mechanism in quantile space is the pair
$(\alloc(\quant),\price(\quant)) =
(\talloc(\bid(\quant)),\tprice(\bid(\quant)))$.

\paragraph{Revenue curves and auction revenue}

\citet{mye-81} characterized Bayes-Nash equilibria and this
characteriation enables writing the revenue of a mechanism as a
weighted sum of revenues of single-agent posted pricings.  Formally,
the {\em revenue curve} $\rev(\quant)$ for a given value distribution
specifies the revenue of the single-agent mechanism that serves an
agent with value drawn from that distribution if and only if the
agent's quantile exceeds $\quant$: $\rev(\quant) =
\val(\quant)\,(1-\quant)$. $\rev(0)$ and $\rev(1)$ are defined as
$0$. Myerson's characterization of BNE then implies that the expected
revenue of a mechanism at BNE from an agent facing an allocation rule
$\alloc(\quant)$ can be written as follows:
\begin{eqnarray}
  \label{eq:bne-rev}
  \REV{\alloc} = \expect[\quant]{\rev(\quant)\alloc'(\quant)} = -\expect[\quant]{\rev'(\quant)\alloc(\quant)} 
\end{eqnarray}
where $\alloc'$ and  $\rev'$ denote the derivative of $\alloc$ and $\rev$ with respect to $\quant$, respectively.

The expected revenue of an auction is the sum over the agents of its
per-agent expected revenue; for auctions with a symmetric equilibrium
allocation rule $\alloc$ this revenue is $n \, \REV{\alloc}$.

\paragraph{Position environments}
\label{sec:rank-based-basics}

A {\em position environment} expresses the feasibility constraint of
the auction designer in terms of {\em position weights} $\wals$
satisfying $1 \ge \walk[1]\ge \walk[2] \ge \cdots \ge \walk[n] \geq
0$.  A {\em position auction} assigns agents (potentially randomly) to
positions $1$ through $n$, and an agent assigned to position $i$ gets
allocated with probability $\walk[i]$.  The {\em rank-by-bid position
  auction} orders the agents by their bids, with
ties broken randomly, and assigns agent $i$, with the $i$th largest
bid, to position $i$, with allocation probability $\walk[i]$.  {\em
  Multi-unit environments} are a special case and are defined for $k$
units as $\walk[j] = 1$ for $j \in \{1,\ldots,k\}$ and $\walk[j] = 0$
for $j \in \{k+1,\ldots,n\}$.  The {\em highest-$k$-bids-win}
multi-unit auction is the special case of the rank-by-bid position
auction for the $k$-unit environment.

Rank-by-bid position auctions can be viewed as convex combinations of
highest-bids-win multi-unit auctions.  The {\em marginal weights} of a
position environment are $\margwals =
(\margwalk[1],\ldots,\margwalk[n])$ with $\margwalk = \walk -
\walk[k+1]$.  Define $\margwalk[0] = 1-\walk[1]$ and note that the
marginal weights $\margwals$ can be interpreted as a probability
distribution over $\{0,\ldots,n\}$.  The rank-by-bid position auction
with weights $\wals$ has the exact same allocation rule as the
mechanism that draws a number of units $k$ from the distribution given
by $\margwals$ and runs the highest-$k$-bids-win auction.

In our model with agent values drawn i.i.d.\@ from a continuous
distribution, rank-by-bid position auctions with either all-pay or
first-price payment semantics have a unique Bayes-Nash equilibrium and
this equilibrium is symmetric and efficient, i.e., in equilibrium, the
agents' bids and values are in the same order \citep{CH-13}.  Denote
the highest-$k$-bids-win allocation rule as $\knalloc$ and its revenue
as $\murevk = \REV{\knalloc} =
\expect[\quant]{-\marg(\quant)\,\knalloc(\quant)}$.  This allocation
rule is precisely the probability an agent with quantile $\quant$ has
one of the highest $k$ quantiles of $n$ agents, or at most $k-1$ of
the $n-1$ remaining agents have quantiles greater than $\quant$.
Formulaically,
\begin{align*}
\knalloc(\quant) &= \sum_{i=0}^{k-1} \tbinom{n-1}{i}
\quant^{n-1-i}(1-\quant)^{i}.  
\\\intertext{Importantly, the allocation rule of a rank-by-bid
  position auction does not depend on the distribution at all.  The
  allocation rule $\alloc$ of the rank-by-bid position auction with
  weights $\wals$ is:}
\alloc(\quant) &= \sum\nolimits_k \margwalk\, \knalloc(\quant).
\\\intertext{By revenue equivalence \citep{mye-81}, the per-agent
  revenue of the rank-by-bid position auction with weights $\wals$
  is:}
\REV{\alloc} &= \sum\nolimits_k \margwalk\,\murevk.
\end{align*}
Of course, $\murevk[0] = \murevk[n] = 0$ as always serving or never
serving the agents gives zero revenue.

\paragraph{Inference}

The distribution of values, which is unobserved, can be inferred from
the distribution of bids, which is observed.  This derivation is
summarized by Lemma~\ref{l:inference}, below.  Interested readers can
find a detailed derivation in \citet{CHN-14} or
Appendix~\ref{app:inference}.

\begin{lemma}
\label{l:inference}
For an auction with first-price payment format, the value function
$\val(\cdot)$ satisfies,
\begin{align}
  \label{eq:fp-inf}
  \val(\quant) &= \bid(\quant) + \tfrac{\alloc(\quant)\,\bid'(\quant)}{\alloc'(\quant)}.\\
\intertext{For an auction with all-pay payment format, it satisfies,}
\label{eq:ap-inf}
  \val(\quant) & = \tfrac{\bid'(\quant)}{\alloc'(\quant)}.
\end{align}
\end{lemma}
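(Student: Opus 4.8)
The plan is to derive both identities directly from the first-order condition of the (unique, symmetric, efficient) Bayes--Nash equilibrium, working in quantile space. Since the equilibrium bid function $\bid(\cdot)$ is increasing \citep{CH-13}, any deviating bid in the range of equilibrium bids can be written as $\bid(r)$ for a suitable ``pretend quantile'' $r$, and bidding $\bid(r)$ yields interim allocation $\alloc(r)$ and interim payment $\price(r)$. Hence the interim utility of an agent with true quantile $\quant$ who bids $\bid(r)$ is $\val(\quant)\,\alloc(r) - \price(r)$, and the best-response condition~\eqref{eq:br} is equivalent to $\quant \in \argmax_r \bigl\{\val(\quant)\,\alloc(r) - \price(r)\bigr\}$ for every $\quant$ (bids outside the range of $\bid(\cdot)$ are never optimal deviations, so they can be ignored).

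Assuming the interior regularity that holds for non-degenerate position auctions --- namely $\alloc$, $\bid$, and $\price$ differentiable and $\alloc'(\quant) > 0$ on $(0,1)$ --- the first-order condition $\tfrac{\partial}{\partial r}\bigl(\val(\quant)\,\alloc(r) - \price(r)\bigr)\big|_{r=\quant} = 0$ gives $\val(\quant)\,\alloc'(\quant) = \price'(\quant)$, i.e.\ $\val(\quant) = \price'(\quant)/\alloc'(\quant)$. For the all-pay format $\price(\quant) = \bid(\quant)$, so $\price'(\quant) = \bid'(\quant)$ and we obtain $\val(\quant) = \bid'(\quant)/\alloc'(\quant)$, which is~\eqref{eq:ap-inf}. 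For the first-price format $\price(\quant) = \bid(\quant)\,\alloc(\quant)$, so $\price'(\quant) = \bid'(\quant)\,\alloc(\quant) + \bid(\quant)\,\alloc'(\quant)$; dividing by $\alloc'(\quant)$ yields $\val(\quant) = \bid(\quant) + \alloc(\quant)\,\bid'(\quant)/\alloc'(\quant)$, which is~\eqref{eq:fp-inf}.

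The only subtle point --- and the one I expect to require the most care --- is justifying that the first-order condition may be applied pointwise: one must know that $\alloc$, $\bid$, $\price$ are differentiable and that $\alloc'$ does not vanish where it appears in a denominator. The clean way around pointwise assumptions is to first invoke Myerson's characterization in its envelope form, which expresses the equilibrium interim payment as an integral of the allocation rule against the value function and holds for any BNE, and then differentiate; alternatively, in our model the equilibrium bid function is known to be strictly increasing and smooth on the interior, so the manipulations above are valid as stated. A detailed derivation along these lines is carried out in \citet{CHN-14}; here we only need the resulting formulas.
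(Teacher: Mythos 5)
Your proposal is correct and takes essentially the same route as the paper's derivation in Appendix~\ref{app:inference}: both apply the Bayes--Nash first-order condition to the interim utility and solve for $\val(\quant)$, the only difference being that you differentiate with respect to a ``pretend quantile'' $r$ while the paper differentiates with respect to the deviating bid $z$ and uses $\bid^{-1}$ --- a change of variables that yields the identical equations. Your closing remark about justifying the pointwise first-order condition (strict monotonicity and smoothness of $\bid$, or falling back on the envelope/payment identity) addresses exactly the regularity the paper assumes implicitly, so nothing is missing.
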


Recall that the functions $\alloc(\quant)$ and $\alloc'(\quant)$ are
known precisely for position auctions, while functions $\bid(\quant)$
and $\bid'(\quant)$ are observed in the data and can be estimated.

\paragraph{Statistical Methods}

The errors in the estimated bid distribution follow from standard
analyses.  For $\samples$ draws from the bid distribution, define the
{\em estimated bid distribution} $\ebid(\cdot)$ as $\ebid(\quant) =
\sbidi$ for all $i \in \samples$ and $\quant \in [i-1,i)/\samples$,
where $\sbidi$ is the $i$th smallest bid in the sample.  For function
$\bid(\cdot)$ and estimator $\ebid(\cdot)$, the {\em uniform mean absolute
  error} as a function of the number of samples $\samples$
is $$\expect{ \sup\nolimits_\quant \big| \bid(\quant) -
\ebid(\quant) \big|}.$$

The main object that will arise in our further analysis will be the weighted
quantile function of the bid distribution where the weights are determined
by the allocation rule of the auction under consideration. Our statistical
results stem from the previous work on the uniform convergence of 
quantile processes and weighted quantile processes in \citet{csorgo:78}, \citet{csorgo:83}, \citet{cheng:97}.
It turns out that the our main object of interest for inference is
the weighted quantile function of the bid distribution. The weight is proportional
to the inverse derivative of the allocation rule. This feature leads to highly desirable
proprties of the $\sqrt{N}$-normalized mean absolute error, making it bounded by a universal constant.

\begin{lemma}
\label{error bid function}
The uniform mean absolute error of the empirical quantile function $\ebid(\cdot)$ 
weighted by its derivative is bounded as follows, as $N \rightarrow \infty$.
$$\expect{
  \sup\nolimits_\quant \big|\sqrt{N}(b^{\prime}(q))^{-1}( \bid(\quant) - \ebid(\quant)) \big|} <
1.$$ 
\end{lemma}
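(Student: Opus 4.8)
The plan is to reduce the $\sqrt{\samples}$-normalized, derivative-weighted quantile error to the classical \emph{uniform} quantile process via the probability integral transform, then invoke the weighted-quantile-process limit theorems of \citet{csorgo:78}, \citet{csorgo:83}, and \citet{cheng:97} together with a uniform-integrability argument to pass to the limit of the expectation; the limit turns out to equal the mean of the supremum of a Brownian bridge, which is strictly below $1$.

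First I would set up notation. Let $G$ denote the continuous, strictly increasing distribution of equilibrium bids, so that $\bid(\quant) = G^{-1}(\quant)$ and the bid quantile density is $\bid'(\quant) = 1/g(\bid(\quant))$, where $g = G'$; thus the weight $(\bid'(\quant))^{-1} = g(\bid(\quant))$ is the bid density evaluated at the $\quant$-quantile bid. Let $U_1,\dots,U_{\samples}$ be the i.i.d.\@ uniform variables obtained by applying $G$ to the $\samples$ bid samples, and let $\Gamma_{\samples}$ be their empirical quantile function, i.e.\@ the step function equal to the $i$th order statistic $U_{(i)}$ on $[(i-1)/\samples, i/\samples)$. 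Then $\ebid(\quant) = G^{-1}(\Gamma_{\samples}(\quant)) = \bid(\Gamma_{\samples}(\quant))$, and a first-order (mean-value) expansion of $\bid$ yields, for some $\xi_\quant$ between $\quant$ and $\Gamma_{\samples}(\quant)$,
\begin{align*}
\sqrt{\samples}\,(\bid'(\quant))^{-1}\bigl(\bid(\quant) - \ebid(\quant)\bigr) \;=\; \frac{\bid'(\xi_\quant)}{\bid'(\quant)}\cdot u_{\samples}(\quant), \qquad u_{\samples}(\quant) := \sqrt{\samples}\,\bigl(\quant - \Gamma_{\samples}(\quant)\bigr),
\end{align*}
so the normed quantile process on the left equals the uniform quantile process $u_{\samples}$ up to the multiplicative factor $\bid'(\xi_\quant)/\bid'(\quant)$.

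Second, and this is where I expect the main obstacle to lie, I would show that this multiplicative factor is negligible \emph{uniformly in $\quant$}, including near the endpoints $\quant \to 0, 1$ where $\bid'$ can be unbounded or vanish; controlling the sup there is precisely the reason the weighting by $(\bid'(\quant))^{-1}$ is the correct normalization and why a naive delta-method argument is insufficient. On compact subsets of $(0,1)$ the argument is routine: $\sup_\quant|\quant - \Gamma_{\samples}(\quant)| \to 0$ almost surely (Glivenko--Cantelli for quantiles), so $\xi_\quant \to \quant$ uniformly and continuity of $\bid'$ forces $\bid'(\xi_\quant)/\bid'(\quant) \to 1$. Near $0$ and $1$ one must bound the oscillation of $\bid'$ over intervals around $\quant$ whose length is of order $\sqrt{\quant(1-\quant)(\log\log\samples)/\samples}$, by the law of the iterated logarithm for $\Gamma_{\samples}$; this is exactly what the Cs\"org\H{o}--R\'ev\'esz-type regularity hypotheses on the quantile density are designed to handle, and under them the strong approximations of \citet{csorgo:78}, \citet{csorgo:83}, and \citet{cheng:97} give directly that $\sup_\quant\bigl|\sqrt{\samples}(\bid'(\quant))^{-1}(\bid(\quant)-\ebid(\quant)) - u_{\samples}(\quant)\bigr| \to 0$ in probability.

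Third, with this reduction in hand, it is classical that $u_{\samples}$ converges weakly in $D[0,1]$ to a standard Brownian bridge $B$ (Donsker's theorem for the uniform empirical process together with the Bahadur--Kiefer asymptotic equivalence of the empirical and quantile processes), so the continuous mapping theorem gives $\sup_\quant\bigl|\sqrt{\samples}(\bid'(\quant))^{-1}(\bid(\quant)-\ebid(\quant))\bigr| \Rightarrow \sup_\quant|B(\quant)|$. To turn convergence in distribution into convergence of expectations I would establish uniform integrability of the left-hand side from a Dvoretzky--Kiefer--Wolfowitz-type exponential tail bound for the uniform quantile process, which is sub-Gaussian uniformly in $\samples$. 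The limiting mean is that of the Kolmogorov statistic,
\begin{align*}
\expect{\,\sup\nolimits_\quant |B(\quant)|\,} \;=\; \sqrt{\pi/2}\;\ln 2 \;\approx\; 0.869 \;<\; 1,
\end{align*}
and hence $\expect{\sup\nolimits_\quant|\sqrt{\samples}(\bid'(\quant))^{-1}(\bid(\quant)-\ebid(\quant))|} < 1$ for all sufficiently large $\samples$, as claimed.
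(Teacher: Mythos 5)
Your proposal is correct and is essentially the paper's own route: the paper gives no self-contained proof of this lemma, deferring to the weighted-quantile-process results of \citet{csorgo:78}, \citet{csorgo:83}, and \citet{cheng:97}, and in the proof of Theorem~\ref{th: all pay} it uses exactly your reduction, writing $\hat b(q)-b(q) = -b'(q)\,(\hat G(b(q))-G(b(q)))$ to first order so that the normalized supremum becomes the Kolmogorov statistic, whose limiting mean is $\sqrt{\pi/2}\,\ln 2 \approx 0.87 < 1$. Your explicit handling of the ratio $b'(\xi_q)/b'(q)$ near the endpoints simply makes visible the quantile-density regularity conditions that the paper leaves implicit in its citations.
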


This lemma along with Equation~\eqref{eq:fp-inf}, and using the fact
that values are bounded by $1$ gives the following bound on the 
uniform mean absolute error of the weighted bid distribution for the first-price auction as
$$ \textstyle \expect{\sup\nolimits_\quant \big|\sqrt{N}\frac{x(q)}{x'(q)}\left(
  \bid(\quant) - \ebid(\quant) \right)\big|}  \leq
1.$$
Likewise, using Equation~\eqref{eq:ap-inf} for the all-pay auction, we
get:
$$ \textstyle \expect{\sup\nolimits_\quant \big|\sqrt{N}(x^{\prime}(q))^{-1}
\left(
  \bid(\quant) - \ebid(\quant)\right) \big| } \leq 
1.$$

Equations~\eqref{eq:fp-inf} and~\eqref{eq:ap-inf} enable the value
function, or equivalently, the value distribution, to be estimated
from the estimated bid function and an estimator for the derivative of
the bid function, or equivalently, the density of the bid
distribution. Estimation of densities is standard; however, it
requires assumptions on the distribution, e.g., continuity, and the
convergence rates in most cases will be slower.  Our main results do not take
this standard approach. 

\section{Inference methodology and error bounds for all-pay auctions}
\label{s:param-inf}

We will now develop a methodology and error bounds for estimating the
revenue of one rank-based auction using bids from another rank-based
auction. There are two reasons behind our assumption that the auction that we
run (that generates the observed bids) is a rank-based auction. First,
the allocation rule (in quantile space) of a rank based auction is
independent of the bid and value distribution; therefore, it is known
and does not need to be estimated.  Second, the allocation rules that
result from rank-based auctions are well behaved, in particular their
slopes are bounded, and our error analysis makes use of this property.

Recall from Section~\ref{sec:rank-based-basics} that the
revenue of any rank-based auction can be expressed as a linear
combination of the multi-unit revenues $\murevk[1],\ldots,\murevk[n]$
with $\murevk$ equal to the per-agent revenue of the
highest-$k$-bids-win auction. Therefore, in order to estimate the
revenue of a rank-based auction, it suffices to estimate the
$\murevk$s accurately.

In the following section we describe a function mapping the observed
bids to the revenue estimate. In Sections~\ref{s:inference-k} and
\ref{s:inference-y} we develop error bounds on our estimate for the
revenue of a multi-unit auction and a general position auction respectively.



\subsection{Inference equation}
\label{s:inference-equation}

Consider estimating the revenue of an auction with allocation rule $y$
from the bids of an all-pay position auction.  The per-agent revenue
of the allocation rule $y$ is given by:
$$
P_y = \expect[\quant]{y'(\quant)\rev(\quant)} =
\expect[\quant]{y'(\quant)\val(\quant) (1-\quant)}
$$


Let $\alloc$ denote the allocation rule of the auction that we run,
and $\bid$ denote the bid distribution in BNE of this auction.  Recall
that for an all-pay auction format, we can convert the bid
distribution into the value distribution as follows: $\val(\quant) =
\bid'(\quant)/\alloc'(\quant)$.  Substituting this into the expression
for $P_y$ above we get
$$
P_y =
\expect[\quant]{y'(\quant)(1-\quant)\frac{\bid'(\quant)}{\alloc'(\quant)}}
= \expect[\quant]{Z_y(\quant)\bid'(\quant)},\; \text{ where }\, Z_y(\quant)=(1-\quant)\frac{y'(\quant)}{\alloc'(\quant)}.
$$

This expression allows us to derive the revenue using the empirical
bid distribution.  
In \citet{CHN-14} we observe that integration by
parts yields an estimator that is a simple weighted average of the
empirical bid distribution.  

\begin{lemma}[\citet{CHN-14}] \label{weights lemma}
  The per-agent revenue of a rank-based auction with allocation rule $y$ can be written as
  a linear combination of the bids in an all-pay auction:\footnote{Note that $\bid(0)=0$ and $Z_k(1)=0$.}
$$
P_y = \expect[\quant]{-Z'_y(\quant)\bid(\quant)}
$$
where $Z_y(\quant)
=(1-\quant)\frac{y(\quant)}{\alloc'(\quant)}$ depends on the
allocation rule $\alloc$ of the mechanism and is known precisely.
\end{lemma}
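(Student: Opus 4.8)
The plan is to obtain the displayed identity by a single integration by parts applied to the integral expression for $P_y$ derived immediately above the statement. Since the quantile of a uniformly random agent is uniform on $[0,1]$, the expectation $\expect[\quant]{\cdot}$ is literally $\int_0^1(\cdot)\,d\quant$; so, writing $Z_y(\quant)=(1-\quant)\,y'(\quant)/\alloc'(\quant)$ as in the display just above the statement, we begin from $P_y=\int_0^1 Z_y(\quant)\,\bid'(\quant)\,d\quant$ and integrate by parts, treating $Z_y$ as the factor to be differentiated so that $\bid'$ is integrated back to $\bid$: $\int_0^1 Z_y(\quant)\,\bid'(\quant)\,d\quant=\bigl[Z_y(\quant)\,\bid(\quant)\bigr]_0^1-\int_0^1 Z_y'(\quant)\,\bid(\quant)\,d\quant$. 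Once the boundary term is shown to vanish this reads $P_y=\expect[\quant]{-Z_y'(\quant)\,\bid(\quant)}$, so essentially all the content of the lemma is the vanishing of $\bigl[Z_y\bid\bigr]_0^1$.

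That boundary check is exactly what the footnote records, and it is where I expect the only real care to be needed. At $\quant=0$, the lowest type of an all-pay auction bids $\bid(0)=0$ (recorded in the footnote); moreover, from the all-pay inference identity $\val=\bid'/\alloc'$, hence $\bid'=\val\,\alloc'$, together with $\bid(0)=0$, we get $\bid(\quant)=\int_0^\quant\val(s)\,\alloc'(s)\,ds\le\alloc(\quant)-\alloc(0)$ (using $\val\le1$ and $\alloc'\ge0$), and since $\alloc$ is a polynomial, $\alloc(\quant)-\alloc(0)$ vanishes strictly faster than $\alloc'(\quant)$ near $0$, so $\bid(\quant)/\alloc'(\quant)\to0$ and hence $Z_y(\quant)\bid(\quant)=(1-\quant)\,y'(\quant)\,\bid(\quant)/\alloc'(\quant)\to0$ there. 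At $\quant=1$, I would use that for rank-based auctions $y$ and $\alloc$ are explicit polynomials in $\quant$ (convex combinations of the highest-$k$-bids-win rules $\knalloc$, with elementary derivatives), so that the explicit prefactor $(1-\quant)$ forces $Z_y(\quant)=(1-\quant)\,y'(\quant)/\alloc'(\quant)\to0$ as $\quant\to1$; combined with $\bid(1)<\infty$ this settles the upper endpoint, so $\bigl[Z_y\bid\bigr]_0^1=0$. For full rigor one performs the integration by parts on $[\eps,1-\eps]$, where $Z_y$ is $C^1$, and lets $\eps\to0$: the left-hand side converges to $\int_0^1 Z_y\bid'$ because its integrand equals the bounded function $(1-\quant)\,y'(\quant)\,\val(\quant)$, the boundary terms go to $0$ by the estimates above, and consequently $\int_\eps^{1-\eps}Z_y'(\quant)\,\bid(\quant)\,d\quant$ converges to $-\int_0^1 Z_y\bid'=-P_y$.

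The remaining assertions are then immediate: the weight $Z_y$, and hence $Z_y'$, is assembled only from $y$ (the allocation rule whose revenue we want to estimate) and $\alloc$ (the allocation rule of the auction we actually run), and for rank-based auctions both are closed-form functions of $\quant$ with no dependence on the unobserved value distribution, so the identity exhibits $P_y$ as a known, distribution-free linear functional of the bid curve $\bid(\cdot)$; replacing $\bid$ by its empirical quantile step function then turns it into the finite linear combination of the sorted sample bids claimed in the statement. The step I expect to be the main obstacle is the $\quant=1$ boundary estimate -- confirming that $y'(\quant)/\alloc'(\quant)$ does not overwhelm the $(1-\quant)$ factor -- since that is the one place where well-behavedness (bounded slopes, sufficient non-degeneracy) of the rank-based allocation rules is genuinely used; the rest is routine calculus and bookkeeping.
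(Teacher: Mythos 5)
Your route is the same as the paper's: the paper gives no proof of this lemma beyond attributing it to \citet{CHN-14}, remarking that integration by parts yields the estimator, and recording in the footnote exactly the two boundary facts $b(0)=0$ and $Z_k(1)=0$ that you set out to verify. Your treatment of the lower endpoint is correct and is the right way to handle the possible vanishing of $x'$ at $q=0$: from $b(q)=\int_0^q v(s)\,x'(s)\,ds \le x(q)-x(0)$ and the fact that $x'$ is a polynomial, $b(q)/x'(q)\to 0$, so $Z_y(q)b(q)\to 0$ there.

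The one place your argument has a genuine gap is the upper endpoint, and it is precisely the step you yourself flag as the obstacle: the prefactor $(1-q)$ does \emph{not} by itself force $Z_y(q)\to 0$ as $q\to 1$, because $y'(q)/x'(q)$ can blow up polynomially in $1/(1-q)$ when $x$ places marginal weight only on low-ranked positions. Concretely, if $x$ is the $(n-1)$-unit rule and $y$ the $1$-unit rule, then $x'(q)=(n-1)(1-q)^{n-2}$ and $y'(q)=(n-1)q^{n-2}$, so $Z_y(q)=q^{n-2}(1-q)^{3-n}\to\infty$ for $n\ge 4$; in that case $\int_0^1 Z_y\,b'\,dq$ is finite but $\int_0^1 |Z_y'|\,b\,dq$ diverges near $q=1$ (since $b(1)>0$), so the claimed identity fails outright. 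The lemma therefore implicitly carries the hypothesis $Z_y(1)=0$, i.e.\ $(1-q)\,y'(q)/x'(q)\to 0$ as $q\to 1$ --- which is exactly what the footnote asserts without proof, which holds whenever $x'(1)>0$, and which is satisfied in every application in the paper because the observed mechanism $C$ always mixes in a component whose allocation rule has positive slope at $q=1$. So: same approach as the paper, with a correct lower-endpoint analysis, but the $q=1$ boundary claim must be stated as a hypothesis (or verified from the specific marginal weights of $x$ and $y$) rather than derived from the $(1-q)$ factor alone.
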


Recall that the estimated bid
distribution $\hat{b}(\cdot)$ is defined
as a piecewise constant function with $N$ pieces.  Thus,
the estimator $\hat{P}_y = \expect[q]{\smash{-Z'_y(q)\,\hat{\bid}(q)}}$ can be
simplified as expressed in the following definition.

\begin{definition}
\label{d:estimator}
The estimator $\hat{P}_y$ for the revenue of an auction with
allocation rule $y$ from $N$ samples $\hat{b}_1 \leq \cdots \leq
\hat{b}_N$ from the equilibrium bid distribution of an auction with
allocation rule $x$ is:
\begin{align}
\label{AB-test-estimator-sum} 
\hat{P}_y& = \sum_{i=1}^{N} \left[ \left(1-\frac {i-1} N\right)\frac{y'(\frac {i-1}N)}{x'( \frac {i-1}N)} - \left(1-\frac{i}{N}\right)\frac{y'(\frac{i}N)}{x'(\frac{i}N)} \right] \, \hat{\bid}_i
\end{align}
\end{definition}



\subsection{Error bound for the multi-unit revenues}
\label{s:inference-k}

We will now develop an error bound for the estimator for
Lemma~\ref{weights lemma} for the case of the multi-unit revenues.  In
the following, denote the the allocation rule of the
highest-$k$-bids-win auction as $\kalloc$ (for an implicit number $n$
of agents).  We can therefore express the error in the estimation of
$P_k$ in terms of the error in estimating the bid distribution. 
\begin{eqnarray}
|\emurevk-\murevk| & = &
\left|\expect[q]{-Z'_k(\quant)(\widehat{\bid}(\quant)-\bid(\quant))}\right|
\label{eq:error-allpay}
\end{eqnarray}

In \citet{CHN-14} we gave a simple but weak analysis of the error,
obtaining the following bound:
$$\Err{\murevk}\le \sqrt{\frac{2}{\samples}}\,\, \sup_{\quant}\{\alloc'(\quant)\}\,\,\sup_{\quant}
\left\{ \frac{\kalloc'(\quant)}{\alloc'(\quant)} \right\} $$

To understand this error bound, we note that the maximum slope of the
multi-unit allocation rules $\kalloc$, and therefore also that of any
rank-based auction, is always bounded by $n$, the number of agents in
the auction (summarized as Fact~\ref{fact:max-slope} in Appendix~\ref{s:app-proofs}). This
bounds the $\sup_{\quant} \alloc'(\quant)$ term.
The quantity $\sup_{\quant} \left\{
  \frac{\kalloc'(\quant)}{\alloc'(\quant)} \right\}$ on the other hand
may be rather large, even unbounded, if $\alloc'(\cdot)$ is near zero
at some $\quant$.  The approach taken in \citet{CHN-14} is to
explicitly design $\alloc$ with position weight $\margwalk > \epsilon$
(i.e., to mix $\kalloc$ into $\alloc$) to ensure this latter term is
bounded. We will now develop a stronger bound on the error in
$\murevk$ with better dependence on the quantity $\sup_{\quant}
\left\{ \frac{\kalloc'(\quant)}{\alloc'(\quant)} \right\}$.

We start with Equation~\eqref{eq:error-allpay}
and partition the expectation on the right hand side of the equation
as follows for some $\alpha>0$:
\begin{eqnarray*}
|\emurevk-\murevk| & = &
\left|\expect[q]{-Z'_k(\quant)(\widehat{\bid}(\quant)-\bid(\quant))}\right|\\
& \le & \expect[q]{\frac{\left(\log(1+Z_k(q))\right)^{\alpha}}{Z_k(q)}|Z'_k(q)|}
\sup\limits_q\left|\frac{Z_k(q)}{\left(\log(1+Z_k(q))\right)^{\alpha}}(\hat{b}(q)-b(q))\right|
\end{eqnarray*}
An appropriate choice of $\alpha$ gives us the following theorem. We
defer the proof to the appendix.
\begin{theorem}\label{th: all pay}
  Let $\kalloc$ denote the allocation function of the
  $k$-highest-bids-win auction and $\alloc$ be the allocation function
  of any rank-based auction. Then for all $k$, the mean squared error
  in estimating $P_k$ from $\samples$ samples from the bid
  distribution for an all-pay auction with allocation rule $\alloc$ is
  bounded by:
$$
\Err{\murevk}\le \frac{40}{\sqrt{\samples}}\,\,
\sup_{\quant}\{\kalloc'(\quant)\}\,\,\log 
\max\left\{\sup_{\quant: \kalloc'(\quant)\ge 1}\frac{\alloc'(\quant)}{\kalloc'(\quant)}, \sup_{\quant}\frac{\kalloc'(\quant)}{\alloc'(\quant)} \right\}.
$$
\end{theorem}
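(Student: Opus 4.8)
The plan is to work from the two‑factor bound displayed immediately above the theorem,
$$
|\emurevk-\murevk|\ \le\ A_\alpha\cdot B_\alpha(\hat{\bid}),\qquad
A_\alpha:=\expect[q]{\tfrac{(\log(1+Z_k(q)))^{\alpha}}{Z_k(q)}\,|Z_k'(q)|},\quad
B_\alpha(\hat{\bid}):=\sup_q\Big|\tfrac{Z_k(q)}{(\log(1+Z_k(q)))^{\alpha}}(\hat{\bid}(q)-\bid(q))\Big|,
$$
where $Z_k(q)=(1-q)\kalloc(q)/\alloc'(q)$ is \emph{deterministic} (it depends only on the two allocation rules, not on the value distribution). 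Taking expectations, $\Err{\murevk}\le A_\alpha\cdot\expect[\hat{\bid}]{B_\alpha}$. I would bound the two factors separately and fix the exponent $\alpha\in(0,1)$ only at the end, so as to balance them.

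\noindent\textbf{The statistical factor.} To bound $\expect[\hat{\bid}]{B_\alpha}$ I would insert the bid density: write $\hat{\bid}(q)-\bid(q)=\bid'(q)\cdot\frac{\hat{\bid}(q)-\bid(q)}{\bid'(q)}$, pull the deterministic weight $\sup_q\big|\frac{Z_k(q)\bid'(q)}{(\log(1+Z_k(q)))^{\alpha}}\big|$ out of the supremum, and apply Lemma~\ref{error bid function} to the residual, which gives $\expect[\hat{\bid}]{\sup_q|(\bid'(q))^{-1}(\hat{\bid}(q)-\bid(q))|}<\samples^{-1/2}$. The weight is then controlled by the all‑pay inference identity $\bid'(q)=\val(q)\alloc'(q)$: this makes $Z_k(q)\bid'(q)=(1-q)\kalloc(q)\val(q)=\rev(q)\kalloc(q)\le 1$ (revenue curve and allocation are each at most $1$) while $Z_k(q)=\rev(q)\kalloc(q)/\bid'(q)$. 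Splitting on $Z_k(q)\lessgtr 1$ and using $\log(1+t)\ge t/2$ for $t\le1$, $\log(1+t)\ge\log 2$ for $t\ge1$, together with $Z_k(q)^{1-\alpha}\bid'(q)=\big((1-q)\kalloc(q)\big)^{1-\alpha}\val(q)\alloc'(q)^{\alpha}\le\alloc'(q)^{\alpha}$, should yield $\expect[\hat{\bid}]{B_\alpha}\le\frac{1}{\sqrt{\samples}}\big(c\max\{1,\sup_q\alloc'(q)\}\big)^{\alpha}$ for an absolute constant $c$.

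\noindent\textbf{The deterministic factor.} Here I would rewrite $A_\alpha=\expect[q]{(\log(1+Z_k(q)))^{\alpha}|(\log Z_k)'(q)|}$ using $|Z_k'|/Z_k=|(\log Z_k)'|$, and note $|(\log Z_k)'(q)|\le\frac{1}{1-q}+\frac{\kalloc'(q)}{\kalloc(q)}+\frac{|\alloc''(q)|}{\alloc'(q)}$. These pieces are not integrable if $(\log(1+Z_k(q)))^{\alpha}$ is replaced by a global constant, so I would split $[0,1]$ into small neighborhoods of the two endpoints and a bulk region: near the endpoints use $\log(1+Z_k(q))\le Z_k(q)$, which for $k<n$ vanishes there (to orders computable from $\kalloc=x^{(k:n)}$ and from $\alloc'=\sum_j\margwalk[j]x^{(j:n)\prime}$), so that $(\log(1+Z_k(q)))^{\alpha}$ damps the $\frac{1}{1-q}$ and $\frac{\kalloc'}{\kalloc}$ blow‑ups; in the bulk use $\log(1+Z_k(q))\le\log(1+\sup_q Z_k(q))$. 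Carrying out these estimates with the explicit polynomial forms of the allocation rules, the target is $A_\alpha\le\frac{C}{\alpha}\sup_q\kalloc'(q)$, up to factors that are powers of $\log\sup_q Z_k(q)$ and of $\log\sup_q\alloc'(q)$ — both of which one checks are $O(L)$, where $L$ is the logarithm in the theorem statement. Combining with the previous paragraph and choosing $\alpha\asymp 1/L$ makes $\big(c\max\{1,\sup_q\alloc'(q)\}\big)^{\alpha}$ and all stray $(\log\cdot)^{\alpha}$ factors $O(1)$ while $1/\alpha=O(L)$, leaving $\Err{\murevk}\le\frac{40}{\sqrt{\samples}}\sup_q\kalloc'(q)\cdot L$.

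\noindent\textbf{Main obstacle.} The delicate step is the bound on $A_\alpha$: one must extract exactly one power of $\sup_q\kalloc'(q)$ — rather than the crude bound $n$ that would follow from, e.g., counting the monotone pieces of the rational function $Z_k$ — and only a single power of $L$. This forces a careful region‑by‑region accounting near the endpoints that exploits the concentration/shape of the Beta‑type density $x^{(k:n)\prime}(q)\propto q^{n-1-k}(1-q)^{k-1}$ and the orders of vanishing of $\alloc'$, and a check that $\log\sup_q Z_k(q)$ and $\log\sup_q\alloc'(q)$ are dominated by $L$. It is worth noting that the bound is informative only when $L<\infty$, i.e., when the slope $\alloc'$ of the auction actually run does not vanish faster than $\kalloc'$ at the endpoints — exactly the regime of having ``mixed enough of $\kalloc$ into $\alloc$'' — and the degenerate case is vacuously fine.
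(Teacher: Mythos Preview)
Your split into the deterministic factor $A_\alpha$ and the statistical factor $B_\alpha$ matches the paper, and so does the idea of choosing $\alpha\asymp 1/L$ at the end. But the allocation of the key quantity $\sup_q\kalloc'(q)$ between the two factors is inverted, and this stems from a slip in the definition of $Z_k$: the correct weight is $Z_k(q)=(1-q)\,\kalloc'(q)/\alloc'(q)$, with a \emph{derivative} in the numerator. With the all-pay identity $\bid'(q)=\val(q)\alloc'(q)$ this gives $Z_k(q)\bid'(q)=(1-q)\val(q)\,\kalloc'(q)=\rev(q)\,\kalloc'(q)$, which is bounded by $\kalloc'(q)$, not by $1$; likewise $Z_k(q)^{1-\alpha}\bid'(q)\le (\kalloc'(q))^{1-\alpha}(\alloc'(q))^{\alpha}$, not $(\alloc'(q))^{\alpha}$. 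So the statistical factor already carries $\sup_q\kalloc'(q)$, and the correct target for the deterministic factor is simply $A_\alpha=O(1/\alpha)$, with no $\sup_q\kalloc'(q)$ at all.

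That target is where the paper's argument differs most from your plan. Rather than the endpoint/bulk decomposition via $|(\log Z_k)'|\le \frac{1}{1-q}+\frac{\kalloc''}{\kalloc'}+\frac{|\alloc''|}{\alloc'}$ (whose last term, for a general rank-based $\alloc$, has no evident control and is not damped by $Z_k$ near interior zeros of $\alloc''$), the paper invokes a structural fact from \citet{CHN-14}: $Z_k$ is \emph{unimodal} on $[0,1]$. Unimodality means $Z_k'$ changes sign once, so the substitution $t=Z_k(q)$ on each monotone piece collapses $A_\alpha$ to $2\int_0^{Z_k^\ast}\frac{(\log(1+t))^{\alpha}}{t}\,dt$, which is then bounded elementarily by $O(1/\alpha)$ for $\alpha<1/\log Z_k^\ast$. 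This single lemma is the idea your ``main obstacle'' paragraph is missing: there are exactly two monotone pieces of $Z_k$, so no factor of $n$ (or of $\sup_q\kalloc'$) ever enters $A_\alpha$, and no case analysis near the endpoints is needed.
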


\noindent
Invoking Fact~\ref{fact:max-slope}, we note that the error in
$\murevk$ given by Theorem~\ref{th: all pay} can be bounded by
$$\Err{\murevk}= O\left(\frac{n}{\sqrt{\min\{k, n-k\}}}\right) \frac{1}{\sqrt{\samples}} \log 
\max\left\{\sup_{\quant: \kalloc'(\quant)\ge 1}\frac{\alloc'(\quant)}{\kalloc'(\quant)}, \sup_{\quant}\frac{\kalloc'(\quant)}{\alloc'(\quant)} \right\}.
$$

\noindent To understand the above error bound better, we make the following
observations:
\begin{itemize}
\item When $\alloc=\kalloc$, we get an error bound of $O(\sup_q
  \kalloc'(q)/\sqrt{N})$, which is the same (within constant factors)
  as the statistical error in bids.
\item We also get a good error bound when $\alloc$ and $\kalloc$ are
  close enough without being identical: when $\eps\kalloc'\le
  \alloc'\le \kalloc'/\eps$, we get a bound of $O(\log(1/\eps) \sup_q
  \kalloc'(q)/\sqrt{N})$.
\item Finally, as long as $\alloc'\ge \eps\kalloc'$, that is, the
  highest-$k$-bids-win auction is mixed in with $\epsilon$ probability
  into $\alloc$, we observe via Fact~\ref{fact:max-slope} that $\sup_{q:
    \kalloc'(q)\ge 1} \alloc'(q)/\kalloc'(q) \le \sup_q \alloc'(q)\le
  n$, and obtain an error bound of $O(\log(n/\eps) \sup_q
  \kalloc'(q)/\sqrt{N})$.
\end{itemize}

\subsection{Error bound for arbitrary rank-based revenues}
\label{s:inference-y}

We now develop an error bound for our estimator for the revenue,
$P_y$, of an arbitrary position auction with allocation rule $y$ from
the bids of another position auction $x$. Let us write $y$ as a
position auction with weights $\wals$:
\begin{align*}
y  = \sum\nolimits_k \margwalk\,\kalloc \;\; \text{ and }
P_y & = \sum\nolimits_k \margwalk\,P_k\\
\intertext{Accordingly, the error in $P_y$ is bounded by a weighted sum of the
  error in $P_k$:}
\Err{P_y}  & \le \sum\nolimits_k \margwalk\,\Err{P_k}\\
\intertext{applying Theorem~\ref{th: all pay},}
& \le \frac{40}{\sqrt{N}} \sum\nolimits_k \margwalk \sup_q
\{\kalloc'(q)\} \left(\log n + \log\frac{1}{\margwalk}+ \log\sup_q \frac{y'(q)}{x'(q)}\right)\\
& \yestag \label{eq:weak-bound-py}
= O\left(\frac{n\log n}{\sqrt{N}} \log\sup_q \frac{y'(q)}{x'(q)}\right)
\end{align*}
Unfortunately, the above bound can be quite loose, as the following
simple example demonstrates. Suppose that $x=y$ and $\margwalk=1/n$
for all $k$. Then the above approach (via a tighter bound on the sum
over $k$) leads to an error bound of $O(\sqrt{n}\log n/\sqrt{N})$,
whereas, the true error bound should be $O(1/\sqrt{N})$, arising due
to the statistical error in bids. Furthermore, it is desirable to
obtain an error bound that depends directly on $\sup_q y'(q)$, rather
than on the constituent $\sup_q \kalloc'(q)$; the latter can be much
larger than the former. Below, we analyze the error in $P_y$ directly,
leading to a slightly tighter bound.

\begin{theorem}
\label{thm:inference-y}
  The expected absolute error in estimating the revenue of a position
  auction with allocation rule $y$ using $N$ samples from the bid
  distribution for an all-pay position auction with allocation rule
  $x$ is bounded as below; here $n$ is the number of positions
  in the two position auctions.
\begin{align*}
\Err{P_y}
& \le \frac{40}{\sqrt{N}} \sqrt{n\log n} \,\,\sup_{\quant}\{y'(\quant)\}\,\,\log \max\left\{\sup_{\quant: y'(\quant)\ge 1}\frac{\alloc'(\quant)}{y'(\quant)},
\sup_{\quant}\frac{y'(\quant)}{\alloc'(\quant)} \right\} \\
& + \frac{O(1)}{N} \sup_{\quant}\{\alloc'(\quant)\}\,\,\sup_{\quant}
\left\{ \frac{y'(\quant)}{\alloc'(\quant)} \right\}
\end{align*}
\end{theorem}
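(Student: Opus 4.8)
The plan is to avoid the term-by-term bound $\Err{P_y}\le\sum_k\margwalk\,\Err{\murevk}$ underlying Equation~\eqref{eq:weak-bound-py}, which is wasteful on two counts --- it sums $n$ contributions and it controls each by $\sup_\quant\{\kalloc'(\quant)\}$ rather than by $\sup_\quant\{y'(\quant)\}$ --- and instead to run the argument of Theorem~\ref{th: all pay} directly on the weight function $Z_y$ of Lemma~\ref{weights lemma} (which for a position auction is $Z_y(\quant)=(1-\quant)\,y'(\quant)/\alloc'(\quant)$). Concretely, I would start from the exact identity $\hat{P}_y-P_y=\expect[\quant]{-Z'_y(\quant)(\hat{\bid}(\quant)-\bid(\quant))}$ and, for the same choice of exponent $\alpha$ as in Theorem~\ref{th: all pay}, use
$$|\hat{P}_y-P_y|\le\expect[\quant]{\tfrac{(\log(1+Z_y(\quant)))^{\alpha}}{Z_y(\quant)}\,|Z'_y(\quant)|}\cdot\sup\nolimits_\quant\left|\tfrac{Z_y(\quant)}{(\log(1+Z_y(\quant)))^{\alpha}}(\hat{\bid}(\quant)-\bid(\quant))\right|.$$
Taking expectations over the samples, the first factor is deterministic (an integral in $\quant$), and the second is controlled by Lemma~\ref{error bid function} together with $\bid'(\quant)=\alloc'(\quant)\val(\quant)\le\alloc'(\quant)$ (values lie in $[0,1]$), exactly as for Theorem~\ref{th: all pay}; this already produces a contribution of the shape $\tfrac{1}{\sqrt N}\sup_\quant\{y'(\quant)\}\,\log\max\{\sup_{\quant:y'(\quant)\ge1}\alloc'/y',\ \sup_\quant y'/\alloc'\}$.

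The step I expect to be the main obstacle is bounding the deterministic integral $\expect[\quant]{(\log(1+Z_y(\quant)))^{\alpha}\,Z_y(\quant)^{-1}\,|Z'_y(\quant)|}$ for a general mixture $y=\sum_k\margwalk\,\kalloc$. Unlike its single-unit counterpart (which is $O(\sup_\quant\{\kalloc'(\quant)\})$), here $Z_y$ superposes up to $n$ localized pieces coming from the $\kalloc$, and the naive $|Z'_y|\le\sum_k\margwalk|Z'_k|$ merely reproduces the lossy sum. A natural route is a Cauchy--Schwarz split on the quantile interval, reducing to an $L^2$-type functional of $Z'_y$: the point is that although each component can have slope up to $\sup_\quant\{\kalloc'(\quant)\}=O(n/\sqrt{\min\{k,n-k\}})$ by Fact~\ref{fact:max-slope}, aggregating over the (overlapping) components --- using that $\margwals$ is a probability vector so the $\log(1/\margwalk)$ overheads have bounded average --- keeps this functional at $O(\sqrt{n\log n})\,\sup_\quant\{y'(\quant)\}$. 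This is the $\sqrt{\log n}$ sharpening of the $\sqrt n\log n$ that the term-by-term inequality gives even for the uniform position auction, as noted after Equation~\eqref{eq:weak-bound-py}; getting this aggregation right, rather than the estimate's overall shape, is the real work.

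Finally, the estimate above controls the bulk of the quantile interval, and a separate lower-order argument handles the residual. The estimator $\hat{P}_y$ carries an $O(1/N)$ bias (the order statistics $\hat{\bid}_i$ concentrate near $\bid(i/(N+1))$ rather than $\bid(i/N)$), and the first and last sample intervals, of total $\quant$-length $O(1/N)$, are where $\hat{\bid}$ is least well controlled; on both I would fall back on the crude bound of \citet{CHN-14}, bounding $|\hat{\bid}-\bid|$ uniformly and integrating $|Z'_y|$. After multiplying by the $O(1/N)$ factor this contributes exactly the second summand $\tfrac{O(1)}{N}\sup_\quant\{\alloc'(\quant)\}\,\sup_\quant\{y'(\quant)/\alloc'(\quant)\}$. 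Adding the bulk and residual contributions yields the stated two-term bound.
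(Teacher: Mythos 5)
Your overall architecture---a main $O(1/\sqrt N)$ fluctuation term plus an $O(1/N)$ bias term coming from the order statistics concentrating near $b(i/(N+1))$ rather than $b(i/N)$---matches the paper's, and your diagnosis of where the difficulty lies is exactly right. But the step you yourself flag as ``the real work'' is where the proposal breaks down, and the fix you sketch cannot work. Your main term is a product of a deterministic integral $\int_0^1 (\log(1+Z_y(q)))^{\alpha}Z_y(q)^{-1}|Z'_y(q)|\,dq$ with a uniform supremum of the weighted error process. The proof of Theorem~\ref{th: all pay} controls the analogous integral by the change of variables $t=Z_k(q)$, at a cost equal to the number of monotone pieces of $Z_k$; Lemma~\ref{lem:Z-bound-1} guarantees this number is $2$ for $Z_k$ but fails for $Z_y$, whose derivative can change sign up to $2n$ times, so the deterministic integral is in general only bounded by $n$ times the single-bump value. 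No Cauchy--Schwarz split of that nonnegative integrand, and no averaging over the weights $\margwalk$, can push a deterministic integral below its value; and the supremum factor is sign-blind, so the product bound cannot exploit any cancellation between the oscillation regions. (If instead you mean Cauchy--Schwarz on $\int Z'_y(\bwhat-\bid)$ itself, you trade the sup for an $L^2$ norm of $Z'_y$, which involves $y''$ and is not controlled by $\sup_q y'(q)$ times logarithms.) The $\sqrt{n\log n}$ in the theorem is not a deterministic phenomenon; it is a concentration phenomenon.

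What the paper actually does for the fluctuation term is probabilistic. It writes $\hat{P}_y-\expect{\hat{P}_y}=\T_{0,N}=\int Z'_y(\quant)(\bwhat(\quant)-\btild(\quant))\,dq$, cuts $[0,1]$ at the random quantiles where $\bwhat$ crosses $\btild$ from below, and shows (Lemma~\ref{lem:independent-sums}) that, conditioned on the set of crossing indices and on $\Delta=\sup_q|\hat{G}(\bid(q))-G(\bid(q))|$, the block integrals $\T_{i_\ell,i_{\ell+1}}$ are mutually independent with mean zero. Each block is bounded in absolute value by $O(\Delta S)$ with $S=\sup_q\{y'(q)\}\log\max\{\cdots\}$---here the single-bump argument of Theorem~\ref{th: all pay} is legitimately applied blockwise, since $Z'_y$ changes sign at most $2n$ times in total and $\bwhat-\btild$ has constant sign on each block---and the squared block bounds sum to at most $n(40\Delta S)^2$. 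Chernoff--Hoeffding then gives $|\T_{0,N}|=O(\sqrt{n\log n}\,\Delta S)$ in expectation, and $\expect{\Delta}=O(1/\sqrt N)$ from Lemma~\ref{error bid function} yields the first term of the theorem. This independence-plus-concentration step is the missing idea; without it, or an equivalent martingale or chaining argument that captures cancellation across regions, your route yields an extra factor of order $n$ rather than $\sqrt{n\log n}$.
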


Note that the first term in the error bound in
Theorem~\ref{thm:inference-y} dominates, and this term is identical to
the bound in Theorem~\ref{th: all pay}, except for an extra
$\sqrt{n\log n}$ term. Moreover, when $\sup_q y'(q)<\sqrt{n}$,
Theorem~\ref{thm:inference-y} gives us a tighter error bound than
Equation~\eqref{eq:weak-bound-py}.  We will now prove the
theorem. Here we provide an outline for the proof; the complete
argument can be found in the appendix.

\vspace{0.1in}
\begin{proofsketch}
As for the multi-unit revenues,
\begin{align}
\notag |\Phat_y-P_y| & = 
\left|\expect[q]{Z_y(\quant)(\bwhat'(\quant)-\bid'(\quant))}\right| = 
\left|\expect[q]{Z'_y(\quant)(\bwhat(\quant)-\bid(\quant))}\right|
\end{align}


In order to simplify our analysis of the error in $P_y$, we will break
up the error into two components: the bias in the estimator
$\hat{P}_y$ and the deviation of $\hat{P}_y$ from its mean.
\begin{align}
\notag |\Phat_y-P_y| & \le \left|\Phat_y - \expect{\Phat_y}\right| + \left|\expect{\Phat_y}-P_y\right|\\
& = \left|\expect[q]{Z'_y(\quant)(\bwhat(\quant)-\btild(\quant))}\right|
+ \left|\expect[q]{Z'_y(\quant)(\btild(\quant)-\bid(\quant))}\right|
\label{eq:error-allpay-y-hat}
\end{align}
Here, $\btild$ is a step function that equals the expectation of the
empirical bid function $\bwhat$:
\begin{align*}
\btild(\quant) & = \expect{\bwhat(\quant)}  
\end{align*}
The bias of the estimator $\Phat$ (i.e. the second term in
\eqref{eq:error-allpay-y-hat}) is easy to bound. We prove in the
appendix that $\btild(\quant)-\bid(\quant)$ is at most $O(1)/N$ times $\sup_q\{
  \alloc'(q)\}$. This implies the following lemma.
\begin{lemma}
\label{lem:bias-bound}
  With $\btild$ defined as above,
  $$\left|\expect{\Phat_y}-P_y\right|=\left|\expect[q]{Z'_y(\quant)(\btild(\quant)-\bid(\quant))}\right|=
  \frac{O(1)}{N} \sup_{\quant}\{\alloc'(\quant)\}\,\,\sup_{\quant}
\left\{ \frac{y'(\quant)}{\alloc'(\quant)} \right\}.$$
\end{lemma}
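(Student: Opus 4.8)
\begin{proofsketch}
The plan is to rewrite the bias $|\expect{\Phat_y}-P_y|=|\expect[q]{Z'_y(\quant)(\btild(\quant)-\bid(\quant))}|$ as the error of a Bernstein-polynomial approximation of $Z_y$, plus two harmless endpoint terms, and then to use the regularity of rank-based allocation rules to control that approximation error at rate $1/\samples$. The only facts about the equilibrium bid function that I need come from \eqref{eq:ap-inf}: for an all-pay auction with allocation rule $\alloc$ we have $\bid'(\quant)=\val(\quant)\,\alloc'(\quant)$, so, since $\val(\quant)\in[0,1]$, the function $\bid$ is nondecreasing, $0\le\bid'(\quant)\le\sup_{\quant}\{\alloc'(\quant)\}$, and $\bid(0)=0$. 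I also use $Z_y(\quant)=(1-\quant)\,y'(\quant)/\alloc'(\quant)$, so that $Z_y(1)=0$ and $\sup_{\quant}|Z_y(\quant)|\le\sup_{\quant}\{y'(\quant)/\alloc'(\quant)\}$, together with the identity, recorded just before Lemma~\ref{weights lemma}, that $P_y=\expect[q]{Z_y(\quant)\,\bid'(\quant)}=\int_0^1 Z_y(t)\,\bid'(t)\,dt$.

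Next I would introduce order-statistic notation: let $U_{(1)}\le\cdots\le U_{(\samples)}$ be the order statistics of $\samples$ i.i.d.\ uniforms on $[0,1]$, and write $\btild_i$ for the common value of the step function $\btild$ on $[(i-1)/\samples,\,i/\samples)$, so that $\btild_i=\expect{\bid(U_{(i)})}$ and $U_{(i)}\sim\mathrm{Beta}(i,\samples+1-i)$. A single integration by parts against the Beta density gives the identity $\btild_{i+1}-\btild_i=\tfrac{1}{\samples+1}\expect{\bid'(U'_{(i+1)})}$ for $1\le i\le\samples-1$, where $U'_{(i+1)}$ is the order statistic of rank $i+1$ among $\samples+1$ i.i.d.\ uniforms, with density $(\samples+1)\binom{\samples}{i}t^{i}(1-t)^{\samples-i}$. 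Writing $\expect{\Phat_y}$ in Abel-summation form, $\expect{\Phat_y}=Z_y(0)\btild_1+\sum_{i=1}^{\samples-1}Z_y(i/\samples)(\btild_{i+1}-\btild_i)$ (which uses $Z_y(1)=0$), substituting the identity, and swapping the finite sum with the integral, the binomial weights assemble into the degree-$\samples$ Bernstein polynomial $\mathrm{B}_{\samples}[Z_y](t)=\sum_{i=0}^{\samples}Z_y(i/\samples)\binom{\samples}{i}t^{i}(1-t)^{\samples-i}$, the only missing terms being the $i=0$ and $i=\samples$ endpoints; since $Z_y(1)=0$ this yields
\begin{equation*}
\expect{\Phat_y}-P_y = Z_y(0)\,\btild_1 - Z_y(0)\int_0^1\bid'(t)(1-t)^{\samples}\,dt + \int_0^1\bid'(t)\,(\mathrm{B}_{\samples}[Z_y](t)-Z_y(t))\,dt .
\end{equation*}

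The two endpoint terms are bounded on sight: $\btild_1=\expect{\bid(U_{(1)})}\le\sup_{\quant}\{\alloc'(\quant)\}\cdot\expect{U_{(1)}}=\sup_{\quant}\{\alloc'(\quant)\}/(\samples+1)$ by Lipschitzness of $\bid$ and $\bid(0)=0$, and $\int_0^1(1-t)^{\samples}dt=1/(\samples+1)$, so each endpoint term is at most $\tfrac{1}{\samples+1}\sup_{\quant}\{\alloc'(\quant)\}\,\sup_{\quant}\{y'(\quant)/\alloc'(\quant)\}$. For the remaining term I would use $0\le\bid'\le\sup_{\quant}\{\alloc'(\quant)\}$ to get $|\int_0^1\bid'(t)(\mathrm{B}_{\samples}[Z_y](t)-Z_y(t))\,dt|\le\sup_{\quant}\{\alloc'(\quant)\}\cdot\|\mathrm{B}_{\samples}[Z_y]-Z_y\|_1$, and then argue that the Bernstein approximation error of $Z_y$ is $\tfrac{O(1)}{\samples}\sup_{\quant}\{y'(\quant)/\alloc'(\quant)\}$ in $L^1$. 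This is where the hypothesis that $\alloc$ (and $y$) are rank-based enters: $Z_y=(1-\quant)\,y'/\alloc'$ is assembled from derivatives of rank-based allocation rules, which are, up to scaling, Binomial tail functions, so Fact~\ref{fact:max-slope} and the accompanying bounds on their derivatives let one invoke the standard second-order Bernstein estimate $|\mathrm{B}_{\samples}[f](t)-f(t)|=O(t(1-t)\,\|f''\|_\infty/\samples)$ and integrate it to the claimed quantity. Combining the three pieces gives $|\expect{\Phat_y}-P_y|=\tfrac{O(1)}{\samples}\sup_{\quant}\{\alloc'(\quant)\}\,\sup_{\quant}\{y'(\quant)/\alloc'(\quant)\}$.

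I expect that last step to be the main obstacle. The Bernstein polynomial $\mathrm{B}_{\samples}[Z_y]$ is essentially a local average of $Z_y$ over a window of width $\Theta(1/\sqrt{\samples})$, so a first-order (Lipschitz) estimate on $Z_y$ only produces a $1/\sqrt{\samples}$ rate; one genuinely needs a curvature estimate, and it must be carried out so that the constant stays universal, i.e.\ without picking up extra factors of $n$ coming from the sharp curvature of the constituent multi-unit allocation rules near their modes. A useful sanity check, and an alternative entry point, is the exact cancellation $\int_0^1(\btild(\quant)-\bid(\quant))\,d\quant=0$, which holds because the sum of the order statistics equals the sum of the sample; it shows that the bias vanishes to zeroth order and must be carried by a first-order term of size $O(1/\samples)$. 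Everything else (the identity $\bid'=\val\,\alloc'$, the order-statistic identity for $\btild_{i+1}-\btild_i$, the Abel summation, and the endpoint bounds) is routine.
\end{proofsketch}
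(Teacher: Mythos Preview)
Your approach is correct and genuinely different from the paper's. The paper does not rewrite the bias as a Bernstein approximation of $Z_y$; instead it bounds $|\btild(\quant)-\bid(\quant)|$ pointwise and directly. Concretely, since $\btild(i/\samples)=\int_0^1 \bid(t)\,f_{\mathrm{Beta}(i,\samples+1-i)}(t)\,dt$, the paper Taylor-expands $\bid(t)=\bid(\quant)+\bid'(\quant)(t-\quant)+O((t-\quant)^2)$ around $\quant\in[(i-1)/\samples,i/\samples]$ and uses the Beta mean $i/(\samples+1)$ and variance $O(1/\samples)$ to read off $|\btild(\quant)-\bid(\quant)|\le O(\sup_{\quant}\bid'(\quant)/\samples)=O(\sup_{\quant}\alloc'(\quant)/\samples)$. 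Plugging this uniform bound into the bias expression gives the lemma without any Abel summation or Bernstein-polynomial machinery.

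The trade-off is where the second-order smoothness burden lands. The paper's argument implicitly needs control of $\bid''$ (to make the $O((t-\quant)^2)$ term honest), and $\bid''=\val'\alloc'+\val\,\alloc''$ depends on the unknown value distribution. Your route puts the curvature requirement on $Z_y$, which is a known rational function of $\quant$ for rank-based $\alloc,y$, so in principle it can be analyzed explicitly; this is exactly the obstacle you flag, and you are right that the naive bound $\|Z_y''\|_\infty$ picks up powers of $n$. Both routes thus share the same gap about whether the $O(1)$ is universal, but yours isolates it in a place where it is at least computable. If the goal is only to match the paper, the direct pointwise bound on $\btild-\bid$ is shorter; if the goal is to make the constant explicit and distribution-free, your Bernstein reformulation is the more promising starting point.
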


We now focus on the first term in \eqref{eq:error-allpay-y-hat},
namely the integral over the quantile axis of
$Z'_y(\quant)(\bwhat(\quant)-\btild(\quant))$. The approach of
Section~\ref{s:inference-k} does not provide a good upper bound on
this quantity, because a counterpart of Lemma~\ref{lem:Z-bound-1} (see
Appendix~\ref{s:app-proofs}) fails to hold for $Z_y$. Instead, we will
express the integral as a sum over several independent terms, and show
that it is small in expectation.

To this end, we first identify the set of quantiles at which the
function $\bwhat$ ``crosses'' the function $\btild$ from below. This
set is defined inductively. 
Define $i_0=0$. Then, inductively, let
$i_\ell$ be the smallest integer strictly greater than $i_{\ell-1}$
such that
$$\bwhat\left(\frac{i_{\ell}-1}{N}\right)\le\btild\left(\frac{i_\ell-1}{N}\right) \, \text{ and }\,
\bwhat\left(\frac{i_{\ell}}{N}\right)>\btild\left(\frac{i_\ell}{N}\right).$$
Let $i_{k-1}$ be the last integer so defined, and let $i_k=N$. Let $I$
denote the set of indices $\{i_0,\cdots, i_k\}$.
Let $\Tij$ denote the following integral:
$$\Tij = \int_{\quant=i/N}^{\quant=j/N}
Z'_y(\quant)(\bwhat(\quant)-\btild(\quant)) \, \text{d}q$$
Then, our goal is to bound the quantity
$\expect[\hat{\bid}]{|\T_{0,N}|}$ where $\T_{0,N}$ can be written
as the sum:
$$ \T_{0,N} = \sum_{\ell=0}^{\ell=k-1}\T_{i_{\ell}, i_{\ell+1}}. $$
We now claim that conditioned on $I$ and the maximum bid error, this
is a sum over independent random variables. In the following, let $G$
denote the bid distribution, and $\hat{G}$ the empirical bid
distribution.

\begin{lemma}
\label{lem:independent-sums}
Conditioned on the set of indices $I$ and
$\Delta=\sup_\quant|\hat{G}(\bid(\quant))-G(\bid(\quant))|$, over the
randomness in the bid sample, the random variables $\T_{i_{\ell},
  i_{\ell+1}}$ are mutually independent.
\end{lemma}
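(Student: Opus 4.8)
The plan is to show that once we condition on the index set $I = \{i_0, \dots, i_k\}$ and on the maximum deviation $\Delta = \sup_q |\hat{G}(\bid(q)) - G(\bid(q))|$, the random variable $\T_{i_\ell, i_{\ell+1}}$ is a deterministic function of only those order statistics $\hat{\bid}_i$ with $i \in \{i_\ell+1, \dots, i_{\ell+1}\}$ — i.e., of the empirical bid function restricted to the quantile interval $[i_\ell/N, i_{\ell+1}/N]$ — and then to argue that the conditioning decouples these blocks. The first part is essentially the definition: $\T_{i_\ell,i_{\ell+1}} = \int_{i_\ell/N}^{i_{\ell+1}/N} Z_y'(q)(\bwhat(q)-\btild(q))\,\mathrm dq$, and since $\bwhat$ is the piecewise-constant step function taking value $\hat{\bid}_i$ on $[(i-1)/N, i/N)$ while $Z_y'$ and $\btild$ are deterministic, $\T_{i_\ell,i_{\ell+1}}$ depends on the sample only through $\hat{\bid}_{i_\ell+1}, \dots, \hat{\bid}_{i_{\ell+1}}$.

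Next I would invoke the Markov property of order statistics from i.i.d.\ samples: conditioned on the values of two order statistics $\hat{\bid}_a$ and $\hat{\bid}_b$ (with $a < b$), the intermediate order statistics $\hat{\bid}_{a+1}, \dots, \hat{\bid}_{b-1}$ are independent of everything outside this range, and in fact are distributed as the order statistics of $b - a - 1$ i.i.d.\ draws from $G$ conditioned to lie in $[\hat{\bid}_a, \hat{\bid}_b]$. Thus if I knew the endpoint values $\hat{\bid}_{i_\ell}$ for all $\ell$, the blocks would be conditionally independent outright. The subtlety is that I am not conditioning on these endpoint values directly — I am conditioning on the combinatorial data $I$ (the crossing pattern) together with the scalar $\Delta$. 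I would handle this by observing that the event "$I$ is the crossing set" is, for each consecutive pair, a statement purely about the two boundary values $\bwhat((i_\ell-1)/N) \le \btild((i_\ell-1)/N)$ and $\bwhat(i_\ell/N) > \btild(i_\ell/N)$, i.e.\ about $\hat{\bid}_{i_\ell}$ and $\hat{\bid}_{i_\ell - 1}$; and that $\Delta$ is a supremum of $|\hat G(\bid(q)) - G(\bid(q))|$, which by the Glivenko–Cantelli / DKW structure is determined by the order statistics, but crucially its conditioning (for fixed $I$) factors through the boundary order statistics at the $i_\ell$'s together with the within-block maxima. The cleanest route is to first condition on \emph{all} the boundary order statistics $\{\hat{\bid}_{i_\ell-1}, \hat{\bid}_{i_\ell}\}_\ell$; then by the Markov property the within-block samples are independent across $\ell$, hence so are the $\T_{i_\ell, i_{\ell+1}}$ and also the within-block contributions to $\Delta$; then further conditioning on the function $I$ (which is a function of the boundary values only) and on $\Delta$ (whose law, given the boundaries, is a product over blocks of the within-block contributions) preserves independence by a standard lemma: if $X_1, \dots, X_k$ are independent and $h$ is a function of the tuple that factors as $h(X_1,\dots,X_k) = g(h_1(X_1), \dots, h_k(X_k))$, then the $X_\ell$ remain independent conditioned on the $h_\ell(X_\ell)$.

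The main obstacle I anticipate is precisely this last decoupling: making rigorous that $\Delta$ — a supremum over \emph{all} quantiles — can be written, conditioned on the boundary order statistics, as a coordinate-wise (block-wise) function, so that conditioning on it does not re-entangle the blocks. I would address this by noting that on each interval $[i_\ell/N, i_{\ell+1}/N]$ the quantity $\sup_q |\hat G(\bid(q)) - G(\bid(q))|$ restricted to that interval depends only on the within-block order statistics together with the (already fixed) boundary values, so $\Delta = \max_\ell \Delta_\ell$ with $\Delta_\ell$ block-local; then the product-structure-preservation lemma applies with $g = \max$. Everything else — the step-function representation of $\bwhat$, the determinism of $Z_y'$ and $\btild$, and the Markov property of uniform/general order statistics — is standard and I would cite it rather than reprove it.
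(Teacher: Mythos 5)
Your overall architecture---reduce $\mathbf{T}_{i_\ell,i_{\ell+1}}$ to a function of the within-block order statistics and then decouple the blocks---matches the paper's, but your decoupling mechanism is different and it has a gap. You separate blocks by conditioning on the \emph{random} boundary order statistics $\hat{b}_{i_\ell}$ and invoking the Markov property of order statistics. The paper instead uses the fact that the crossing conditions pin each block inside a \emph{deterministic} interval: the crossing at $i_\ell$ forces $\hat{b}_{i_\ell+1}>\tilde{b}(i_\ell/N)$, the crossing at $i_{\ell+1}$ forces $\hat{b}_{i_{\ell+1}}\le \tilde{b}((i_{\ell+1}-1)/N)$, so by sortedness every block-$\ell$ bid lies in the fixed interval $(\tilde{b}(i_\ell/N),\tilde{b}((i_{\ell+1}-1)/N)]$, and these intervals are disjoint across $\ell$ since $\tilde{b}$ is nondecreasing. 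For i.i.d.\ samples, conditioning on how many land in each cell of a deterministic partition leaves the cells' contents mutually independent, and the event $\{I=\cdot\}$ factors as a product of block-local events (note that the ``no intermediate crossing'' clauses, which you describe as involving only the two boundary values, in fact constrain every index in the block via the minimality of $i_\ell$---they are still block-local, though). Your route proves conditional independence given the boundary \emph{values} and then asks for it given only $I$; conditional independence is not preserved under coarsening of the conditioning ($X\perp Y\mid W$ does not imply $X\perp Y\mid g(W)$), and the boundary values are not determined by $I$, so marginalizing over them can re-entangle the blocks. The deterministic thresholds $\tilde{b}(i_\ell/N)$ are precisely what lets the paper sidestep this.

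On $\Delta$: you are right to flag that conditioning on a global supremum threatens to re-couple the blocks, and the paper's own proof is in fact silent on this point. But your fix does not work as stated: independent $X_1,\dots,X_k$ remain independent conditioned on the \emph{vector} $(h_1(X_1),\dots,h_k(X_k))$, not conditioned on $\max_\ell h_\ell(X_\ell)$ alone (two i.i.d.\ uniforms conditioned on their maximum are dependent, since exactly one must attain it). The usable repair is to establish independence---and the Hoeffding bound of Lemma~\ref{lem:deviation-prob}---conditionally on $I$ and the finer block-local vector $(\Delta_\ell)_\ell$, observe that the resulting tail bound depends on this vector only through $\max_\ell\Delta_\ell=\Delta$, and then take conditional expectations to descend to the coarser conditioning on $(I,\Delta)$. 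That salvages the downstream deviation bound even though the literal independence claim, conditioned on the scalar $\Delta$, is delicate.
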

Then we apply Chernoff-Hoeffding bounds, coupled with the approach
from Section~\ref{s:inference-k} to bound each individual
$\T_{i_{\ell}, i_{\ell+1}}$, to obtain a bound on the proability that
$\expect[\hat{\bid}]{|\T_{0,N}|\, | \, I, \Delta}$ exceeds some value $a>0$.
\begin{lemma}
\label{lem:deviation-prob}
  With $I=\{i_0,\cdots, i_k\}$ and $T_{i,j}$ defined as above, for any $a>0$,
  $$\Pr\left[|\T_{0,N}|\ge a\, | \, I, \Delta\right]\le
  \text{exp}\left(-\frac{a^2}{n(40\Delta C)^2}\right),\,\,\,
  \text{where } C = \sup_{\quant}\{y'(\quant)\}\log\max\left\{\sup_{\quant: y'(\quant)\ge 1}\frac{\alloc'(\quant)}{y'(\quant)},
\sup_{\quant}\frac{y'(\quant)}{\alloc'(\quant)} \right\}.$$
\end{lemma}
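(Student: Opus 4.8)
The plan is to condition on the index set $I = \{i_0, \ldots, i_k\}$ and on $\Delta = \sup_\quant|\hat{G}(\bid(\quant)) - G(\bid(\quant))|$ and apply a Chernoff--Hoeffding bound to the sum $\T_{0,N} = \sum_{\ell=0}^{k-1} \T_{i_\ell, i_{\ell+1}}$. By Lemma~\ref{lem:independent-sums}, conditioned on $I$ and $\Delta$, the summands $\T_{i_\ell, i_{\ell+1}}$ are mutually independent, so the only remaining ingredient is a deterministic bound on the magnitude of each $\T_{i_\ell, i_{\ell+1}}$ in terms of $\Delta$ and $C$. Once we have $|\T_{i_\ell, i_{\ell+1}}| \le 40\, \Delta\, C$ (or a comparable constant), Hoeffding's inequality over $k \le n$ independent, mean-zero, bounded terms gives $\Pr[\,|\T_{0,N}| \ge a \mid I, \Delta\,] \le \exp\!\left(-a^2 / (2 \sum_\ell (80\Delta C)^2)\right) \le \exp\!\left(-a^2/(n(40\Delta C)^2)\right)$ after adjusting constants; note the summands have mean zero conditioned on $I,\Delta$ since $\btild$ is exactly the conditional expectation of $\bwhat$ and the crossing structure of $I$ does not bias the contributions within a block, which is precisely what Lemma~\ref{lem:independent-sums} is set up to guarantee.

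The key step, and the main obstacle, is the per-block bound $|\T_{i_\ell, i_{\ell+1}}| \le 40\, \Delta\, C$. Here I would reuse the splitting trick from Section~\ref{s:inference-k}: write
$$
|\T_{i,j}| = \left|\int_{i/N}^{j/N} Z'_y(\quant)(\bwhat(\quant) - \btild(\quant))\, \mathrm{d}q\right| \le \int_{i/N}^{j/N} \frac{(\log(1+Z_y(q)))^\alpha}{Z_y(q)}|Z'_y(q)|\, \mathrm{d}q \;\cdot\; \sup_q \left|\frac{Z_y(q)}{(\log(1+Z_y(q)))^\alpha}(\bwhat(q) - \btild(q))\right|.
$$
The second supremum is controlled by relating the bid error $\bwhat(q) - \btild(q)$ to the empirical-process quantity $\Delta$ via the inference equation $\val = \bid'/\alloc'$ (Equation~\eqref{eq:ap-inf}) and the change of variables between quantile space and bid space, just as in the proof of Theorem~\ref{th: all pay}; this converts it into $\Delta$ times a factor involving $\sup_q\{y'(q)/\alloc'(q)\}$-type terms that are absorbed into $C$ after the logarithm is extracted. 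The first integral, $\int \frac{(\log(1+Z_y))^\alpha}{Z_y}|Z'_y|\,\mathrm{d}q$, is where the choice $\alpha$ (as in Theorem~\ref{th: all pay}) and the logarithmic dependence come from: substituting $u = Z_y(q)$ turns it into $\int \frac{(\log(1+u))^\alpha}{u}\,\mathrm{d}u$ over the range of $Z_y$, which integrates to something like $(\log(1 + \sup_q Z_y))^{\alpha+1}$, and $\sup_q Z_y$ is bounded in terms of $\sup_q\{y'(q)\}$ and the ratio $\sup_q\{y'(q)/\alloc'(q)\}$ — again exactly the quantities appearing in $C$. The subtlety is that unlike in Section~\ref{s:inference-k}, we do not have the clean monotonicity of $Z_k$ (the analogue of Lemma~\ref{lem:Z-bound-1} fails for $Z_y$), so the integral must be bounded block-by-block using only the crossing property defining $i_\ell$ — namely that $\bwhat - \btild$ has a controlled sign pattern at the block endpoints — together with the bound $|\bwhat(q) - \btild(q)| \le \Delta \cdot \sup_q\{1/\alloc'(q)\}$ which holds pointwise, not just in the weighted sense.

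Putting these together: each $|\T_{i_\ell, i_{\ell+1}}|$ is at most a universal constant times $\Delta \cdot C$, where $C$ collects $\sup_q\{y'(q)\}$ and the logarithm of the relevant slope ratios exactly as stated; tracking constants through the two factors above yields the constant $40$. Feeding $|\T_{i_\ell,i_{\ell+1}}| \le 40\Delta C$ and $k \le n$ into the Chernoff--Hoeffding inequality for bounded independent mean-zero variables then gives the claimed tail bound $\exp(-a^2/(n(40\Delta C)^2))$. I expect the bookkeeping for the constant and the careful verification that the conditional means vanish (so that Hoeffding applies in its centered form) to be the fussiest parts, but the mathematical content is entirely the per-block estimate, which is a localized version of the argument already carried out for Theorem~\ref{th: all pay}.
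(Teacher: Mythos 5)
Your setup is the same as the paper's (condition on $I$ and $\Delta$, use Lemma~\ref{lem:independent-sums} for independence and the fact that $\btild=\expect{\bwhat}$ for mean zero, then apply Chernoff--Hoeffding with a per-block bound of $40\Delta C$), and your sketch of the per-block bound --- localize the weighting trick of Theorem~\ref{th: all pay} to each block, using that $\bwhat-\btild$ has a controlled sign pattern there --- is essentially what the paper does: each block splits into one piece where $\bwhat-\btild\ge 0$ and one where it is $\le 0$, each contributing at most $20\Delta C$, hence $\tau_\ell\le 40\Delta C$.

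The genuine gap is in how you feed this into Hoeffding. You assert that the number of blocks satisfies $k\le n$, and you divide by $\sum_\ell \tau_\ell^2 \approx k(40\Delta C)^2$ on that basis. But $k$ counts the up-crossings of $\bwhat$ over $\btild$, which has nothing to do with the number of positions $n$; it can be as large as order $N$, so your argument only yields $\exp\bigl(-a^2/(k(40\Delta C)^2)\bigr)$ with $k$ potentially $\Theta(N)$ --- far weaker than claimed. The paper gets the factor $n$ in the exponent by a different route: it bounds
$\sum_\ell \tau_\ell^2 \le \bigl(\sup_\ell \tau_\ell\bigr)\sum_\ell \tau_\ell$, notes $\sum_\ell \tau_\ell \le \int_0^1 |Z'_y(\quant)|\,|\bwhat(\quant)-\btild(\quant)|\,\mathrm{d}\quant$, and controls this total-variation integral by $40n\Delta C$ using the structural fact that $Z'_y$ changes sign at most $2n$ times (since $Z_y$ is a convex combination of the $n$ functions $Z_k$, each of which has a single local maximum by Lemma~\ref{lem:Z-bound-1}). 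That yields $\sum_\ell \tau_\ell^2 \le n(40\Delta C)^2$ regardless of how many blocks there are. This sign-change counting for $Z'_y$ is the missing idea in your proposal; without it (or some substitute bounding $\sum_\ell\tau_\ell$ independently of the number of crossings), the stated exponent cannot be obtained.
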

Combining this lemma with an absolute bound on $|\T_{0,N}|$, and
removing the conditioning on $I$ and $\Delta$, we obtain the
Theorem~\ref{thm:inference-y}.
\end{proofsketch}

\section{Simulation evidence}

\label{s:simulations}

We now present some simulation evidence to support our theoretical
results. Our focus will be the inference of the revenue of a position
auction B using bid data from mixed auction C $= (1-\epsilon)\,$A$ +
\epsilon\, $B where A is another position auction. Recall that $x_A$
denotes the allocation rule of A, $x_B$ the allocation rule of B,
and $x_C(q) = (1-\epsilon)\,x_A(q) + \epsilon\,x_B(q)$ is the
allocation rule of C.

We consider the following three designs.
\begin{itemize}
\item {\bf Design 1:} Auction A is the one-unit auction, $x_A(q) = q^{n-1}$; and auction B is the uniform-stair position auction, $x_B(q)=q$.
\item {\bf Design 2:} Auction A is the uniform-stair position auction, $x_A(q) = q$; and Auction B is the one-unit auction, $x_B(q)=q^{n-1}$ 
\item {\bf Design 3:} Auction A is the $(n-1)$-unit auction, $x_A(q) = 1-(1-q)^{n-1}$; and auction B is the one-unit auction, $x_B(q)=q^{n-1}$.
\end{itemize}
The uniform-stair position auction is given by position weights
$\wals$ with $\walk = \frac{n-k}{n-1}$ for each $k$.  For each design,
the values of the bidders are drawn the beta distribution with
parameters $\alpha=\beta=2$.  This distribution of values is supported
on $[0,1]$; it is unimodal with the mode and the mean at $1/2$ and it
is symmetric about the mean.

\subsection*{Methodology}

We perform simulations to calculate the mean absolute deviation of our
estimator $\hat{P}_B$ for the revenue of auction B with its expected
revenue $P_B$.  The allocation rules $\alloc_B$ and $\alloc_C$, their
derivatives $\alloc'_B$ and $\alloc'_C$, and the revenue curve $\rev$
are calculated analytically.  The expected revenue $P_B$ is calculated
from the revenue curve $\rev$ and $\alloc'_B$ by
equation~\eqref{eq:bne-rev} via numerical integration (i.e., by
averaging the values of $\rev(\quant)\,\alloc'_B(\quant)$ on a grid).
The equilibrium bids in auction C for values on a uniform grid are
calculated from equation~\eqref{eq:ap-inf} via numerical integration
on a grid.  Each simulation draws $N$ bids from this set of bids with
replacement, the estimated revenue $\hat{P}_B$ is calculated from
Definition~\ref{d:estimator}, and the mean absolute deviation is
calculated by averaging $|P_B - \hat{P}_B|$ over 1000 Monte Carlo
simulations.

\subsection*{Results and observations}
Theorems~\ref{th: all pay} and \ref{thm:inference-y} give the following upper bounds on the
mean absolute deviation of our revenue estimator for the three designs
considered above:
\begin{itemize}
\item {\bf Design 1:} $\frac{40\sqrt{n\log n}}{\sqrt{N}}
\log\max\left(n,\frac 1\eps\right) + O\left(\frac nN\right)$.
\item {\bf Design 2:} $\frac{40n}{\sqrt{N}} \log \frac1\eps$.
\item {\bf Design 3:} $\frac{40n}{\sqrt{N}} \log\frac 1\eps$.
\end{itemize}
In Figure~\ref{table:MC} we report our empirically observed mean
absolute error in revenue for each of the three designs; the auction
mixture is set as $\epsilon = .001$ and parameters $n$ and $N$ are
varied.  In order to discern the dependence of the error on $N$ and
$n$, the values in Figure~\ref{table:MC} are normalized by the factor
$\sqrt{N/n}$. By replicating the Monte Carlo sampling we ensured that
the Monte Carlo sample size leads to the relative error of at most
6\%. The last column of the table reports the values of the above
bounds for the corresponding parameter values.

\begin{figure}
\begin{center}
\begin{tiny}
\begin{tabular}{|c|cccccc|c|}
\hline
\multicolumn{8}{|c|}{{\bf Design 1:} $x_A(q)=x^{(1:n)}(q)$ and $x_B(q)=q$.}\\
\hline
$n=$&\multicolumn{6}{|c|}{$N=$}&{\tiny Theorem 4.5}\\
        \cline{2-7}
         &$2$&$10^1$&$10^2$&$10^3$&$10^4$&$10^5$&{\tiny upper bound}\\
        \cline{2-8}
$2^2$	&	1.2534	&	0.3874	&	0.3625	&	0.3573	&	0.3535	&	0.3481	&	5.4221	\\
$2^3$	&	0.3915	&	0.4336	&	0.4829	&	0.4994	&	0.5041	&	0.4918	&	4.6957	\\
$2^4$	&	0.3025	&	0.3634	&	0.3982	&	0.4045	&	0.4088	&	0.4187	&	4.6166	\\
$2^5$	&	0.1831	&	0.2690	&	0.2825	&	0.2821	&	0.2798	&	0.2901	&	4.5622	\\
$2^6$	&	0.1437	&	0.2369	&	0.1929	&	0.1917	&	0.1898	&	0.1890	&	4.2406	\\
$2^7$	&	0.1316	&	0.1902	&	0.1590	&	0.1432	&	0.1407	&	0.1441	&	3.7786	\\
$2^8$	&	0.1276	&	0.1462	&	0.1490	&	0.1248	&	0.1226	&	0.1198	&	3.2644	\\
$2^9$	&	0.1247	&	0.1267	&	0.1488	&	0.1207	&	0.1100	&	0.1150	&	2.7543	\\
$2^{10}$	&	0.1215	&	0.1176	&	0.1616	&	0.1165	&	0.1114	&	0.1124	&	0.5702	\\
\hline\hline
\multicolumn{8}{|c|}{{\bf Design 2:} $x_A(q)=q$ and $x_B(q)=x^{(1:n)}(q)$.}\\
\hline
$n=$&\multicolumn{6}{|c|}{$N=$}&{\tiny Theorem 4.5}\\
        \cline{2-7}
         &$2$&$10^1$&$10^2$&$10^3$&$10^4$&$10^5$&{\tiny upper bound}\\
        \cline{2-8}
$2^2$	&	0.1416	&	0.0716	&	0.0586	&	0.0589	&	0.0578	&	0.0590	&	9.2103	\\
$2^3$	&	0.1281	&	0.0789	&	0.0637	&	0.0643	&	0.0670	&	0.0624	&	9.2103	\\
$2^4$	&	0.0790	&	0.0821	&	0.0596	&	0.0584	&	0.0591	&	0.0583	&	9.2103	\\
$2^5$	&	0.0427	&	0.0687	&	0.0488	&	0.0472	&	0.0452	&	0.0490	&	9.2103	\\
$2^6$	&	0.0224	&	0.0483	&	0.0369	&	0.0337	&	0.0337	&	0.0355	&	9.2103	\\
$2^7$	&	0.0116	&	0.0259	&	0.0290	&	0.0237	&	0.0232	&	0.0236	&	9.2103	\\
$2^8$	&	0.0059	&	0.0132	&	0.0230	&	0.0162	&	0.0157	&	0.0156	&	9.2103	\\
$2^9$	&	0.0030	&	0.0067	&	0.0186	&	0.0115	&	0.0107	&	0.0104	&	9.2103	\\
$2^{10}$	&	0.0015	&	0.0034	&	0.0107	&	0.0094	&	0.0084	&	0.0081	&	9.2103	\\
\hline\hline
\multicolumn{8}{|c|}{{\bf Design 3:} $x_A(q)=x^{(n-1:n)}(q)$ and $x_B(q)=x^{(1:n)}(q)$.}\\
\hline
$n=$&\multicolumn{6}{|c|}{$N=$}&{\tiny Theorem 4.5}\\
        \cline{2-7}
         &$2$&$10^1$&$10^2$&$10^3$&$10^4$&$10^5$&{\tiny upper bound}\\
        \cline{2-8}
$2^2$	&	0.1522	&	0.1978	&	0.2127	&	0.2104	&	0.2113	&	0.2101	&	9.2103	\\
$2^3$	&	0.1177	&	0.1200	&	0.1061	&	0.1056	&	0.1064	&	0.0998	&	9.2103	\\
$2^4$	&	0.0759	&	0.1116	&	0.1067	&	0.1046	&	0.1019	&	0.1029	&	9.2103	\\
$2^5$	&	0.0424	&	0.0866	&	0.0817	&	0.0798	&	0.0809	&	0.0800	&	9.2103	\\
$2^6$	&	0.0224	&	0.0499	&	0.0604	&	0.0602	&	0.0604	&	0.0581	&	9.2103	\\
$2^7$	&	0.0116	&	0.0259	&	0.0474	&	0.0443	&	0.0434	&	0.0447	&	9.2103	\\
$2^8$	&	0.0059	&	0.0132	&	0.0356	&	0.0323	&	0.0324	&	0.0322	&	9.2103	\\
$2^9$	&	0.0030	&	0.0067	&	0.0209	&	0.0226	&	0.0226	&	0.0235	&	9.2103	\\
$2^{10}$	&	0.0015	&	0.0034	&	0.0107	&	0.0171	&	0.0168	&	0.0161	&	9.2103\\
\hline\hline
\end{tabular}
\caption[!ht]{Mean absolute deviation for $\widehat{P}_{B}$ across Monte-Carlo simulations
normalized by $\sqrt{N}/n$.}
\label{table:MC}
\end{tiny}
\end{center}
\end{figure}

We make the following observations:
\begin{itemize}
\item {\bf Dependence on $N$:} Per our theoretical bound and
  normalization, we expect the values reported in the table to stay
  constant across different numbers of samples $N$. The table
  validates this: for all three designs, values along the rows in the
  table do not vary significantly.
\item {\bf Dependence on $n$:} Per our theoretical bound and
  normalization, we expect the values reported in the table to
  increase slowly with $n$ (logarithmically for Design 1, and
  sublinearly for Designs 2 and 3). We observe instead that values
  decrease with $n$ (along columns in the table). This indicates that
  the dependence of our theoretical upper bound on $n$ is loose and the
  true dependence of error on $n$ is much smaller.
\item  The table also directly contrasts the empirical performance
of our estimator with the theoretical upper bound given in Theorem 4.5.
The bound exceeds the measured mean absolute error, suggesting that 
it can be tightened further.
\end{itemize}

\begin{figure}
\begin{center}
\includegraphics[height=2.5in]{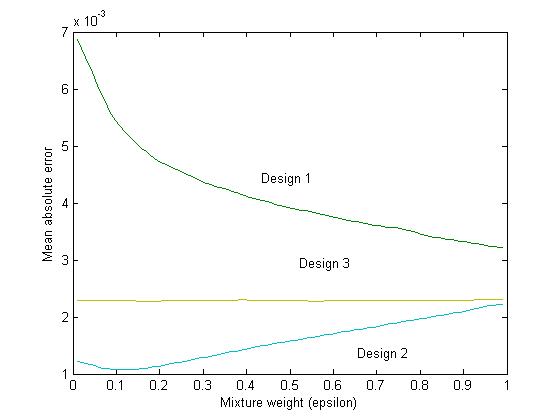}
\caption[!ht]{Dependence of the relative median absolute
error from the mixture weight $\epsilon$.}
\label{figure:eps}
\end{center}
\end{figure}

To illustrate the dependence of the estimation error on the choice of
the mixture weight $\epsilon$ for the three considered designs. We fix
the number of agents $n=32$ and the sample size $N=1000$ and 
vary mixture probability $\epsilon$ between 0 an 1, exclusive.  In
Figure~\ref{figure:eps} we demonstrate the dependence of the median
absolute error computed as the ratio of the median absolute error to
estimated revenue.  The figure demonstrates that for Designs~1 and~3,
error decreases or remains constant as $\epsilon$ increases. This is
consistent with our theoretical bound. On the other hand, for
Design~2, after a small initial decrease, the error surprisingly
increases with $\epsilon$. This increase is not directly captured by
our theoretical bound.\footnote{Our theoretical bound nevertheless
  applies at all values of $\epsilon$; the seeming discrepancy arises
  because relative to the actual error, our bound is looser for small
  $\epsilon$ than for large $\epsilon$.}  Intuitively, the auction A
in Design 2 is much better for inferring the revenue of B than B
itself, because the bid distribution it generates has high density on
its entire range.


\section{Applications to A/B testing}
\label{s:ab-testing}

We now discuss applications of the inference approach we developed in Section~\ref{s:param-inf}.

\subsection{Estimating revenues of novel mechanisms}
\label{s:ab-revenue}

Recall the setting of an ideal A/B test: an auction house running
auction $A$ and wanting to determine the revenue of a novel mechanism
$B$ mixes in the auction $B$ with some probability $\eps>0$. Suppose
that in doing so, the auction house obtains $\eps N$ bids in response
to the auction $B$ out of a total of $N$ bids, the revenue of $B$ can
be estimated within an error bound of
\begin{align}
\label{error-ideal}
\Theta\left(\frac 1{\sqrt{\eps}}\right)\frac{\sup_{\quant}\{x_B'(\quant)\}}{\sqrt{N}}.
\end{align}
In practice, however, instead of obtaining bids in response to $B$
alone, the seller obtains bids in response to the aggregate mechanism
$C=(1-\eps) A +\eps B$. We can then use \eqref{AB-test-estimator-sum}
to estimate the revenue of $B$.
As a consequence of Theorem~\ref{thm:inference-y}, and noting that for
positions auctions with $n$ positions, $x_C'(\quant)\le n$ and
$x_B'(\quant)/x_C'(\quant)\le 1/\eps$ for all quantiles $\quant$, we
obtain the following error bound.
\begin{corollary}
\label{cor1} 
The revenue of a rank based mechanism $B$ can be estimated from $N$
bids of a rank-based mechanism $C=(1-\eps)A+\eps B$ with absolute error
bounded by
\begin{align}
\label{error-true}
O(1) \sqrt{n\log n} \log (n/\eps)\,\,\frac{\sup_{\quant}\{x_B'(\quant)\}}{\sqrt{N}}.
\end{align}
\end{corollary}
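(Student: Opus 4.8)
The plan is to derive Corollary~\ref{cor1} directly from Theorem~\ref{thm:inference-y} by instantiating $y = x_B$ and, as the allocation rule of the run auction, $x = x_C$, and then bounding the three suprema that appear in that theorem purely in terms of $n$ and $\eps$. I would state the argument for the all-pay format, which is what Theorem~\ref{thm:inference-y} and Definition~\ref{d:estimator} cover; the first-price case is entirely analogous. After the substitution the proof is essentially mechanical, resting on two facts: $C$ is itself a rank-based position auction with $n$ positions, and $x_C' \ge \eps\, x_B'$ pointwise.

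First I would verify that $C = (1-\eps)A + \eps B$ is a rank-based position auction: its position weights are the $\eps$-convex combination of those of $A$ and $B$, again a non-increasing sequence in $[0,1]$, and its quantile-space allocation rule is exactly $x_C = (1-\eps)x_A + \eps x_B$. Hence Fact~\ref{fact:max-slope} gives $\sup_\quant x_C'(\quant) \le n$. Since the allocation rule of a rank-by-bid position auction is non-decreasing in the quantile, $x_A'(\quant) \ge 0$, so $x_C'(\quant) = (1-\eps)x_A'(\quant) + \eps\, x_B'(\quant) \ge \eps\, x_B'(\quant)$ for every $\quant$; thus $\sup_\quant\{x_B'(\quant)/x_C'(\quant)\} \le 1/\eps$ (reading the ratio as $0$ wherever $x_B'(\quant)=0$, hence $x_C'(\quant)=0$). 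On the set $\{\quant : x_B'(\quant)\ge 1\}$ we have $x_C'(\quant)/x_B'(\quant) \le x_C'(\quant) \le n$. Therefore the argument of the logarithm in Theorem~\ref{thm:inference-y} is at most $\max\{n,\,1/\eps\} \le n/\eps$ (using $\eps \le 1$ and $n \ge 1$), and the log factor is bounded by $\log(n/\eps)$.

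Substituting these bounds into Theorem~\ref{thm:inference-y} with $y=x_B$, $\alloc = x_C$, the first term becomes $\tfrac{40}{\sqrt N}\sqrt{n\log n}\,\sup_\quant\{x_B'(\quant)\}\,\log(n/\eps)$, which is exactly the claimed bound~\eqref{error-true}. The second term is $\tfrac{O(1)}{N}\,\sup_\quant\{x_C'(\quant)\}\,\sup_\quant\{x_B'(\quant)/x_C'(\quant)\} \le \tfrac{O(1)}{N}\cdot\tfrac{n}{\eps} = O(n/(\eps N))$; carrying a factor $1/N$ rather than $1/\sqrt N$, it is dominated by the first term --- precisely the ``first term dominates'' remark following Theorem~\ref{thm:inference-y} --- and so is absorbed into the $O(1)$ in~\eqref{error-true}.

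There is no real obstacle here: the proof is a substitution into Theorem~\ref{thm:inference-y}. The only points that need care are the (routine) check that a convex combination of rank-based position auctions is again one, so that Theorem~\ref{thm:inference-y} and Fact~\ref{fact:max-slope} apply to $x_C$, and the pointwise inequality $x_C' \ge \eps\, x_B'$, which is exactly where non-negativity --- equivalently, quantile-monotonicity --- of $x_A$ enters. The degenerate case $x_B'(\quant)=0$ is harmless, since the relevant ratios are then $0$, consistent with the conventions of Section~\ref{s:param-inf}.
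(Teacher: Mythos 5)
Your proposal is correct and follows exactly the paper's route: the corollary is obtained by instantiating Theorem~\ref{thm:inference-y} with $y = x_B$ and $\alloc = x_C$, using $\sup_\quant x_C'(\quant) \le n$ and $x_B'(\quant)/x_C'(\quant) \le 1/\eps$ to bound the logarithmic factor by $\log(n/\eps)$, and noting that the second (bias) term is dominated. Your explicit checks --- that $C$ is again a rank-based position auction and that $x_C' \ge \eps\, x_B'$ pointwise --- are exactly the observations the paper states (more tersely) in the sentence preceding the corollary.
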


Relative to the ideal situation described above, our error bound has a
better dependence on $\eps$ and a worse dependence on $n$. Note that
when $\eps$ is very small, our error bound \eqref{error-true} may be
smaller than the ideal bound in \eqref{error-ideal}. This is not
surprising: the ideal bound ignores information that we can learn
about the revenue of $B$ from the $(1-\eps)N$ bids obtained when $B$
is not run.


When $B$ is a multi-unit auction, we obtain a better error bound
using Theorem~\ref{th: all pay} which is closer to the ideal bound in \eqref{error-ideal}.
\begin{corollary}
\label{cor2}
  The revenue of the highest-$k$-bids-win mechanism $B$ can be
  estimated from $N$ bids of a rank-based mechanism $C=(1-\eps)A+\eps
  B$ with absolute error bounded by
\begin{align}
\label{error-true-k}
40 \log (n/\eps)\,\,\frac{\sup_{\quant}\{x_B'(\quant)\}}{\sqrt{N}}.
\end{align}
\end{corollary}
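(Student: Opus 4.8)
The plan is to obtain Corollary~\ref{cor2} as an immediate specialization of Theorem~\ref{th: all pay}. Since $B$ is the highest-$k$-bids-win mechanism, its allocation rule $x_B$ is precisely the rule denoted $\kalloc$ in that theorem, while the auction actually run is $C$, whose allocation rule $x_C=(1-\eps)x_A+\eps x_B$ plays the role of $\alloc$. Instantiating Theorem~\ref{th: all pay} with these choices gives directly
$$\Err{\murevk}\le \frac{40}{\sqrt{\samples}}\,\sup_{\quant}\{x_B'(\quant)\}\,\log\max\left\{\sup_{\quant:\, x_B'(\quant)\ge 1}\frac{x_C'(\quant)}{x_B'(\quant)},\ \sup_{\quant}\frac{x_B'(\quant)}{x_C'(\quant)}\right\},$$
so the only work left is to bound the two suprema inside the logarithm.

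For the first supremum, I would use that $C$ is itself a rank-based position auction on $n$ positions, so Fact~\ref{fact:max-slope} bounds its slope everywhere by $x_C'(\quant)\le n$; hence on the set of quantiles where $x_B'(\quant)\ge 1$ the ratio $x_C'(\quant)/x_B'(\quant)$ is at most $n$. For the second supremum, I would use the mixture structure of $C$: since $x_A'\ge 0$ we have $x_C'(\quant)=(1-\eps)x_A'(\quant)+\eps x_B'(\quant)\ge \eps\,x_B'(\quant)$, and therefore $x_B'(\quant)/x_C'(\quant)\le 1/\eps$ for all $\quant$. Combining, and using $\eps\le 1$ and $n\ge 1$ so that $\max\{n,1/\eps\}\le n/\eps$, the argument of the logarithm is at most $n/\eps$; substituting this back yields the claimed bound $\Err{\murevk}\le 40\log(n/\eps)\,\sup_{\quant}\{x_B'(\quant)\}/\sqrt{\samples}$.

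I do not anticipate a genuine obstacle here: the corollary is essentially a relabeling of Theorem~\ref{th: all pay} together with the two elementary estimates above (one invoking Fact~\ref{fact:max-slope}, one using that a convex combination placing weight $\eps$ on $x_B$ has derivative at least $\eps x_B'$). The one point deserving a moment of care is that the first supremum ranges only over quantiles with $x_B'(\quant)\ge 1$; this restriction is exactly what lets us fall back on the crude bound $x_C'(\quant)\le n$, since without it the ratio $x_C'/x_B'$ could be unbounded wherever $x_B'$ is small. For completeness I would also point out that Corollary~\ref{cor1}, which handles a general rank-based $B$, follows the same template but invokes Theorem~\ref{thm:inference-y} in place of Theorem~\ref{th: all pay}, which is the source of the additional $\sqrt{n\log n}$ factor there.
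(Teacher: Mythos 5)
Your proposal is correct and matches the paper's own derivation: Corollary~\ref{cor2} is obtained exactly by instantiating Theorem~\ref{th: all pay} with $\kalloc = x_B$ and $\alloc = x_C$, then using $x_C'(\quant)\le n$ (Fact~\ref{fact:max-slope}) for the first supremum and $x_C'\ge\eps x_B'$ for the second, so the argument of the logarithm is at most $n/\eps$. Nothing is missing.
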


\subsection{Universal B test}
\label{s:universal}

We now consider the problem of estimating all of the multi-unit
revenues $\murevk$ simultaneously from the bids of a single
auction. What properties should the auction C have in order to enable
this?  The following definition formalizes this notion.

\begin{definition} An auction B is a {\em universal B test} if 
for any rank-based auction A, any $\eps>0$, and auction C defined by
$x_C = (1-\eps)x_A +\eps x_B$; all of the multi-unit revenues
$\murevk[1],\ldots,\murevk[n]$ can be estimated from bids of C with
the dependence of the absolute error on $\eps$ bounded by
$O(\log(1/\eps))$.
\end{definition}



In \citet{CHN-14} we showed that it suffices to mix the $k$-unit auction for
every $k$ into C with some small probability. In other words, the
position auction B with position weights $w_k=1-k/n$ is a universal B test.
We will now prove that in fact we can get similar results by mixing in
just a few of the multi-unit auctions. In particular, in order to
estimate $\murevk$ accurately, it suffices to mix in a multi-unit
auction with no more than $k$ units, and another one with no less than
$k$ units. This gives us a more efficient universal B test for
simultaneously inferring all of the multi-unit revenues (see
Corollary~\ref{cor:universal}).

\begin{lemma}
\label{lem:univ}
  The revenue of the highest-$k$-bids-win mechanism B can be
  estimated from $N$ bids of a rank-based all-pay auction C $=$ $(1-2\eps)$A $+\eps
  $B$_1+\eps $B$_2$ where A is an arbitrary rank-based auction, and
  B$_1$ and B$_2$ are the highest-$k_1$-bids-win and
  highest-$k_2$-bids-win auctions respectively, with $k_1\le k\le
  k_2$. The absolute error of the estimate is bounded by
\begin{align*}
40 (n+\log (1/\eps))\,\,\frac{\sup_{\quant}\{x_k'(\quant)\}}{\sqrt{N}}.
\end{align*}
\end{lemma}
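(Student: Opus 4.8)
The plan is to apply Theorem~\ref{th: all pay} to the allocation rule $\kalloc$ of mechanism B, with $\alloc = x_C$ the allocation rule of the mixture C $=(1-2\eps)x_A + \eps x_{k_1} + \eps x_{k_2}$, and to exploit the sandwiching $k_1 \le k \le k_2$ to control the two supremum terms appearing in that bound. Recall Theorem~\ref{th: all pay} gives
$$
\Err{\murevk}\le \frac{40}{\sqrt{\samples}}\,\sup_{\quant}\{\kalloc'(\quant)\}\,\log
\max\left\{\sup_{\quant: \kalloc'(\quant)\ge 1}\frac{x_C'(\quant)}{\kalloc'(\quant)},\ \sup_{\quant}\frac{\kalloc'(\quant)}{x_C'(\quant)}\right\}.
$$
So it suffices to show that the argument of the logarithm is bounded by $e^{n}/\eps$ (up to constants), which would yield $\log(\cdots) \le n + \log(1/\eps) + O(1)$ and hence the claimed bound $40\,(n+\log(1/\eps))\,\sup_q\{x_k'(q)\}/\sqrt{N}$.

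The first step is to bound $\sup_{\quant}\frac{\kalloc'(\quant)}{x_C'(\quant)}$. Here I would use the key monotonicity structure of multi-unit allocation rules in quantile space: since $k \le k_2$, there should be a pointwise comparison of the form $\kalloc'(\quant) \le c\cdot x_{k_2}'(\quant)$ for quantiles where $\quant$ is large, and since $k \ge k_1$, a comparison $\kalloc'(\quant) \le c\cdot x_{k_1}'(\quant)$ for quantiles where $\quant$ is small. (Concretely, $\kalloc'(\quant) = (n-1)\binom{n-2}{k-1}\quant^{n-1-k}(1-\quant)^{k-1}$, a Beta-type density, and these densities are ordered in a way that the one with more units is relatively larger near $\quant=1$ and relatively smaller near $\quant=0$.) Because $x_C'(\quant) \ge \eps\,x_{k_1}'(\quant)$ and $x_C'(\quant) \ge \eps\,x_{k_2}'(\quant)$ everywhere, combining the two regimes gives $\kalloc'(\quant)/x_C'(\quant) \le O(1)/\eps$ uniformly in $\quant$, so $\log$ of this term is $\log(1/\eps)+O(1)$.

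The second step is to bound $\sup_{\quant: \kalloc'(\quant)\ge 1}\frac{x_C'(\quant)}{\kalloc'(\quant)}$. On the set where $\kalloc'(\quant)\ge 1$ the denominator is at least $1$, so this is at most $\sup_{\quant} x_C'(\quant) \le n$ by Fact~\ref{fact:max-slope} (the maximum slope of any rank-based allocation rule is at most $n$). Hence $\log$ of this term is at most $\log n \le n$. Taking the max of the two logarithms gives at most $n + \log(1/\eps) + O(1)$, and plugging into Theorem~\ref{th: all pay} with $\sup_q \kalloc'(q) = \sup_q x_k'(q)$ finishes the proof. The main obstacle is the first step: making the pointwise Beta-density comparisons $\kalloc' \lesssim x_{k_1}'$ and $\kalloc' \lesssim x_{k_2}'$ precise in the correct quantile ranges and showing the two ranges cover $[0,1]$, so that the $1/\eps$ bound holds uniformly; everything else is a direct substitution into the already-proved theorem.
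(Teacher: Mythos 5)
Your overall route is exactly the paper's: plug the mixture $x_C$ into Theorem~\ref{th: all pay}, bound $\sup_{q:\,x_k'(q)\ge 1} x_C'(q)/x_k'(q)$ by $\sup_q x_C'(q)\le n$ via Fact~\ref{fact:max-slope}, and control $\sup_q x_k'(q)/x_C'(q)$ by comparing the Beta-type density $x_k'(q)=(n-1)\binom{n-2}{k-1}q^{n-1-k}(1-q)^{k-1}$ pointwise to $x_{k_1}'$ on one half of $[0,1]$ and to $x_{k_2}'$ on the other. Two details in your first step need fixing, though.

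First, the regime assignment is backwards. Since
\[
\frac{x_k'(q)}{x_j'(q)}=\frac{\binom{n-2}{k-1}}{\binom{n-2}{j-1}}\left(\frac{q}{1-q}\right)^{j-k},
\]
the auction with \emph{fewer} units is the one that is relatively large near $q=1$ (its density carries the smaller power of $1-q$), so for $q\ge 1/2$ you should compare $x_k'$ to $x_{k_1}'$ (using $k_1\le k$), and for $q\le 1/2$ to $x_{k_2}'$ (using $k\le k_2$). Second, and more substantively, the resulting uniform bound is not $O(1)/\eps$: the binomial ratio $\binom{n-2}{k-1}/\binom{n-2}{j-1}$ can be as large as $2^{n-2}$ (take $k\approx n/2$ and $k_2=n-1$), so what you actually get is $\sup_q x_k'(q)/x_C'(q)\le 2^{n}/\eps$. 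This is not a harmless constant --- it is precisely the source of the additive $n$ (rather than $\log n$) in the lemma's bound; the $\log n\le n$ step you apply to the other sup term is not the binding one. Your final inequality survives because $\log(2^{n}/\eps)\le n+\log(1/\eps)$, which is exactly how the paper concludes, but as written your intermediate claim that the ratio is ``$O(1)/\eps$ uniformly, so its log is $\log(1/\eps)+O(1)$'' is false and would imply a strictly stronger bound than the lemma states.
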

\begin{proof}
We begin by noting that for any $j$ and $k$ with $k\le j$,
$$
\frac{\alloc'_k(\quant)}{\alloc'_j(\quant)} = \frac{{j-1\choose
    k-1}}{{{n-1-k}\choose {n-1-j}}} \left(\frac{q}{1-q}\right)^{j-k}.
$$
When $k\le j$ and $q\ge 1/2$, this ratio is less than $2^n$. Likewise,
we can show that when $k\ge j$ and $q\le 1/2$, the ratio
$\frac{\alloc'_k(\quant)}{\alloc'_j(\quant)} $ is less than
$2^n$. Therefore, for any $q$, and C $=(1-2\eps)$A $+\eps$B$_1+\eps$B$_2$
where B$_1$ and B$_2$ are the highest-$k_1$-bids-win and
highest-$k_2$-bids-win auctions respectively, with $k_1\le k\le k_2$,
we have
$$
\frac{\alloc'_k(\quant)}{\alloc'_C(\quant)} \le \frac{1}{\epsilon} 2^n.
$$
Next we note that $\sup_q \alloc'_C(q)\le n$, and therefore, $\sup_{\quant:
    \kalloc'(\quant)\ge 1}\frac{\alloc_C'(\quant)}{\kalloc'(\quant)}\le n$.
Putting these quantities together with Theorem~\ref{th: all pay}, we
get that the absolute error in estimating $\murevk$ from bids drawn from C is at
most 
\begin{align*}
\Err{\murevk}& \le \frac{40}{\sqrt{\samples}}\,\,
\sup_{\quant}\{\kalloc'(\quant)\}\,\, \left(n+\log 1/\eps\right). \qedhere
\end{align*}
\end{proof}

\begin{corollary}
\label{cor:universal}
Let B be the position auction with position weights $w_1=1$, $w_k=1/2$ for
$1< k<n-1$, and $w_n=0$. Then
$N$ bids from a mechanism C with $x_C = (1-\eps)x_A + \eps x_B$, where
A is an arbitrary rank-based auction, can be used to simultaneously
estimate {\em all} of the multi-unit revenues, and consequently all
position auction revenues with absolute error bounded by
\begin{align*}
\frac{40 n(n+\log (1/\eps))}{\sqrt{N}}.
\end{align*}
  \end{corollary}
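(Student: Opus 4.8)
The plan is to deduce Corollary~\ref{cor:universal} directly from Lemma~\ref{lem:univ}, exploiting the observation that the single pair consisting of the $1$-unit and $(n-1)$-unit auctions already brackets every index $k\in\{1,\dots,n-1\}$. First I would note that the position auction B in the statement is precisely the uniform mixture $x_B=\tfrac12\,\kalloc[1]+\tfrac12\,\kalloc[n-1]$ of the one-unit allocation rule $\kalloc[1](q)=q^{n-1}$ and the $(n-1)$-unit allocation rule $\kalloc[n-1](q)=1-(1-q)^{n-1}$: averaging the weight vectors $(1,0,\dots,0)$ and $(1,1,\dots,1,0)$ yields $w_1=1$, $w_k=\tfrac12$ for intermediate $k$, and $w_n=0$, the weights named in the corollary. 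Hence $x_C=(1-\eps)x_A+\eps x_B$ rewrites as $x_C=(1-\eps)x_A+\tfrac{\eps}{2}\,\kalloc[1]+\tfrac{\eps}{2}\,\kalloc[n-1]$, which is exactly the form handled by Lemma~\ref{lem:univ} with mixture parameter $\eps/2$, arbitrary rank-based auction A, and B$_1$, B$_2$ the highest-$1$-bid-win and highest-$(n-1)$-bids-win auctions.

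Next, fix any $k$ with $1\le k\le n-1$ and apply Lemma~\ref{lem:univ} with $k_1=1\le k$ and $k_2=n-1\ge k$. It produces the estimator $\emurevk$ of Definition~\ref{d:estimator} instantiated with $y=\kalloc$ and $x=x_C$, computed from the same $N$ bids of C, with $\Err{\murevk}\le 40\,(n+\log(2/\eps))\,\sup_q\{\kalloc'(q)\}/\sqrt N$. By Fact~\ref{fact:max-slope} the slope of any multi-unit allocation rule is bounded by $n$, so $\sup_q\kalloc'(q)\le n$ and the error is at most $\tfrac{40\,n}{\sqrt N}(n+\log(2/\eps))$; folding the harmless additive $\log 2$ into the $n$ term gives the bound claimed in the corollary. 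The two remaining indices are trivial: $\murevk[0]=\murevk[n]=0$, estimated by the constant $0$ with zero error. Thus all $n+1$ multi-unit revenues are estimated simultaneously, from a single $N$-sample of C, within the stated error.

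Finally I would lift this to arbitrary position auctions. For a position auction with allocation rule $y$ and marginal weights $\margwals$, revenue equivalence gives $P_y=\sum_{k=0}^n\margwalk\,\murevk=\sum_{k=1}^{n-1}\margwalk\,\murevk$, since $\murevk[0]=\murevk[n]=0$. Estimating $P_y$ by $\hat P_y=\sum_{k=1}^{n-1}\margwalk\,\emurevk$ and using linearity of expectation together with the triangle inequality, the expected absolute error of $\hat P_y$ is at most $(\sum_{k=1}^{n-1}\margwalk)\cdot\tfrac{40\,n(n+\log(1/\eps))}{\sqrt N}\le\tfrac{40\,n(n+\log(1/\eps))}{\sqrt N}$, because the marginal weights are nonnegative and sum to at most $1$. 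This is the asserted bound.

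The argument is essentially bookkeeping on top of Lemma~\ref{lem:univ}, so there is no real analytic obstacle; the one genuine point is the first step --- \emph{recognizing} that mixing in just the $1$-unit and $(n-1)$-unit auctions suffices, rather than all $n$ multi-unit auctions as in \citet{CHN-14}, because these two bracket every $k$. The minor details needing care are the $\eps/2$ split (costing only an additive $\log 2$, absorbed into $n$) and the fact that every multi-unit slope is at most $n$, so the per-$\murevk$ bound, and hence the convex-combination bound for general position auctions, picks up exactly one factor of $n$. If one preferred to avoid the constant slack from the even split, one could instead re-run the proof of Lemma~\ref{lem:univ} verbatim for $x_C$ itself: bound $\sup_q\kalloc'(q)/x_C'(q)\le 2^{n+1}/\eps$ by splitting at $q=1/2$ and using the stated ratio of binomial coefficients, bound $\sup_q x_C'(q)\le n$, and feed both into Theorem~\ref{th: all pay}.
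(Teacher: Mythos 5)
Your proof is correct and takes essentially the same route the paper intends: Corollary~\ref{cor:universal} is presented there as an immediate consequence of Lemma~\ref{lem:univ}, obtained exactly as you do by recognizing B as the uniform mixture of the $1$-unit and $(n-1)$-unit auctions (so $k_1=1\le k\le n-1=k_2$ brackets every $k$), invoking the lemma with mixture weight $\eps/2$, bounding $\sup_q x_k'(q)\le n$ via Fact~\ref{fact:max-slope}, and passing to arbitrary position auctions by convexity of the marginal weights. The bookkeeping you flag (the additive $\log 2$ from the even split) is at the same level of constant-slack the paper itself tolerates.
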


\subsection{Comparing revenues}
\label{s:direct-comparison}

We have considered the case where the empirical task was to recover
the revenues for one mechanism ($y$) using the sample of bids
responding to another mechanism ($x$). In many practical situations
the empirical task is simply the verification of whether the revenue
from a given mechanism is higher than the revenue from another
mechanism. Or, equivalently, the task could be to verify whether one
mechanism provides revenue which is a certain percentage above that of
another mechanism. We now demonstrate that this is a much easier
empirical task in terms of accuracy than the task of inferring the
revenue.
\par
Suppose that we want to compare the revenues of mechanisms $B_1$ and
$B_2$ by mixing them in to an incumbent mechanism $A$, and running the
composite mechanism $C = \eps B_1 + \eps B_2 + (1-2\eps) A$.
Specifically, we would like to determine whether $\REV{B_1}>\alpha
\REV{B_2}$ for some $\alpha>0$. Consider a binary classifier
$\widehat{\gamma}$ which is equal to $1$ when $\REV{B_1}>\alpha
\REV{B_2}$ and $0$ otherwise. Let $\gamma={\bf
  1}\{P_{B_1}-\alpha\,P_{B_2}>0\}$ be the corresponding ``ideal"
classifier for the case where the distribution of bids from mechanism
$C$ is known precisely. To evaluate the accuracy of the classifier, we
need to evaluate the probability $\Pr(\widehat{\gamma}=1|\gamma=0)$,
and likewise, $\Pr(\widehat{\gamma}=0|\gamma=1)$. The classifier will
give the wrong output if the sampling noise in estimating
$\hat{P}_{B_1}-\alpha\,\hat{P}_{B_2}$ is greater than
$|P_{B_1}-\alpha\,P_{B_2}|$.

Our main result of this section says that keeping $\alpha$, the
difference $|P_{B_1}-\alpha\,P_{B_2}|$, and the number of positions
$n$ constant, the probability of incorrect output decreases
exponentially with the number of bids $N$. See
Appendix~\ref{s:app-proofs} for a proof.

\begin{theorem}
\label{theorem:sign-AB}
Suppose that $N$ bids from a mechanism $C=\eps B_1+\eps B_2+(1-2\eps)A$ for
arbitrary rank-based mechanisms $B_1$, $B_2$, and $A$, are used to
estimate the classifier $\gamma={\bf 1}\{P_{B_1}-\alpha\,P_{B_2}>0\}$ that
establishes whether the revenue of mechanism $B_1$ exceeds $\alpha$
times the revenue of mechanism $B_2$.  Then the error rate of the binary
classifier is bounded from above by
$$\exp\left(-O\left(\frac{Na^2}{\alpha^2n^3\log(n/\eps)}\right)\right),$$
where $a=|P_{B_1}-\alpha\,P_{B_2}|$.  In other words, once the number of
samples is polynomially large in $n$, the error rate decreases
exponentially with the number of samples.
\end{theorem}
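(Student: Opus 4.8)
The plan is to recycle the decomposition used to prove Theorem~\ref{thm:inference-y}, but to promote the conditionally‑independent‑sums argument (Lemmas~\ref{lem:independent-sums} and~\ref{lem:deviation-prob}) from a first‑moment estimate to an exponential tail bound. This is affordable here precisely because the gap $a=|P_{B_1}-\alpha\,P_{B_2}|$ is held fixed, so we are allowed an error threshold that does not shrink with $N$. Throughout, $C=\eps B_1+\eps B_2+(1-2\eps)A$ is run in the all‑pay format as in Section~\ref{s:param-inf}, so that $\widehat P_{B_1}$ and $\widehat P_{B_2}$ are the estimators of~\eqref{AB-test-estimator-sum} computed from the same $N$ bids of $C$ using the allocation rule $x_C=(1-2\eps)x_A+\eps x_{B_1}+\eps x_{B_2}$.

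Define the empirical classifier $\widehat{\gamma}=\mathbf{1}\{\widehat P_{B_1}-\alpha\,\widehat P_{B_2}>0\}$. Since $\gamma$ and $\widehat{\gamma}$ disagree only when the estimate of $P_{B_1}-\alpha\,P_{B_2}$ lands on the wrong side of $0$,
$$\Pr[\widehat{\gamma}\neq\gamma]\ \le\ \Pr\!\left[\big|(\widehat P_{B_1}-\alpha\,\widehat P_{B_2})-(P_{B_1}-\alpha\,P_{B_2})\big|\ge a\right].$$
First I would split each estimator error as $\widehat P_{B_j}-P_{B_j}=(\widehat P_{B_j}-\expect{\widehat P_{B_j}})+(\expect{\widehat P_{B_j}}-P_{B_j})$. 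Because $x_C'\ge\eps\,x_{B_j}'$ pointwise and $x_C'\le n$ by Fact~\ref{fact:max-slope}, Lemma~\ref{lem:bias-bound} bounds each bias term by $O(n/(\eps N))$, so the total bias $|\expect{\widehat P_{B_1}}-P_{B_1}|+\alpha|\expect{\widehat P_{B_2}}-P_{B_2}|$ is at most $a/2$ once $N=\Omega\big((1+\alpha)n/(\eps a)\big)$. For such $N$ a misclassification forces the combined deviation $\big|(\widehat P_{B_1}-\expect{\widehat P_{B_1}})-\alpha(\widehat P_{B_2}-\expect{\widehat P_{B_2}})\big|$ to exceed $a/2$, hence $|\widehat P_{B_1}-\expect{\widehat P_{B_1}}|\ge a/4$ or $|\widehat P_{B_2}-\expect{\widehat P_{B_2}}|\ge a/(4\alpha)$.

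It therefore suffices to bound, for $j\in\{1,2\}$, the deviation probability of $\widehat P_{B_j}$. Here I would invoke the crossing‑point decomposition of Section~\ref{s:inference-y} with $y=x_{B_j}$ and $x=x_C$: by Lemma~\ref{lem:independent-sums}, conditioned on the crossing set $I$ and on $\Delta=\sup_\quant|\widehat G(\bid(\quant))-G(\bid(\quant))|$, the deviation is a sum of mutually independent, mean‑zero, bounded terms, and Lemma~\ref{lem:deviation-prob} gives $\Pr[\,|\widehat P_{B_j}-\expect{\widehat P_{B_j}}|\ge s\mid I,\Delta\,]\le\exp\!\big(-\Omega(s^2/(n(\Delta C_j)^2))\big)$, where
$$C_j=\sup_\quant\{x_{B_j}'(\quant)\}\,\log\max\!\Big\{\sup_{\quant:\,x_{B_j}'(\quant)\ge1}\tfrac{x_C'(\quant)}{x_{B_j}'(\quant)},\ \sup_\quant\tfrac{x_{B_j}'(\quant)}{x_C'(\quant)}\Big\}\ \le\ n\log(n/\eps),$$
the last inequality using $x_C'\le n$ and $x_{B_j}'/x_C'\le1/\eps$. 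Taking $s=a/4$ resp.\ $s=a/(4\alpha)$ and a union bound over $j$, a misclassification has conditional probability at most $\exp\!\big(-\Omega(a^2/(\alpha^2 n(\Delta C)^2))\big)$ with $C=\max\{C_1,C_2\}\le n\log(n/\eps)$.

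The crux, and the step I expect to be the main obstacle, is removing the conditioning on $\Delta$ with the right powers of $n$ and $N$. Morally, $\widehat P_{B_j}-\expect{\widehat P_{B_j}}$ is, to leading order, an average of $N$ i.i.d.\ bounded quantities — the per‑sample influences obtained by integrating $Z'_{x_{B_j}}$ against the uniform empirical process — which already suggests a clean $\exp(-\Omega(Ns^2/C_j^2))$ tail; the crossing‑point machinery is precisely the rigorous substitute for this linearization, and $\Delta$ is what captures the $O(1/\sqrt N)$ scale of the bid fluctuations. Concretely, $\Delta$ is the Kolmogorov–Smirnov statistic of $N$ uniform samples, so $\Pr[\Delta>t]\le 2e^{-2Nt^2}$; writing $\Pr[\widehat{\gamma}\neq\gamma]\le\Pr[\widehat{\gamma}\neq\gamma,\ \Delta\le t]+\Pr[\Delta>t]$ and using that the conditional bound of the previous paragraph is monotone in $\Delta$, the hypothesis that $N$ is polynomially large in $n$ (and in $1/\eps$, $1/a$, $\alpha$) — precisely $N=\Omega(\alpha^2 n^3\log^2(n/\eps)/a^2)$ — opens a window of admissible thresholds $t$ in which both terms are exponentially small in $N$, yielding $\Pr[\widehat{\gamma}\neq\gamma]\le\exp\!\big(-O(Na^2/(\alpha^2 n^3\log(n/\eps)))\big)$ up to lower‑order logarithmic factors. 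Everything outside this final exchange of expectation and tail is routine bookkeeping on top of Lemmas~\ref{lem:bias-bound}--\ref{lem:deviation-prob} and Fact~\ref{fact:max-slope}.
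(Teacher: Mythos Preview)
Your proposal is correct and follows essentially the same route as the paper: split into bias plus deviation via Lemma~\ref{lem:bias-bound}, invoke the conditional Hoeffding bound of Lemma~\ref{lem:deviation-prob} with $S$ (your $C_j$) bounded by $n\log(n/\eps)$, and then remove the conditioning on $\Delta$. The only cosmetic differences are that the paper absorbs the bias as a subtractive $O(n/(\eps N))$ inside the exponent rather than thresholding $N$ first, and that where you invoke DKW explicitly and carve out a threshold $t$, the paper simply asserts $\Delta<\text{const}/\sqrt{N}$ with high probability via Lemma~\ref{error bid function}; your treatment here is in fact the more careful one, and your remark ``up to lower-order logarithmic factors'' correctly flags that both arguments actually deliver $\log^2(n/\eps)$ rather than $\log(n/\eps)$ in the denominator of the exponent.
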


We obtain a similar error bound when our goal is to estimate which of
$r$ different novel mechanisms obtains the most revenue, for any
$r>1$:

\begin{corollary}
  Suppose that our goal is to determine which of $r$ position
  auctions, $B_1, B_2, \cdots, B_r$, obtains the most revenue while
  running incumbent mechanism $A$, by running each of the novel
  mechanisms with probability $\eps/r$. Then the error probability of
  the corresponding classifier constructed using $N$ bids from
  composite mechanism $C = \sum_{i=1}^{r} \eps/r B_i + (1-\eps)A$ is
  bounded from above by
$$r\exp\left(-O\left(\frac{Na^2}{n^3\log(rn/\eps)}\right)\right),$$
where $a$ is the absolute difference between the revenue obtained by
the best two of the $r$ mechanisms.
\end{corollary}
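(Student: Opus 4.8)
The plan is to reduce the $r$-way comparison to $r-1$ pairwise comparisons and then invoke Theorem~\ref{theorem:sign-AB} together with a union bound. Let $i^\star$ be the index of the revenue-maximizing mechanism, $P_{B_{i^\star}}=\max_i P_{B_i}$ (we may assume this maximizer is unique, since otherwise the gap $a$ between the two largest revenues is $0$ and the claimed bound is trivial). The classifier computes the estimates $\hat P_{B_1},\dots,\hat P_{B_r}$ from the same $N$ bids of $C$ via Definition~\ref{d:estimator} and outputs $\argmax_i \hat P_{B_i}$; it is correct unless $\hat P_{B_j}\ge \hat P_{B_{i^\star}}$ for some $j\neq i^\star$, so
\begin{align*}
\Pr\big[\text{classifier errs}\big]\ \le\ \sum_{j\neq i^\star}\Pr\big[\hat P_{B_{i^\star}}-\hat P_{B_j}\le 0\big].
\end{align*}
Each term on the right is exactly the error probability of the binary classifier of Theorem~\ref{theorem:sign-AB} applied to the pair $(B_{i^\star},B_j)$ with $\alpha=1$, whose ideal value is $\gamma=1$ since $P_{B_{i^\star}}>P_{B_j}$.

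To invoke Theorem~\ref{theorem:sign-AB} on the pair $(B_{i^\star},B_j)$ I would rewrite the composite mechanism as $C=(\eps/r)B_{i^\star}+(\eps/r)B_j+(1-2\eps/r)A'_j$, where $A'_j=\tfrac{1}{1-2\eps/r}\big(\sum_{i\notin\{i^\star,j\}}(\eps/r)B_i+(1-\eps)A\big)$ absorbs all the remaining mechanisms into the incumbent slot. The key (and only nontrivial) point is that $A'_j$ is itself a legitimate rank-based auction: a convex combination of position-auction allocation rules is again the allocation rule of a position auction, because the associated marginal-weight vectors combine as a convex combination of probability distributions over $\{0,\dots,n\}$. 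Hence Theorem~\ref{theorem:sign-AB} applies with incumbent $A'_j$ and mixing probability $\eps/r$ in place of $\eps$, giving
\begin{align*}
\Pr\big[\hat P_{B_{i^\star}}-\hat P_{B_j}\le 0\big]\ \le\ \exp\!\left(-O\!\left(\frac{N a_j^2}{n^3\log(n/(\eps/r))}\right)\right)\ =\ \exp\!\left(-O\!\left(\frac{N a_j^2}{n^3\log(rn/\eps)}\right)\right),
\end{align*}
where $a_j=P_{B_{i^\star}}-P_{B_j}$.

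Finally, if $a$ denotes the gap between the two largest revenues, then $a_j\ge a$ for every $j\neq i^\star$, so each summand is at most $\exp(-O(Na^2/(n^3\log(rn/\eps))))$; summing the at most $r$ such terms yields the claimed bound $r\exp(-O(Na^2/(n^3\log(rn/\eps))))$. Apart from the closure-under-convex-combination observation used to validate $A'_j$, the argument is pure bookkeeping — tracking the substitution $\eps\mapsto\eps/r$ (which is what turns $\log(n/\eps)$ into $\log(rn/\eps)$) and checking that $a$ lower-bounds every pairwise gap involving the top mechanism — so I do not anticipate any substantive obstacle beyond those checks.
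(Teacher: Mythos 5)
Your proposal is correct and follows what is essentially the paper's (implicit) argument: the corollary is stated without a separate proof precisely because it is the union bound over the $r-1$ comparisons against the best mechanism, each handled by Theorem~\ref{theorem:sign-AB} with $\alpha=1$ and mixing probability $\eps/r$ (whence $\log(n/(\eps/r))=\log(rn/\eps)$), using that the leftover mixture $A'_j$ is again a rank-based auction and that the top-two gap $a$ lower-bounds every pairwise gap involving $B_{i^\star}$. Your explicit verification of the closure of position auctions under convex combination and of the sum of coefficients in $A'_j$ is a worthwhile bit of bookkeeping the paper leaves unstated.
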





\bibliographystyle{acmsmall}
\bibliography{uniq,agt,references}

\begin{thebibliography}{}

\bibitem[\protect\citeauthoryear{Athey and Haile}{Athey and
  Haile}{2007}]{athey:2007}
{\sc Athey, S.} {\sc and} {\sc Haile, P.~A.} 2007.
\newblock Nonparametric approaches to auctions.
\newblock {\em Handbook of econometrics\/}~{\em 6}, 3847--3965.

\bibitem[\protect\citeauthoryear{Baliga and Vohra}{Baliga and
  Vohra}{2003}]{BV-03}
{\sc Baliga, S.} {\sc and} {\sc Vohra, R.} 2003.
\newblock Market research and market design.
\newblock {\em Advances in Theoretical Economics\/}~{\em 3,\/}~1.

\bibitem[\protect\citeauthoryear{Blum and Hartline}{Blum and
  Hartline}{2005}]{BH-05}
{\sc Blum, A.} {\sc and} {\sc Hartline, J.~D.} 2005.
\newblock Near-optimal online auctions.
\newblock In {\em Proceedings of the sixteenth annual ACM-SIAM symposium on
  Discrete algorithms}. Society for Industrial and Applied Mathematics,
  1156--1163.

\bibitem[\protect\citeauthoryear{Brown and Morgan}{Brown and
  Morgan}{2009}]{BM-09}
{\sc Brown, J.} {\sc and} {\sc Morgan, J.} 2009.
\newblock How much is a dollar worth? tipping versus equilibrium coexistence on
  competing online auction sites.
\newblock {\em Journal of Political Economy\/}~{\em 117,\/}~4, 668--700.

\bibitem[\protect\citeauthoryear{Cesa-Bianchi, Gentile, and
  Mansour}{Cesa-Bianchi et~al\mbox{.}}{2015}]{CGM-15}
{\sc Cesa-Bianchi, N.}, {\sc Gentile, C.}, {\sc and} {\sc Mansour, Y.} 2015.
\newblock Regret minimization for reserve prices in second-price auctions.
\newblock {\em IEEE Transactions on Information Theory\/}~{\em 61,\/}~1, 549.

\bibitem[\protect\citeauthoryear{Chawla, Hartline, and Nekipelov}{Chawla
  et~al\mbox{.}}{2014}]{CHN-14}
{\sc Chawla, S.}, {\sc Hartline, J.}, {\sc and} {\sc Nekipelov, D.} 2014.
\newblock Mechanism design for data science.
\newblock In {\em Proceedings of the fifteenth ACM conference on Economics and
  computation}. ACM, 711--712.

\bibitem[\protect\citeauthoryear{Chawla and Hartline}{Chawla and
  Hartline}{2013}]{CH-13}
{\sc Chawla, S.} {\sc and} {\sc Hartline, J.~D.} 2013.
\newblock Auctions with unique equilibria.
\newblock In {\em Proceedings of the Fourteenth ACM Conference on Electronic
  Commerce}. EC '13. ACM, New York, NY, USA, 181--196.

\bibitem[\protect\citeauthoryear{Cheng and Parzen}{Cheng and
  Parzen}{1997}]{cheng:97}
{\sc Cheng, C.} {\sc and} {\sc Parzen, E.} 1997.
\newblock Unified estimators of smooth quantile and quantile density functions.
\newblock {\em Journal of statistical planning and inference\/}~{\em 59,\/}~2,
  291--307.

\bibitem[\protect\citeauthoryear{Cole and Roughgarden}{Cole and
  Roughgarden}{2014}]{CR-14}
{\sc Cole, R.} {\sc and} {\sc Roughgarden, T.} 2014.
\newblock The sample complexity of revenue maximization.
\newblock In {\em Proceedings of the 46th Annual ACM Symposium on Theory of
  Computing}. ACM, 243--252.

\bibitem[\protect\citeauthoryear{Cs{\"o}rg{\"o}}{Cs{\"o}rg{\"o}}{1983}]{csorgo:83}
{\sc Cs{\"o}rg{\"o}, M.} 1983.
\newblock {\em Quantile processes with statistical applications}.
\newblock SIAM.

\bibitem[\protect\citeauthoryear{Csorgo and Revesz}{Csorgo and
  Revesz}{1978}]{csorgo:78}
{\sc Csorgo, M.} {\sc and} {\sc Revesz, P.} 1978.
\newblock Strong approximations of the quantile process.
\newblock {\em The Annals of Statistics\/}, 882--894.

\bibitem[\protect\citeauthoryear{Edelman, Ostrovsky, and Schwarz}{Edelman
  et~al\mbox{.}}{2007}]{EOS-07}
{\sc Edelman, B.}, {\sc Ostrovsky, M.}, {\sc and} {\sc Schwarz, M.} 2007.
\newblock Internet advertising and the generalized second-price auction:
  Selling billions of dollars worth of keywords.
\newblock {\em The American Economic Review\/}~{\em 97,\/}~1, 242--259.

\bibitem[\protect\citeauthoryear{Fu, Haghpanah, Hartline, and Kleinberg}{Fu
  et~al\mbox{.}}{2014}]{FHHK-14}
{\sc Fu, H.}, {\sc Haghpanah, N.}, {\sc Hartline, J.}, {\sc and} {\sc
  Kleinberg, R.} 2014.
\newblock Optimal auctions for correlated buyers with sampling.
\newblock In {\em Proceedings of the fifteenth ACM conference on Economics and
  computation}. ACM, 23--36.

\bibitem[\protect\citeauthoryear{Goldberg, Hartline, Karlin, Saks, and
  Wright}{Goldberg et~al\mbox{.}}{2006}]{GHKSW-06}
{\sc Goldberg, A.~V.}, {\sc Hartline, J.~D.}, {\sc Karlin, A.~R.}, {\sc Saks,
  M.}, {\sc and} {\sc Wright, A.} 2006.
\newblock Competitive auctions.
\newblock {\em Games and Economic Behavior\/}~{\em 55,\/}~2, 242--269.

\bibitem[\protect\citeauthoryear{Gomes and Sweeney}{Gomes and
  Sweeney}{2009}]{GS-09}
{\sc Gomes, R.} {\sc and} {\sc Sweeney, K.} 2009.
\newblock Bayes-nash equilibria of the generalized second price auction.
\newblock In {\em Proceedings of the 10th ACM conference on Electronic
  commerce}. ACM, 107--108.

\bibitem[\protect\citeauthoryear{Guerre, Perrigne, and Vuong}{Guerre
  et~al\mbox{.}}{2000}]{guerre}
{\sc Guerre, E.}, {\sc Perrigne, I.}, {\sc and} {\sc Vuong, Q.} 2000.
\newblock Optimal nonparametric estimation of first-price auctions.
\newblock {\em Econometrica\/}~{\em 68,\/}~3, 525--574.

\bibitem[\protect\citeauthoryear{Hartline}{Hartline}{2013}]{har-13}
{\sc Hartline, J.~D.} 2013.
\newblock Bayesian mechanism design.
\newblock {\em Foundations and Trends in Theoretical Computer Science\/}~{\em
  8,\/}~3, 143--263.

\bibitem[\protect\citeauthoryear{Kleinberg and Leighton}{Kleinberg and
  Leighton}{2003}]{KL-03}
{\sc Kleinberg, R.} {\sc and} {\sc Leighton, T.} 2003.
\newblock The value of knowing a demand curve: Bounds on regret for online
  posted-price auctions.
\newblock In {\em Foundations of Computer Science, 2003. Proceedings. 44th
  Annual IEEE Symposium on}. IEEE, 594--605.

\bibitem[\protect\citeauthoryear{Myerson}{Myerson}{1981}]{mye-81}
{\sc Myerson, R.} 1981.
\newblock Optimal auction design.
\newblock {\em Mathematics of Operations Research\/}~{\em 6}, 58--73.

\bibitem[\protect\citeauthoryear{Ostrovsky and Schwarz}{Ostrovsky and
  Schwarz}{2011}]{OS-11}
{\sc Ostrovsky, M.} {\sc and} {\sc Schwarz, M.} 2011.
\newblock Reserve prices in internet advertising auctions: A field experiment.
\newblock In {\em Proceedings of the 12th ACM Conference on Electronic
  Commerce}. EC '11. ACM, New York, NY, USA, 59--60.

\bibitem[\protect\citeauthoryear{Paarsch and Hong}{Paarsch and
  Hong}{2006}]{paarsch}
{\sc Paarsch, H.~J.} {\sc and} {\sc Hong, H.} 2006.
\newblock {\em An introduction to the structural econometrics of auction data}.
  Vol.~1.
\newblock The MIT Press.

\bibitem[\protect\citeauthoryear{Reiley}{Reiley}{2006}]{Ril-06}
{\sc Reiley, D.~H.} 2006.
\newblock Field experiments on the effects of reserve prices in auctions: more
  magic on the internet.
\newblock {\em The RAND Journal of Economics\/}~{\em 37,\/}~1, 195--211.

\bibitem[\protect\citeauthoryear{Segal}{Segal}{2003}]{seg-03}
{\sc Segal, I.} 2003.
\newblock Optimal pricing mechanisms with unknown demand.
\newblock {\em The American economic review\/}~{\em 93,\/}~3, 509--529.

\bibitem[\protect\citeauthoryear{Varian}{Varian}{2007}]{var-06}
{\sc Varian, H.} 2007.
\newblock Position auctions.
\newblock {\em International Journal of Industrial Organization\/}~{\em
  25,\/}~6, 1163--1178.

\end{thebibliography}


\appendix

\section{Inference for Auctions}
\label{app:inference}

The distribution of values, which is unobserved, can be inferred from
the distribution of bids, which is observed.  Once the value
distribution is inferred, other properties of the value distribution
such as its corresponding revenue curve, which is fundamental for
optimizing revenue, can be obtained.  In this section we describe the
basic premise of the inference assuming that the distribution of bids
known exactly.

The key idea behind the inference of the value distribution from the
bid distribution is that the strategy which maps values to bids is a
best response, by equation~\eqref{eq:br}, to the distribution of bids.
As the distribution of bids is observed, and given suitable continuity
assumptions, this best response function can be inverted.

The value distribution can be equivalently specified by distribution
function $\dist(\cdot)$ or value function $\val(\cdot)$; the bid
distribution can similarly be specified by the bid function
$\bid(\cdot)$.  For rank-based auctions (as considered by this paper)
the allocation rule $\alloc(\cdot)$ in quantile space is known
precisely (i.e. it does not depend on the value function
$\val(\cdot)$).  Assume these functions are monotone, continuously
differentiable, and invertible.

\paragraph{Inference for first-price auctions}
Consider a first-price rank-based auction with a symmetric bid
function $\bid(\quant)$ and allocation rule $\alloc(\quant)$ in BNE.
To invert the bid function we solve for the bid that the agent with
any quantile would make.  Continuity of this bid function implies that
its inverse is well defined.  Applying this inverse to the bid
distribution gives the value distribution.

The utility of an agent with quantile $\quant$ as a function of his bid $z$
is
\begin{align*}
\yestag\label{eq:fp-util}
\util(\quant,z) &= (\val(\quant) - z) \, \alloc(\bid^{-1}(z)).\\ 
\intertext{Differentiating with respect to $z$ we get:} 
\tfrac{d}{dz}\util(\quant,z) &= -\alloc(\bid^{-1}(z)) +
\big(\val(\quant) - z\big) \, \alloc'(\bid^{-1}(z))\,
\tfrac{d}{dz}\bid^{-1}(z).\\ 
\intertext{Here $\alloc'$ is the
  derivative of $\alloc$ with respect to the quantile $q$. Because
  $\bid(\cdot)$ is in BNE, the derivative $\tfrac{d}{dz}\util(\quant,z)$ is $0$ at
  $z=\bid(\quant)$. Rarranging, we obtain:}
  \val(\quant) &= \bid(\quant) + \tfrac{\alloc(\quant)\,\bid'(\quant)}{\alloc'(\quant)}
\end{align*}

\paragraph{Inference for all-pay auctions}
We repeat the calculation above for rank-based all-pay auctions; the starting
equation \eqref{eq:fp-util} is replaced with the analogous equation for all-pay auctions:
\begin{align*}
\yestag\label{eq:ap-util}
\util(\quant,z) &= \val(\quant)\,\alloc(\bid^{-1}(z)) - z.
\intertext{Differentiating with respect to $z$ we obtain:}
\tfrac{d}{dz}\util(\quant,z) &= \val(\quant)\,\alloc'(\bid^{-1}(z)) \,\frac{d}{dz}\bid^{-1}(z) - 1,\\
\intertext{Again the first-order condition of BNE implies that this expression is zero at $z = \bid(\quant)$; therefore,}
  \val(\quant) & = \tfrac{\bid'(\quant)}{\alloc'(\quant)}.
\end{align*}

\paragraph{Known and observed quantities}
Recall that the functions $\alloc(\quant)$ and $\alloc'(\quant)$ are
known precisely: these are determined by the rank-based auction
definition.  The functions $\bid(\quant)$ and $\bid'(\quant)$ are
observed.  The calculations above hold in the limit as the number of
samples from the bid distribution goes to infinity, at which point these
obserations are precise.

\section{Inference for social welfare}
\label{s:welfare}

We now consider the problem of estimating the social welfare of a
rank-based auction using bids from another rank-based all-pay
auction. Consider a rank-based auction with induced position weights
$\wals$. By definition, the expected {\em per-agent} social welfare
obtained by this auction is as below, where $\evalk$ is the expected
value of the $k$th highest value agent, or the $k$th order statistic
of the value distribution.

$$
\sw = \frac 1n \sum_{k=1}^{n} \walk \evalk
$$
 
We note that the value order statistics, $\evalk$, are closely
related to the expected revenues of the multi-unit auctions. The
$k$-unit second-price auction serves the top $k$ agents with
probability $1$, and charges each agent the $k+1$th highest value. Its
expected revenue is therefore $nP_k = k\evalk[k+1]$. We
therefore obtain:

$$
\sw = \walk[1]\frac{\evalk[1]}{n} + \sum_{k=1}^{n-1} \walk[k+1] \, \frac {P_{k}}{k}
$$
 
The methodology developed in the previous sections can be used to
estimate the $P_k$'s in the above expression. The first order
statistic of the values, $\evalk[1]$, cannot be directly estimated in
this manner. Notate the expected value of an agent as
$$
\expval = \expect[\quant]{\val(\quant)} = \frac 1 n \sum_{k=1}^n \evalk
$$
where $\expval$ is the expected value of any one agent. Therefore, we can calculate the social welfare of the position auction with weights $\wals$ as
\begin{eqnarray}
\sw = \walk[1] \expval - \sum_{k=2}^{n} (\walk[1]-\walk) \frac
{\evalk}{n} = \walk[1]\expval - \sum_{k=1}^{n-1} (\walk[1]-\walk[k+1]) \, \frac {P_k}{k}
\label{eq:sw}
\end{eqnarray}

We now argue that $\expval$ can be estimated at a good rate from the
bids of another rank-based all-pay auction. Let $\alloc$ denote the allocation rule of the auction that we run,
and $\bid$ denote the bid distribution in BNE of this auction. Then we
note that
$$
\expval = \expect[\quant]{\val(\quant)} =
\expect[\quant]{\frac{\bid'(\quant)}{\alloc'(\quant)}} = \expect[\quant]{-\Zbar'(\quant)\bid(\quant)}
$$
where $\Zbar(q) = 1/\alloc'(q)$. We might now try to directly apply
Theorem~\ref{th: all pay} to bound
the error in our estimate of $\expval$. This does not work, as
Lemma~\ref{lem:Z-bound-1} fails to hold for $\Zbar$. Instead, we can
follow the argument in the proof of Theorem~\ref{thm:inference-y} and
obtain the following lemma:

\begin{lemma}
\label{lem:inference-expval}
The expected absolute error in estimating the expected value $\expval$
using $N$ samples from the bid distribution for an all-pay position
auction with allocation rule $x$ is bounded as below; Here $n$ is the
number of positions in the position auction.
\begin{align*}
\Err{\expval}
& \le \tfrac{40}{\sqrt{N}} \sqrt{n\log n} \,\,\log \max\left\{\sup\nolimits_{\quant}\alloc'(\quant),
\sup\nolimits_{\quant}\tfrac{1}{\alloc'(\quant)} \right\} \\
& \quad + \tfrac{O(1)}{N} \sup\nolimits_{\quant}\{\alloc'(\quant)\}\,\,\sup\nolimits_{\quant}
\left\{ \tfrac{1}{\alloc'(\quant)} \right\}
\end{align*}
\end{lemma}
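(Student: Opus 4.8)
\begin{proofsketch}
The plan is to run the proof of Theorem~\ref{thm:inference-y} line by line, with the weight function $Z_y(\quant)=(1-\quant)y'(\quant)/\alloc'(\quant)$ replaced everywhere by $\Zbar(\quant)=1/\alloc'(\quant)$; correspondingly the prefactor $\sup_\quant\{y'(\quant)\}$ becomes $1$, and the two suprema in the logarithmic factor become $\sup_\quant\alloc'(\quant)$ and $\sup_\quant 1/\alloc'(\quant)$. As noted above, $\expval=\expect[\quant]{-\Zbar'(\quant)\bid(\quant)}$, so plugging in the empirical bid function gives the estimator $\hat{\expval}=\expect[\quant]{-\Zbar'(\quant)\bwhat(\quant)}$, and writing $\btild(\quant)=\expect{\bwhat(\quant)}$ we split the error as
\begin{align*}
\bigl|\hat{\expval}-\expval\bigr| \;\le\; \Bigl|\expect[\quant]{\Zbar'(\quant)\bigl(\bwhat(\quant)-\btild(\quant)\bigr)}\Bigr| \;+\; \Bigl|\expect[\quant]{\Zbar'(\quant)\bigl(\btild(\quant)-\bid(\quant)\bigr)}\Bigr|,
\end{align*}
exactly as in~\eqref{eq:error-allpay-y-hat}: a deviation term plus a bias term.

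The bias term is handled exactly as in Lemma~\ref{lem:bias-bound}: the pointwise bound $|\btild(\quant)-\bid(\quant)|=O(1/N)\,\sup_\quant\{\alloc'(\quant)\}$ established there depends only on the allocation rule $\alloc$ of the auction we actually run, and integrating it against $\Zbar'$ in place of $Z_y'$ yields the second summand $\tfrac{O(1)}{N}\sup_\quant\{\alloc'(\quant)\}\,\sup_\quant\{1/\alloc'(\quant)\}$ of the claimed bound.

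For the deviation term I would reproduce the crossing-point decomposition from the proof sketch of Theorem~\ref{thm:inference-y}. Let $I=\{i_0,\dots,i_k\}$ be the indices at which $\bwhat$ crosses $\btild$ from below, set $\T_{i,j}=\int_{\quant=i/N}^{\quant=j/N}\Zbar'(\quant)\bigl(\bwhat(\quant)-\btild(\quant)\bigr)\,\text{d}q$, and write $\T_{0,N}=\sum_{\ell}\T_{i_\ell,i_{\ell+1}}$. Lemma~\ref{lem:independent-sums} applies verbatim, since its proof uses only the definition of $I$ and the monotone coupling of the order statistics and not the identity of the weight function; hence, conditioned on $I$ and $\Delta=\sup_\quant|\Ghat(\bid(\quant))-G(\bid(\quant))|$, the summands $\T_{i_\ell,i_{\ell+1}}$ are mutually independent. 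To bound each one I would rerun the argument of Section~\ref{s:inference-k}, splitting the interval via the weight $\bigl(\log(1+\Zbar(\quant))\bigr)^{\alpha}/\Zbar(\quant)$ and controlling $\sup_\quant\bigl|\tfrac{\Zbar(\quant)}{(\log(1+\Zbar(\quant)))^{\alpha}}(\bwhat(\quant)-\bid(\quant))\bigr|$ via the weighted empirical-quantile bound of Lemma~\ref{error bid function} --- note $\Zbar(\quant)=1/\alloc'(\quant)$, so the weight is proportional to $(\alloc'(\quant))^{-1}$, precisely the form that lemma handles --- while the complementary regime where $\alloc'$ is large (so $\Zbar$ is near $0$) is absorbed into $\sup_\quant\{\alloc'(\quant)\}$, the analogue of the $\sup_{\quant:\,y'(\quant)\ge 1}\alloc'(\quant)/y'(\quant)$ term. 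This produces the counterpart of Lemma~\ref{lem:deviation-prob} with $C=\log\max\bigl\{\sup_\quant\alloc'(\quant),\;\sup_\quant 1/\alloc'(\quant)\bigr\}$. Applying the Chernoff--Hoeffding bound to $\T_{0,N}$, using the sub-Gaussian tail of $\Delta$ (so that $\expect{\Delta}=O(1/\sqrt N)$), and integrating out $I$ and $\Delta$ --- the $\sqrt{n\log n}$ emerging exactly as in the proof of Theorem~\ref{thm:inference-y} --- gives $\expect{|\T_{0,N}|}\le\tfrac{40}{\sqrt N}\sqrt{n\log n}\,C$. Adding the two terms gives the stated bound.

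The step I expect to be the main obstacle is the per-interval estimate. One must verify that the failure of the counterpart of Lemma~\ref{lem:Z-bound-1} for $\Zbar$ --- the very reason Theorem~\ref{th: all pay} cannot be invoked directly here --- is repaired by the crossing-point argument in the same way it was for $Z_y$, and in particular that the regime where $\alloc'$ is large contributes only the logarithmic factor $\log\sup_\quant\alloc'(\quant)$, so that the leading constant remains $40$.
\end{proofsketch}
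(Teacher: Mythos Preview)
Your proposal is correct and takes essentially the same approach as the paper: the paper simply states that Lemma~\ref{lem:inference-expval} is obtained by ``follow[ing] the argument in the proof of Theorem~\ref{thm:inference-y}'' with $\Zbar(\quant)=1/\alloc'(\quant)$ in place of $Z_y$, which is precisely what you outline. Your identification of the obstacle (the failure of Lemma~\ref{lem:Z-bound-1} for $\Zbar$) and its resolution via the crossing-point decomposition also mirrors the paper's remark; for the sign-change count on $\Zbar'$ needed in the Chernoff--Hoeffding step, note that $\alloc'(\quant)$ is a polynomial of degree at most $n-2$, so $\Zbar'=-\alloc''/(\alloc')^2$ changes sign at most $n-3\le 2n$ times, and the argument goes through unchanged.
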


As an example application of Lemma~\ref{lem:inference-expval}, we
adapt Corollary~\ref{cor:universal} to bound the error of from
estimating the social welfare of any position auction that is mixed
with the universal B test mechanism of Section~\ref{s:universal}.
Other revenue estimation results can be similarly adapted to estimate
social welfare.   
\begin{theorem}
\label{thm:sw}
Let B be the position auction with position weights $w_1=1$, $w_k=1/2$
for $1< k<n-1$, and $w_n=0$. Then $N$ bids from a mechanism C with
$x_C = (1-\eps)x_A + \eps x_B$, where A is an arbitrary rank-based
auction, can be used to simultaneously estimate the per-agent social
welfare of any rank-based mechanism with absolute error bounded by
\begin{align*}
\frac{40 n\log n(n+\log (1/\eps))}{\sqrt{N}} + \frac{40\sqrt{n\log
      n}\log (n/\eps)}{\sqrt{N}} + O\left(\frac{n}{\epsilon
    N}\right) = O\left( \frac{n\log n(n+\log (1/\eps))}{\sqrt{N}} \right) + O\left(\frac{n}{\epsilon
    N}\right).
\end{align*}
\end{theorem}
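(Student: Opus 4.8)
The plan is to reduce the estimation of the per-agent social welfare $\sw$ of an arbitrary target position auction with weights $\wals$ to the estimation of the multi-unit revenues $\murevk[1],\dots,\murevk[n-1]$ together with the single scalar $\expval=\expect[\quant]{\val(\quant)}$, using the identity~\eqref{eq:sw}, $\sw = \walk[1]\,\expval - \sum_{k=1}^{n-1}(\walk[1]-\walk[k+1])\,\murevk/k$. I would use the corresponding plug-in estimator $\hat{\sw} = \walk[1]\,\hat{\expval} - \sum_{k=1}^{n-1}(\walk[1]-\walk[k+1])\,\emurevk/k$, with $\hat{\expval}$ and the $\emurevk$ the estimators of Section~\ref{s:param-inf} applied to the $N$ bids sampled from C. Since the position weights satisfy $0\le\walk[1]\le 1$ and $0\le\walk[1]-\walk[k+1]\le 1$, the triangle inequality gives
$$\expect[\hat{\bid}]{|\hat{\sw}-\sw|}\ \le\ \Err{\expval}\ +\ \sum_{k=1}^{n-1}\tfrac1k\,\Err{\murevk},$$
and it suffices to bound the two groups of terms; because those bounds will be uniform in $\wals$, the same bid sample estimates the welfare of every rank-based mechanism at once.

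For the revenue terms I would invoke Corollary~\ref{cor:universal} verbatim: its hypothesis is exactly our setting (the universal B test mixed into an arbitrary rank-based A with probability $\eps$), so $\Err{\murevk}\le 40\,n\,(n+\log(1/\eps))/\sqrt N$ for every $k\in\{1,\dots,n-1\}$. Summing against $\sum_{k=1}^{n-1}1/k = O(\log n)$ gives $\sum_{k=1}^{n-1}\tfrac1k\,\Err{\murevk} = O\!\big(n\log n\,(n+\log(1/\eps))/\sqrt N\big)$, which will be the dominant term.

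For the $\expval$ term I would apply Lemma~\ref{lem:inference-expval} with observed mechanism $x = x_C$. Its first term sees only $\log\max\{\sup_{\quant}x_C'(\quant),\ \sup_{\quant}1/x_C'(\quant)\}$; using $\sup_{\quant}x_C'(\quant)\le n$ (Fact~\ref{fact:max-slope}) and $x_C'(\quant)\ge\eps\,x_B'(\quant)\ge\eps(n-1)2^{2-n}$ — valid because the universal B test is the uniform mixture $x_B=\tfrac12\knalloc[1]+\tfrac12\knalloc[n-1]$, whose slope is only of order $2^{-n}$ near $\quant=1/2$ — this logarithm is $O(n+\log(1/\eps))$, so the first term contributes $O\!\big(\sqrt{n\log n}\,(n+\log(1/\eps))/\sqrt N\big)$, again dominated by the revenue terms. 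The bias (second) term is the delicate one: the generic bound $\tfrac{O(1)}N\sup_{\quant}x_C'(\quant)\cdot\sup_{\quant}1/x_C'(\quant)$ is exponential in $n$ here, since $x_C'$ can be exponentially small near $\quant=1/2$. To get the stated $O(n/(\eps N))$ one has to reprove the bias bound more carefully, using that the pointwise bias of the empirical quantile is $O(\bid'(\quant)/N)=O(\val(\quant)x_C'(\quant)/N)$ with $\val\le 1$: in $\expect[\quant]{\Zbar'(\quant)(\btild(\quant)-\bid(\quant))}$ the factor $\Zbar'(\quant)=-x_C''(\quant)/x_C'(\quant)^2$ then gets multiplied by $O(x_C'(\quant)/N)$, the dangerous $\sup 1/x_C'$ cancels, and what is left is $\tfrac{O(1)}N\int_0^1|x_C''(\quant)|/x_C'(\quant)\,\mathrm{d}\quant$, which is only logarithmic in $\sup_{\quant}x_C'(\quant)/\inf_{\quant}x_C'(\quant)$ and hence $O(n+\log(1/\eps))$.

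Putting the pieces together, the $O\!\big(n\log n\,(n+\log(1/\eps))/\sqrt N\big)$ term dominates the $1/\sqrt N$ contributions and the $O(n/(\eps N))$ bias term survives, which is the claimed bound. I expect the main obstacle to be precisely this bias control for $\hat{\expval}$: Lemma~\ref{lem:inference-expval} as stated is too lossy for the universal B test, where $x_C'$ vanishes exponentially at the midpoint, so one must re-open the bias analysis and exploit the boundedness of values (and the monotonicity structure of the polynomial $x_C'$) to keep the dependence polynomial in $n$ with only a $1/\eps$ factor in the lower-order term. A secondary point requiring care is verifying that C has the exact mixture form that makes all of $\murevk[1],\dots,\murevk[n-1]$ estimable from a single bid sample, which is supplied by Corollary~\ref{cor:universal} (ultimately by Lemma~\ref{lem:univ}, which needs one mixed-in $k_1$-unit auction with $k_1\le k$ and one $k_2$-unit auction with $k_2\ge k$).
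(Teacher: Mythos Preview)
Your approach is exactly the paper's: decompose via equation~\eqref{eq:sw}, bound the $\murevk$'s by Corollary~\ref{cor:universal} and pick up the extra $\log n$ from $\sum_{k=1}^{n-1}1/k$, and bound $\expval$ by Lemma~\ref{lem:inference-expval}. The paper's proof is a one-sentence invocation of precisely these three ingredients.

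Your observation about the bias term is a genuine point the paper glosses over: with the universal B test $x_B=\tfrac12\knalloc[1]+\tfrac12\knalloc[n-1]$ and arbitrary A, the quantity $\sup_\quant 1/x_C'(\quant)$ can be of order $2^{n}/\eps$, so the second term of Lemma~\ref{lem:inference-expval} as stated does not give $O(n/(\eps N))$. Your proposed refinement (use the pointwise bound $|\btild(\quant)-\bid(\quant)|=O(\bid'(\quant)/N)=O(x_C'(\quant)/N)$ so that the $1/x_C'$ in $\Zbar'$ cancels, leaving $\tfrac{O(1)}{N}\int |x_C''|/x_C'$, which is the total variation of $\log x_C'$ and hence only polynomial in $n$ and logarithmic in $1/\eps$) is a reasonable way to close this gap; note, however, that $x_C''$ can change sign up to $n-3$ times, so the resulting bound is $O(n(n+\log(1/\eps))/N)$ rather than the $O(n+\log(1/\eps))$ you wrote --- still harmlessly absorbed into the dominant $1/\sqrt N$ term.
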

The theorem follows by combining Lemma~\ref{lem:inference-expval}
with equation~\eqref{eq:sw} and Corollary~\ref{cor:universal}.  In the
statement of the theorem the error term corresponding to the $P_k$'s
has an extra factor of $\log n$ (relative to the statement of
Corollary~\ref{cor:universal}) because the total weight of the
multipliers for the terms in equation~\eqref{eq:sw} can be as large
as $\log n$.

\section{Inference methodology and error bounds for first-price auctions}
\label{s:fp-inf}

While most of our analysis so far has focused on the case of all-pay
auctions, our methodology and results extend as-is to first-price
auctions as well. Here we sketch the differences between the two cases.

Recall that in a first-price auction, we can obtain the value
distribution from the bid distribution as follows: $\val(\quant) =
\bid(\quant) +
\alloc(\quant)\bid'(\quant)/\alloc'(\quant)$. Substituting this into
the expression for $P_y$ we get:
$$
P_y= \expect[\quant]{(1-\quant)y'(\quant) \bid(\quant) +
  \frac{(1-\quant)y'(\quant)
    \alloc(\quant)\bid'(\quant)}{\alloc'(\quant)}} =
\expect[\quant]{(1-\quant)y'(\quant) \bid(\quant) + Z_y(\quant) \alloc(\quant)\bid'(\quant)}
$$
where, as before, $Z_y(\quant) =
\frac{(1-\quant)y'(\quant)}{\alloc'(\quant)}$.

Integrating the second expression by parts, we get
\begin{eqnarray*}
\int_0^1 Z_y(\quant) \alloc(\quant)\bid'(\quant)\,d\quant & = &
Z_y(\quant) \alloc(\quant)\bid(\quant)|_0^1 - \int_0^1 (Z'_y(\quant) \alloc(\quant)+Z_y(\quant) \alloc'(\quant))\bid(\quant) \,d\quant\\
& = & - \int_0^1 Z'_y(\quant) \alloc(\quant)\bid(\quant)\,d\quant -  \int_0^1 (1-\quant) y'(\quant)\bid(\quant)\,d\quant 
\end{eqnarray*}

When we put this back in the expression for $P_y$ two of the terms
cancel, and we get the following lemma.
\begin{lemma}
  The per-agent revenue of an auction with allocation rule $y$ can be written as
  a linear combination of the bids in a first-pay auction: 
$$
P_y = \expect[\quant]{-\alloc(\quant)Z'_y(\quant)\bid(\quant)}
$$
where $Z_y(\quant)
=(1-\quant)\frac{y'(\quant)}{\alloc'(\quant)}$ and
$\alloc(\quant)$ are known precisely.
\end{lemma}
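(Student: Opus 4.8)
The plan is to mirror the all-pay derivation behind Lemma~\ref{weights lemma}, but starting from the first-price inference formula of Lemma~\ref{l:inference}. I would begin with the revenue identity $P_y = \expect[\quant]{y'(\quant)\rev(\quant)} = \expect[\quant]{(1-\quant)\,y'(\quant)\,\val(\quant)}$ and substitute $\val(\quant) = \bid(\quant) + \alloc(\quant)\bid'(\quant)/\alloc'(\quant)$. This writes $P_y$ as the sum of a term $\expect[\quant]{(1-\quant)\,y'(\quant)\,\bid(\quant)}$ that is already a linear functional of the bid function, and a term $\expect[\quant]{Z_y(\quant)\,\alloc(\quant)\,\bid'(\quant)}$ involving the bid \emph{derivative}, where $Z_y(\quant) = (1-\quant)\,y'(\quant)/\alloc'(\quant)$ exactly as in the all-pay case.

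Next, I would eliminate $\bid'$ from the second term by integrating by parts over $[0,1]$, treating $Z_y(\quant)\alloc(\quant)$ as the ``$u$'' factor: $\int_0^1 Z_y(\quant)\alloc(\quant)\bid'(\quant)\,d\quant = \bigl[Z_y(\quant)\alloc(\quant)\bid(\quant)\bigr]_0^1 - \int_0^1\!\bigl(Z'_y(\quant)\alloc(\quant) + Z_y(\quant)\alloc'(\quant)\bigr)\bid(\quant)\,d\quant$. The boundary term vanishes by the same reasoning as in the footnote to Lemma~\ref{weights lemma}: $\bid(0)=0$ at the left endpoint, and the factor $(1-\quant)$ in $Z_y$ forces $Z_y(1)=0$ at the right endpoint. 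Using the identity $Z_y(\quant)\alloc'(\quant) = (1-\quant)y'(\quant)$, the integral reduces to $-\int_0^1 Z'_y(\quant)\alloc(\quant)\bid(\quant)\,d\quant - \int_0^1 (1-\quant)y'(\quant)\bid(\quant)\,d\quant$.

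Finally, substituting this back into the expression for $P_y$, the leftover piece $\int_0^1 (1-\quant)y'(\quant)\bid(\quant)\,d\quant$ produced by integration by parts exactly cancels the first term of the substitution, leaving $P_y = -\expect[\quant]{\alloc(\quant)\,Z'_y(\quant)\,\bid(\quant)}$, which is the claimed identity. Since for a rank-based auction $\alloc$, $\alloc'$, and $y'$ — hence $Z_y$ and $Z'_y$ — are all determined by the auction definitions and do not depend on the value distribution, the right-hand side is a linear functional of the bid function with known coefficients, as asserted.

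There is no genuine obstacle here; the derivation is a routine computation. The only points requiring care are (i) verifying that the boundary term of the integration by parts really vanishes at both endpoints, and (ii) differentiating the product $Z_y\alloc$ correctly so that the two $(1-\quant)y'(\quant)\bid(\quant)$ contributions cancel. The regularity assumptions of Appendix~\ref{app:inference} (monotonicity, continuous differentiability, and invertibility of $\bid$ and $\alloc$) are what justify the integration by parts.
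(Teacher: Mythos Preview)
Your proposal is correct and follows essentially the same route as the paper: substitute the first-price inference formula into $P_y=\expect[\quant]{(1-\quant)y'(\quant)\val(\quant)}$, integrate the $Z_y(\quant)\alloc(\quant)\bid'(\quant)$ term by parts (the boundary term vanishing since $\bid(0)=0$ and $Z_y(1)=0$), and observe that the $(1-\quant)y'(\quant)\bid(\quant)$ terms cancel. There is nothing to add or correct.
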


As in the case of the all-pay auction format, we can write the error
in $P_y$ as:
\begin{eqnarray*}
|\Phat_y-P_y| & = &
\expect[q]{\left|-\alloc(\quant)Z'_y(\quant)(\widehat{\bid}(\quant)-\bid(\quant))\right|}\\
& \le & E\left[\frac{\left(\log(1+Z_y(q))\right)^{\alpha}}{Z_y(q)}|Z'_y(q)|\right]
\sup\limits_q\left|\alloc(\quant)\frac{Z_y(q)}{\left(\log(1+Z_y(q))\right)^{\alpha}}(\hat{b}(q)-b(q))\right|
\end{eqnarray*}
With an appropriate choice of $\alpha$ we obtain the following theorem.
\begin{theorem}\label{th: first price}
  The expected absolute error in estimating the revenue of a position
  auction with allocation rule $y$ using $N$ samples from the bid
  distribution for a first-pay position auction with allocation rule
  $x$ is bounded as below; Here $n$ is the number of positions
  in the two position auctions.
\begin{align*}
\Err{P_y}
& \le \frac{40}{\sqrt{N}} \sqrt{n\log n} \,\,\sup_{\quant}\{y'(\quant)\}\,\,\log \max\left\{\sup_{\quant: y'(\quant)\ge 1}\frac{\alloc'(\quant)}{y'(\quant)},
\sup_{\quant}\frac{y'(\quant)}{\alloc'(\quant)} \right\} \\
& + \frac{O(1)}{N} \sup_{\quant}\{\alloc'(\quant)\}\,\,\sup_{\quant}
\left\{ \frac{y'(\quant)}{\alloc'(\quant)} \right\}.
\end{align*}
When $y$ is the highest-$k$-bids-win allocation rule, the error
improves to:
\begin{align*}
\Err{P_k}
& \le \frac{40}{\sqrt{N}} \,\,\sup_{\quant}\{\kalloc'(\quant)\}\,\,\log \max\left\{\sup_{\quant: \kalloc'(\quant)\ge 1}\frac{\alloc'(\quant)}{\kalloc'(\quant)},
\sup_{\quant}\frac{\kalloc'(\quant)}{\alloc'(\quant)} \right\}.
\end{align*}
\end{theorem}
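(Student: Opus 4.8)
The plan is to reproduce the all-pay arguments of Theorems~\ref{th: all pay} and~\ref{thm:inference-y} essentially verbatim: the only new feature of the first-price case is the extra factor $\alloc(\quant)$ in the integrand. Indeed, the lemma above gives $P_y=\expect[q]{-\alloc(\quant)Z_y'(\quant)\bid(\quant)}$, so $|\Phat_y-P_y|=|\expect[q]{\alloc(\quant)Z_y'(\quant)(\bwhat(\quant)-\bid(\quant))}|$, which differs from the all-pay error expression only by the presence of $\alloc(\quant)$. Since $\alloc(\cdot)$ is an allocation probability, $0\le\alloc(\quant)\le 1$, so inserting it never worsens any estimate; moreover it matches the first-price statistical primitive obtained from Lemma~\ref{error bid function} together with \eqref{eq:fp-inf} and $\val(\quant)\le1$, namely $\expect{\sup_\quant|\sqrt N\,\tfrac{\alloc(\quant)}{\alloc'(\quant)}(\bid(\quant)-\bwhat(\quant))|}\le1$. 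Consequently there is no genuinely new idea here; the work lies in checking that each of the estimates of Sections~\ref{s:inference-k} and~\ref{s:inference-y} remains valid with $\alloc(\quant)$ present. The one delicate point — and where I expect to spend the most care — is the step that rewrites $\bwhat(\quant)-\bid(\quant)$ through the uniform empirical-CDF deviation $\Delta=\sup_\quant|\Ghat(\bid(\quant))-G(\bid(\quant))|$: this substitution produces the bid density $\bid'(\quant)$, and one checks from \eqref{eq:fp-inf} and $\val(\quant)\le1$ that $\alloc(\quant)\bid'(\quant)/\alloc'(\quant)\le1$, so the $\alloc(\quant)$ in the integrand cancels against the $\alloc(\quant)/\alloc'(\quant)$ coming from the density and one is left with exactly the all-pay integrand $|Z_y'(\quant)|\,\alloc'(\quant)\,\Delta$.

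For general $y$ I would follow the proof of Theorem~\ref{thm:inference-y}. First split $|\Phat_y-P_y|$ into the bias $|\expect[q]{\alloc(\quant)Z_y'(\quant)(\btild(\quant)-\bid(\quant))}|$ and the deviation $|\expect[q]{\alloc(\quant)Z_y'(\quant)(\bwhat(\quant)-\btild(\quant))}|$, where $\btild(\quant)=\expect{\bwhat(\quant)}$. The bias term is bounded exactly as in Lemma~\ref{lem:bias-bound} — the pointwise estimate $\btild(\quant)-\bid(\quant)=O(1/N)\sup_\quant\alloc'(\quant)$ is a property of the empirical quantile step function and is insensitive to the payment format — and combined with $\alloc\le1$ it produces the $O(1)/N$ term of the theorem. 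For the deviation term I would introduce the crossing indices $I=\{i_0,\dots,i_k\}$ and the integrals $\Tij=\int_{i/N}^{j/N}\alloc(\quant)Z_y'(\quant)(\bwhat(\quant)-\btild(\quant))\,dq$ exactly as in the proof sketch of Theorem~\ref{thm:inference-y}: the conditional-independence claim of Lemma~\ref{lem:independent-sums} is unaffected by the bounded deterministic weight $\alloc(\quant)$, the per-interval Chernoff--Hoeffding bound of Lemma~\ref{lem:deviation-prob} survives after absorbing $\alloc\le1$ into the constant, and deconditioning on $I$ and $\Delta$ using $\expect{\Delta}=O(1/\sqrt N)$ together with a crude absolute bound on $|\T_{0,N}|$ yields the first term of the theorem.

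For the multi-unit case $y=\kalloc$ I would instead follow the proof of Theorem~\ref{th: all pay}, starting from the pointwise factorization displayed just before that theorem: $|\Phat_k-P_k|\le\expect[q]{\tfrac{(\log(1+Z_k(q)))^{\alpha}}{Z_k(q)}|Z_k'(q)|}\cdot\sup_\quant|\alloc(\quant)\tfrac{Z_k(q)}{(\log(1+Z_k(q)))^{\alpha}}(\bwhat(\quant)-\bid(\quant))|$. The expectation factor is bounded using the structural properties of $Z_k$ for the highest-$k$-bids-win allocation rule (Lemma~\ref{lem:Z-bound-1}), which is where the $\sup_\quant\kalloc'(\quant)$ and the $\log\max\{\cdot,\cdot\}$ term come from; the supremum factor is bounded by the first-price statistical primitive above after absorbing $\alloc(\quant)\le1$; and optimizing over $\alpha$ recovers the constant $40$. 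As in the all-pay comparison between Theorems~\ref{th: all pay} and~\ref{thm:inference-y}, taking $y=\kalloc$ avoids the $\sqrt{n\log n}$ loss that the general-$y$ argument incurs, which is exactly why the multi-unit bound is tighter. Beyond the density-substitution check noted above, I anticipate only routine bookkeeping.
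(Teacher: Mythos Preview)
Your proposal is correct and matches the paper's own approach: the paper likewise notes that the first-price error expression differs from the all-pay one only by the extra factor $\alloc(\quant)$ in the supremum term of the factorization, and then reproduces the proofs of Theorems~\ref{th: all pay} and~\ref{thm:inference-y} verbatim. Your identification of the density-substitution step (where $\alloc(\quant)\bid'(\quant)/\alloc'(\quant)=\val(\quant)-\bid(\quant)\le1$ replaces the all-pay bound $\bid'(\quant)/\alloc'(\quant)=\val(\quant)\le1$) is exactly the right place to focus; just be careful that in your later paragraphs the shorthand ``absorbing $\alloc(\quant)\le1$'' really means this combined density bound rather than the cruder pointwise inequality, since $\bid'(\quant)/\alloc'(\quant)$ alone need not be bounded in the first-price setting.
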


Because the error bounds in Theorem~\ref{th: first price} are
identical to those in Theorems~\ref{th: all pay} and
\ref{thm:inference-y}, Lemma~\ref{lem:univ}, Corollaries~\ref{cor1},
\ref{cor2}, \ref{cor:universal},
and Theorems~\ref{theorem:sign-AB} and \ref{thm:sw} continue to hold when bids are drawn
from a first-price auction.

\section{Missing proofs}
\label{s:app-proofs}

\begin{fact}
\label{fact:max-slope}
The maximum slope of the allocation rule $\alloc$ of any $n$-agent
position auction is bounded by $n$: $\sup_q \alloc'(q)\le n$. The
maximum slope of the allocation rule $\kalloc$ for the $n$-agent
highest-$k$-bids-win auction is bounded by $$\sup_q \kalloc'(q)\in \left[\frac{1}{\sqrt{2\pi}},\frac{1}{\sqrt{\pi}}\right]
\frac{n-1}{\sqrt{\min\{k-1, n-k\}}} = \Theta\left(\frac n{\sqrt{\min\{k,n-k\}}}\right).$$
\end{fact}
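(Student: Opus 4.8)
The plan is to reduce both assertions to elementary facts about $\mathrm{Binomial}(n-2,\cdot)$ probability mass functions. For the first assertion I would first write the allocation rule of the rank-by-bid position auction with weights $w_1\ge\cdots\ge w_n$ as
$$\alloc(\quant)\;=\;\sum_{j=0}^{n-1}\binom{n-1}{j}(1-\quant)^{j}\quant^{n-1-j}\,w_{j+1},$$
which holds because an agent at quantile $\quant$ is placed in position $j+1$ exactly when $j$ of the other $n-1$ agents have a larger quantile. Differentiating term by term and invoking the identities $\binom{n-1}{j}(n-1-j)=(n-1)\binom{n-2}{j}$ and $\binom{n-1}{j}j=(n-1)\binom{n-2}{j-1}$, one gets $\tfrac{d}{d\quant}\bigl[\binom{n-1}{j}(1-\quant)^{j}\quant^{n-1-j}\bigr]=(n-1)\bigl(d_j(\quant)-d_{j-1}(\quant)\bigr)$, where $d_j(\quant)=\binom{n-2}{j}(1-\quant)^{j}\quant^{n-2-j}$ is the $\mathrm{Binomial}(n-2,1-\quant)$ mass function. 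An Abel (telescoping) rearrangement then collapses the sum to $\alloc'(\quant)=(n-1)\sum_{j=0}^{n-2} d_j(\quant)\,(w_{j+1}-w_{j+2})$. Since the weights are non-increasing, each coefficient $w_{j+1}-w_{j+2}$ is nonnegative and $\sum_j(w_{j+1}-w_{j+2})=w_1-w_n\le 1$; as $0\le d_j(\quant)\le 1$, this yields $\alloc'(\quant)\le n-1\le n$.

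For the second assertion I would specialize to the highest-$k$-bids-win weights ($w_j=1$ for $j\le k$ and $w_j=0$ otherwise), so that only the $j=k-1$ term of the sum above is nonzero:
$$\kalloc'(\quant)\;=\;(n-1)\binom{n-2}{k-1}(1-\quant)^{k-1}\quant^{n-1-k}.$$
This is $n-1$ times a single binomial mass term; log-concavity of $(1-\quant)^{k-1}\quant^{n-1-k}$ shows the maximum over $\quant\in[0,1]$ is attained at the mode $\quant^\star=\tfrac{n-1-k}{n-2}$, i.e. $1-\quant^\star=\tfrac{k-1}{n-2}$, whence $\sup_\quant\kalloc'(\quant)=(n-1)\binom{n-2}{k-1}\bigl(\tfrac{k-1}{n-2}\bigr)^{k-1}\bigl(\tfrac{n-1-k}{n-2}\bigr)^{n-1-k}$. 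I would then feed the three factorials in $\binom{n-2}{k-1}$ through the two-sided Stirling bound $\sqrt{2\pi m}\,(m/e)^m\le m!\le\sqrt{2\pi m}\,(m/e)^m e^{1/(12m)}$; the $(m/e)^m$ factors and the explicit powers cancel, leaving $\sup_\quant\kalloc'(\quant)=(n-1)\sqrt{\tfrac{n-2}{2\pi(k-1)(n-1-k)}}\cdot e^{\theta}$ with $-\tfrac{1}{12(k-1)}-\tfrac{1}{12(n-1-k)}\le\theta\le\tfrac{1}{12(n-2)}$. Finally, because $(k-1)+(n-1-k)=n-2$, the product $(k-1)(n-1-k)$ lies in the window $\bigl[\tfrac{n-2}{2},\,n-2\bigr]\cdot\min\{k-1,n-1-k\}$, which sandwiches $\sqrt{\tfrac{n-2}{2\pi(k-1)(n-1-k)}}$ between $\tfrac{1}{\sqrt{2\pi\,\min\{k-1,n-1-k\}}}$ and $\tfrac{1}{\sqrt{\pi\,\min\{k-1,n-1-k\}}}$. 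Since $\min\{k-1,n-1-k\}$ and $\min\{k-1,n-k\}$ differ by at most one, this delivers both the stated interval $\bigl[\tfrac{1}{\sqrt{2\pi}},\tfrac{1}{\sqrt\pi}\bigr]\tfrac{n-1}{\sqrt{\min\{k-1,n-k\}}}$ and, a fortiori, $\sup_\quant\kalloc'(\quant)=\Theta\bigl(n/\sqrt{\min\{k,n-k\}}\bigr)$. The extreme indices $k=1$ and $k=n-1$ (where the displayed formula gives $\sup_\quant\kalloc'(\quant)=n-1$) and $k=n$ (where it is $0$) I would simply check by hand.

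The two differentiation/telescoping computations and the cancellation inside the Stirling estimate are routine. \textbf{The main obstacle} is making the sharp two-sided constants $\tfrac{1}{\sqrt{2\pi}}$ and $\tfrac{1}{\sqrt\pi}$ come out: this requires carrying \emph{both} directions of Stirling's inequality, including the $e^{1/(12m)}$ correction, and being careful in the regime where $\min\{k-1,n-k\}$ is small, where that correction is not negligible and the crude sandwiching of $(k-1)(n-1-k)$ is loosest — so the small-$\min$ cases (together with the boundary values of $k$) likely need to be handled directly rather than through the asymptotic estimate.
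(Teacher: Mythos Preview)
The paper does not provide a proof of this statement: it is stated as a ``Fact'' in the appendix and immediately followed by the proof of Theorem~\ref{th: all pay}, with no argument supplied. So there is nothing to compare your proposal against.

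Your approach is the natural one and is correct in outline. The telescoping identity $\alloc'(\quant)=(n-1)\sum_{j=0}^{n-2}(w_{j+1}-w_{j+2})\,d_j(\quant)$ with $d_j$ the $\mathrm{Binomial}(n-2,1-\quant)$ mass function is right, and immediately gives $\alloc'(\quant)\le n-1$ since $\sum_j(w_{j+1}-w_{j+2})\le 1$ and $d_j\le 1$. Specializing to the $k$-unit weights indeed yields $\kalloc'(\quant)=(n-1)\binom{n-2}{k-1}(1-\quant)^{k-1}\quant^{n-1-k}$, whose maximum is at $\quant^\star=(n-1-k)/(n-2)$, and the two-sided Stirling bound then gives the $\tfrac{1}{\sqrt{2\pi}}$ and $\tfrac{1}{\sqrt\pi}$ constants via the sandwich on $(k-1)(n-1-k)$.

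One caution you already flagged: your Stirling computation naturally produces $\min\{k-1,n-1-k\}$ in the denominator, not the paper's $\min\{k-1,n-k\}$. These agree when $k-1\le n-1-k$ but differ by one otherwise, and in that regime the ratio $\sqrt{\min\{k-1,n-1-k\}/\min\{k-1,n-k\}}$ can be as small as $\sqrt{(n-k-1)/(n-k)}$, which is not contained in $[1,\sqrt 2]$ uniformly---so the exact interval $[\tfrac{1}{\sqrt{2\pi}},\tfrac{1}{\sqrt\pi}]$ as written in the paper is at best an asymptotic statement, and your instinct to treat small $\min$ and the boundary $k$'s by hand is the right one. The $\Theta(n/\sqrt{\min\{k,n-k\}})$ conclusion is unaffected.
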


\begin{proofof}{Theorem~\ref{th: all pay}}
We will make use of the following lemma from \citet{CHN-14}.
\begin{lemma}[\citet{CHN-14} ]
\label{lem:Z-bound-1}
Let $\kalloc$ denote the allocation function of the
$k$-highest-bids-win auction and $\alloc$ be any convex combination
over the allocation functions of the multi-unit auctions. Then the
function $Z_k(\quant)=(1-\quant)
\frac{\kalloc'(\quant)}{\alloc'(\quant)}$ achieves a single local
maximum for $\quant\in [0,1]$.
\end{lemma}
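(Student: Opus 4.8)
The plan is to reparametrize the quantile axis so that $Z_k$ becomes a one-variable rational function with a transparent monotonicity structure. Write the convex combination as $\alloc=\sum_j\lambda_j\,\kalloc[j]$ with $\lambda_j\ge 0$ and $\sum_j\lambda_j=1$, so $\alloc'(\quant)=\sum_j\lambda_j\,\kalloc[j]'(\quant)$. First I would record the closed form for the multi-unit slopes: differentiating the binomial sum $\kalloc[j](\quant)=\sum_{i=0}^{j-1}\tbinom{n-1}{i}\quant^{n-1-i}(1-\quant)^i$ telescopes to
\begin{align*}
\kalloc[j]'(\quant)=(n-1)\tbinom{n-2}{j-1}\,\quant^{\,n-1-j}(1-\quant)^{\,j-1}=\frac{\quant^{\,n-1-j}(1-\quant)^{\,j-1}}{B(n-j,\,j)},
\end{align*}
a scaled $\mathrm{Beta}(n-j,j)$ density (this also recovers the max-slope estimate in Fact~\ref{fact:max-slope}). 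Introducing $t=(1-\quant)/\quant$, a strictly decreasing bijection of $\quant\in(0,1)$ onto $t\in(0,\infty)$, we get $\kalloc[j]'(\quant)/\kalloc'(\quant)=c_{j,k}\,t^{\,j-k}$ with $c_{j,k}=B(n-k,k)/B(n-j,j)>0$.

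Dividing numerator and denominator of $Z_k$ by $\kalloc'(\quant)$ and using $1-\quant=t/(1+t)$, I obtain
\begin{align*}
\frac{1}{Z_k}=\frac{\alloc'(\quant)}{(1-\quant)\,\kalloc'(\quant)}=\frac{1+t}{t}\sum_j\lambda_j c_{j,k}\,t^{\,j-k}=\frac{(1+t)\,P(t)}{t^{\,k+1}},\qquad P(t):=\sum_j\lambda_j c_{j,k}\,t^{\,j},
\end{align*}
where $P$ is a nonzero polynomial with nonnegative coefficients. Because $Z_k>0$ on $(0,1)$ and a strictly monotone reparametrization preserves unimodality, it suffices to show that $1/Z_k$ has a single local minimum as a function of $t\in(0,\infty)$.

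To finish, set $s=\log t$, so that $\log(1/Z_k)=\log\big((1+t)P(t)\big)-(k+1)s$. Evaluated at $t=e^s$, the expression $(1+t)P(t)=\sum_j\lambda_j c_{j,k}\,(t^{\,j}+t^{\,j+1})$ is a positive combination of the exponentials $e^{\,js}$ and $e^{\,(j+1)s}$; hence its logarithm is convex in $s$ (the log of a posynomial is convex in the log-variables), and strictly so since at least two distinct exponents occur. Subtracting the linear term $(k+1)s$ preserves strict convexity, so $1/Z_k$ is strictly convex in $s$, and therefore either is monotone or attains its minimum at a single point. Unwinding the substitutions, $Z_k$ is monotone, or first increases and then decreases, on $[0,1]$ — a single local maximum, interior exactly when $\alloc$ mixes some auction with at most $k$ units and some with at least $k+1$ units, and at an endpoint otherwise. (Equivalently one can avoid convexity: clearing the positive denominator in $\tfrac{d}{dt}\log(1/Z_k)$ yields the polynomial $\psi(t)=\sum_m(m-k-1)(a_m+a_{m-1})\,t^m$ with $a_j=\lambda_j c_{j,k}\ge 0$, whose coefficient sequence changes sign at most once — from non-positive to non-negative at $m=k+1$ — so by Descartes' rule of signs $\psi$, and hence $Z_k'$, has at most one zero on $(0,\infty)$, and the sign pattern shows it is a maximum.)

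The only real work is the algebraic reduction to the posynomial form $1/Z_k=(1+t)P(t)/t^{\,k+1}$; after that, single-peakedness is immediate from either the convexity observation or Descartes' rule. The one place needing mild care is the behavior at $\quant\in\{0,1\}$, where $\kalloc'$ and $\alloc'$ may both vanish; the explicit Beta-density formulas settle this, showing that $Z_k$ extends continuously to the closed interval whenever $\lambda_k>0$, which is the case in all of the paper's applications.
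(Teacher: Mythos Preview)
The paper does not prove this lemma; it is quoted from \citet{CHN-14} and invoked as a black box inside the proof of Theorem~\ref{th: all pay}, so there is no in-paper argument to compare your proposal against.

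That said, your proof is correct and self-contained. The closed form $\kalloc[j]'(q)=(n-1)\binom{n-2}{j-1}q^{\,n-1-j}(1-q)^{\,j-1}$ is right (the telescoping you describe is the standard one), and it immediately gives the key reduction $1/Z_k=(1+t)P(t)/t^{\,k+1}$ with $t=(1-q)/q$ and $P(t)=\sum_j\lambda_j c_{j,k}t^{\,j}$ a polynomial with nonnegative coefficients. From there either of your two finishes is valid: writing $s=\log t$, the log of the posynomial $(1+t)P(t)=\sum_j\lambda_jc_{j,k}(e^{js}+e^{(j+1)s})$ is convex in $s$, and strictly so because at least two distinct exponents always appear (even when only one $\lambda_j$ is nonzero you get both $j$ and $j+1$), so $\log(1/Z_k)$ is strictly convex and has at most one minimum; alternatively, clearing denominators in $(\log(1/Z_k))'=0$ yields $\psi(t)=\sum_m(m-k-1)(a_m+a_{m-1})t^{\,m}$, whose coefficient sequence changes sign at most once (from nonpositive for $m\le k$ to nonnegative for $m\ge k+1$), so Descartes' rule gives at most one positive root. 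Your remark that the maximum is interior precisely when the mixture contains some $j\le k$ and some $j\ge k+1$ is a correct and mildly sharper statement than the lemma as phrased, and it matches exactly how the lemma is used downstream (only the single sign change of $Z_k'$ is needed).
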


Recall that for $\alpha>0$ we can write\footnote{In this entire proof
  the logarithms are natural logarithms.}
$$
|\hat{P}_k-P_k|\leq E\left[\frac{\left(\log(1+Z_k(q))\right)^{\alpha}}{Z_k(q)}|Z'_k(q)|\right]
\sup\limits_q\left|\frac{Z_k(q)}{\left(\log(1+Z_k(q))\right)^{\alpha}}(\hat{b}(q)-b(q))\right|
$$
We start by considering the first term.
Lemma~\ref{lem:Z-bound-1} shows that $Z'_k(\cdot)$ changes sign
only once. Consider the region where the sign of $Z'_k(\cdot)$ is constant
and make the change of variable $t=Z_k(q)$. 
Recall that $Z_k^*=\sup_qZ_k(q)$ and we note that $\inf_qZ_k(q) \geq 0$.
Then we can evaluate the first term as
$$
E\left[\frac{\left(\log(1+Z_k(q))\right)^{\alpha}}{Z_k(q)}|Z'_k(q)|\right] \leq
2 \int^{Z_k^*}_0\frac{(\log\,(1+t))^{\alpha}}{t}\,dt
$$
Note that for any $t>0$, $\log(1+t)\le t$. Thus,
$$
 \int^{\delta}_0\frac{(\log\,(1+t))^{\alpha}}{t}\,dt< \frac{\delta^{\alpha}}{\alpha}
$$
Now split the integral into two pieces as
$$
\int^{Z_k^*}_0\frac{(\log\,(1+t))^{\alpha}}{t}\,dt=\int^{1}_0\frac{(\log\,(1+t))^{\alpha}}{t}\,dt+\int^{Z_k^*}_1\frac{(\log\,(1+t))^{\alpha}}{t}\,dt
$$
We just proved that the first piece is at most $1/\alpha$. Now we
upper bound the second piece and consider the integrand at $t \ge 1$. First, note that 
$$
(\log\,(1+t))^{\alpha}=\left(\log\,t+\log(1+\frac{1}{t})\right)^{\alpha}\le\left(\log\,t+\frac{1}{t}\right)^{\alpha}\le(\log\,t+1)^{\alpha}.
$$
Thus, the integral behaves as
$$
\int^{Z_k^*}_1\frac{(\log\,(1+t))^{\alpha}}{t}\,dt\le\int^{Z_k^*}_1\frac{(\log\,(t)+1)^{\alpha}}{t}\,dt=\frac{1}{1+\alpha}(\log\,Z_k^*+1)^{1+\alpha}.
$$
Thus, we just showed that 
$$
E\left[\frac{\left(\log(1+Z_k(q))\right)^{\alpha}}{Z_k(q)}|Z'_k(q)|\right] \leq \frac{2}{\alpha}+\frac{2}{1+\alpha}(\log\,Z_k^*+1)^{1+\alpha},
$$
which is at most $2(1+e)/\alpha$ for $\alpha<1/\log\,Z_k^*$.

Now consider the term
$$
\sup\limits_q\left|\frac{Z_k(q)}{\left(\log(1+Z_k(q))\right)^{\alpha}}(\hat{b}(q)-b(q))\right|
$$
Note that $\log(1+t)\ge \min\{1,t\}/2$. So the first term can be bounded from above as
$$
\frac{Z_k(q)}{\left(\log(1+Z_k(q))\right)^{\alpha}} \leq 2^{\alpha} \max \left\{
Z_k(q),\,(Z_k(q))^{1-\alpha}
\right\}
$$
The second term behaves as
$$
(\hat{b}(q)-b(q))=-x'(q)v(q)(\hat{G}(b(q))-G(b(q)))
$$
where $G$ and $\hat{G}$ are the real and empirical bid distributions, respectively.
We recall that $\expect{\sup_q\left|\hat{G}(b(q))-G(b(q))\right|} \leq \frac{1}{2\sqrt{N}}$.
Thus
\begin{align*}
\expect{\sup\limits_q\left|\frac{Z_k(q)}{\left(\log(1+Z_k(q))\right)^{\alpha}}(\hat{b}(q)-b(q))\right|}
& \leq 2^{\alpha}\sup_q\left(\max \left\{
x'_k(q),\,(x'_k(q))^{1-\alpha}(x'(q))^{\alpha}
\right\}
\right)\frac{1}{2\sqrt{N}}\\
& \le
2^{\alpha}\sup_q\left(x'_k(q)\right)\left(\underbrace{\max\left(1,\sup_{q:x'_k(q)\ge
      1}\frac{x'(q)}{x'_k(q)}\right)}_{=: \, A}\right)^{\alpha} \frac{1}{2\sqrt{N}}
\end{align*}


Now we combine the two evaluations together and pick
$\alpha=\min\{1/\log A, 1/\log Z_k^*\}$, with $A$ defined as above, to obtain
\begin{align*}
\expect{|\hat{P}_k-P_k|} &\leq 
\frac{2(1+e)}{\alpha}2^{\alpha}A^{\alpha}\frac{1}{2\sqrt{N}} \sup_q\left(x'_k(q)\right)\\
& \leq \frac{20}{\sqrt{\samples}} \,\,
\sup_{\quant}\{\kalloc'(\quant)\}\,\,\max\left\{\log A,\log \sup_{\quant}
\left\{ \frac{\kalloc'(\quant)}{\alloc'(\quant)} \right\}\right\}
\end{align*}
\end{proofof}

\begin{proofof}{Lemma~\ref{lem:bias-bound}}
We can write the function $\tilde{b}(i/N)$ as
\begin{align*}
\tilde{b}\left(\frac{i}{N}\right)& =\frac{N!}{(i-1)!(N-i)!}\int G(t)^{i-1}(1-G(t))^{N-i}g(t) t\,dt\\
&=\frac{N!}{(i-1)!(N-i)!}\int t^{i-1}(1-t)^{N-i}b(t)\,dt
\end{align*}
Note that 
$$\frac{N!}{(i-1)!(N-i)!} t^{i-1}(1-t)^{N-i}$$
is the density of the beta-distribution with parameters $\alpha=i$ and $\beta=N-i+1$. Denote this density $f(t;\alpha,\beta)$. Then we can write
$$  \tilde{b}\left(\frac{i}{N}\right)=\int^1_0b(t)f(t;\alpha,\beta)\,dt.$$
Now let $q \in [i/N,\,(i+1)/N]$, and consider an expansion of $b(t)$ at $q$ such that 
$$ b(t)=b(q)+b'(q)(t-q)+O((t-q)^2)$$
Now we substitute this expansion into the formula for $\tilde b(\cdot)$ above to get
$$  \tilde{b}\left(\frac{i}{N}\right)=b(q)+b'(q)\int^1_0(t-q)f(t;\alpha,\beta)\,dt+O(\int^1_0(t-q)^2f(t;\alpha,\beta)\,dt)$$
The mean of the beta distribution is $\alpha/(\alpha+\beta)$ and the variance is $\alpha\beta/((\alpha+\beta)^2(\alpha+\beta+1))$. This means that 
$$ \tilde{b}\left(\frac{i}{N}\right)-b(q)=b'(q)\left(\frac{i}{N+1}-q\right)+O\left(\frac{1}{N^2}\right). $$
Thus
$$\sup_{q \in [i/N,(i+1)/N]}\left| \tilde{b}\left(\frac{i}{N}\right)-b(q)\right| \leq \sup_qb'(q)\frac{2}{N}+O\left(\frac{1}{N^2}\right).$$
Therefore, the expectation $\left|\Phat_y-\expect{\Phat_y}\right|$ is at most $O(1)/N
\, \sup_q\{\alloc'(q)\} \, \sup_q Z_y(q)$.
\end{proofof}

\begin{proofof}{Lemma~\ref{lem:independent-sums}}
  Fix $I$ and $\ell$, and note that the function $\btild$ is fixed
  (that is, it does not depend on the empirical bid sample). Then, the
  sum $\T_{i_{\ell}, i_{\ell+1}}$ depends only on the empirical bid
  values $\bwhat(\quant)$ for quantiles in the interval $[i_\ell/N,
  i_{\ell+1}/N)$. By the definition of $I$, we know that the smallest
  $i_\ell$ bids in the sample are all smaller than
  $\btild((i_\ell-1)/N)\le\btild(i_\ell/N)$, and the largest $N-i_{\ell+1}$ bids in the
  sample are all larger than $\btild(i_{\ell+1}/N)\ge \btild((i_{\ell+1}-1)/N)$. On the other
  hand, the empirical bids $\bwhat(\quant)$ for $\quant\in [i_\ell/N,
  i_{\ell+1}/N)$ lie within $[\btild(i_\ell/N),
  \btild((i_{\ell+1}-1)/N)]$. Therefore, conditioned on $i_\ell$ and
  $i_{\ell+1}$, the latter set of empirical bids is independent of the
  former set of empirical bids.
\end{proofof}

\begin{proofof}{Lemma~\ref{lem:deviation-prob} and Theorem~\ref{thm:inference-y}}
  We will use Chernoff-Hoeffding bounds to bound the expectation of $\T_{0,N}$ over the bid sample, conditioned on $I$ and $\Delta = \sup_{\quant\in [0,1]}|\hat{G}(\bid(\quant))-G(\bid(\quant))|$. We first note that $\T_{0,N}$ has mean zero because for any integer $i\in [0,N]$, $\expect[\text{samples}]{\bwhat(i/N)} = \btild(i/N)$.

Next we note that the $\Tij$'s are bounded random variables. Specifically, let $Q$ be an interval of quantiles over which the difference $\bwhat(\quant)-\btild(\quant)$ does not change sign. Then, following the proof of Theorem~\ref{th: all pay}, we can bound 
\begin{align*}
|\T_Q| & = \left|\int_Q Z'_y(\quant)(\bwhat(\quant)-\btild(\quant)) \, \text{d}q \right|\\
& \le 20\Delta \underbrace{\,\,\sup_{\quant}\{y'(\quant)\}\,\,\log \max \left\{ \sup_{\quant: y'(\quant)\ge 1} \frac{\alloc'(\quant)}{y'(\quant)},
\sup_{\quant}\frac{y'(\quant)}{\alloc'(\quant)} \right\}}_{=:\, S}.
\end{align*}
Likewise, over an interval $Q$ where $Z'_y$ does not change sign, we again get $|\T_Q|\le 20 \Delta S$ with $S$ defined as above. Moreover, for an interval $Q$ over which $Z'_y$ changes sign at most $t$ times, we have 
$$ \int_Q |Z'_y(\quant)(\bwhat(\quant)-\btild(\quant))| \, \text{d}q \le t\cdot 20\Delta S.$$ Finally, noting that $Z_y$ is a weighted sum over the $n$ functions $Z_k$ defined for the $k$-unit auctions, and that by Lemma~\ref{lem:Z-bound-1} each $Z_k$ has a unique maximum, we note that $Z'_y$ changes sign at most $2n$ times.

We now apply Chernoff-Hoeffding bounds to bound the probability that the sum $\sum_{\ell=0}^{\ell=k-1}\T_{i_{\ell}, i_{\ell+1}}$ exceeds some constant $a$. With $\tau_\ell$ denoting the upper bound on $|\T_{i_{\ell}, i_{\ell+1}}|$, this probability is at most 
$$\text{exp}\left(-\frac{a^2}{\sum_{\ell} \tau_\ell^2}\right).$$
By our observations above, for all $\ell$, $\tau_\ell\le 40\Delta S$, and $\sum_\ell \tau_\ell \le \int_0^1 |Z'_y(\quant)(\bwhat(\quant)-\btild(\quant))| \, \text{d}q \le 40n\Delta S$. Therefore, $\sum_{\ell} \tau_\ell^2\le n(40\Delta S)^2$. We can now choose $a = \sqrt{n\log n}\, 40 \Delta S$ to make the above probability at most $1/n$.

Putting everything together, we get that conditioned on $I$ and $\Delta$, the expected value of $|\T_{0,N}|$ over the bid sample is at most $a + 1/n \cdot 40n\Delta S = O(1)\sqrt{n\log n}\Delta S$. Since this bound is independent of $I$, the same bound holds when we remove the conditioning on $I$. The theorem now follows by plugging in the expected value of $\Delta$ from Lemma~\ref{error bid function}.
\end{proofof}

\begin{proofof}{Theorem~\ref{theorem:sign-AB}}
We need to bound the probability that the error in estimating
$\hat{P}_{B_1}-\alpha\hat{P}_{B_2}$ is greater than $|P_{B_1}-\alpha\,P_{B_2}|$. This
error can in turn be decomposed into the error in estimating $P_{B_1}$ and
that in estimating $P_{B_2}$. Denote $a=|P_{B_1}-\alpha\,P_{B_2}|>0$. Then,
\begin{align*}
\Pr\left(|(\widehat{P}_{B_1}-\alpha\,\widehat{P}_{B_2})-(P_{B_1}-\alpha\,P_{B_2})|>a\right) \leq
\Pr\left(|\widehat{P}_{B_1}-P_{B_1}|>a/2\right)
+ \Pr\left(|\widehat{P}_{B_2}-P_{B_2}|>a/2\alpha\right).
\end{align*}
Let $\alloc$ denote the allocation rule of the mechanism $C$ that we
are running, and let $\bid$ be the corresponding bid function. Now, recall that for
$$\Delta=\sup_q|\widehat{G}(b(q))-G(b(q))|,\,\,\, \text{and }\,\,
S=\sup_{\quant}\{x_{B_1}'(\quant)\}\log\max\left\{\sup_{\quant: x'_{B_1}(\quant)\ge 1}\frac{\alloc'(\quant)}{x_{B_1}'(\quant)},
\sup_{\quant}\frac{x_{B_1}'(\quant)}{\alloc'(\quant)} \right\},$$ 
Equation~\eqref{eq:error-allpay-y-hat},
Lemma~\ref{lem:bias-bound}, and Lemma~\ref{lem:deviation-prob} together imply that (conditional on $\Delta$)
$$
\Pr\left(|\widehat{P}_{B_1}-P_{B_1}|>a/2\right) \leq
2\exp\left(-\frac{1}{n(40\,\Delta\,S)^2}\left(\frac a2-O\left(\frac n{\epsilon N}\right)\right)^2\right).
$$
Now recall that $S<n\log (n/\eps)$, and
$\Delta<\text{constant}/\sqrt{N}$ with high probability
(Lemma~\ref{error bid function}). As a result, we establish that for
$a=\omega(n/\eps N)$,
$$
\Pr\left(|\widehat{P}_{B_1}-P_{B_1}|>a/2\right) \leq
\exp\left(-O\left(\frac{Na^2}{n^3\log(n/\eps)}\right)\right).
$$ 
Likewise,
$$
\Pr\left(|\widehat{P}_{B_2}-P_{B_2}|>a/2\alpha\right) \leq
\exp\left(-O\left(\frac{Na^2}{\alpha^2n^3\log(n/\eps)}\right)\right).
$$
\end{proofof}

\end{document}